\numberwithin{equation}{section}
\renewcommand{\epsilon}{\varepsilon}
\renewcommand{\epsilon}{\varepsilon}
\renewcommand{\d}{\mathrm{d}}
\renewcommand{\hat}{\widehat }
\newcommand{\be}{\begin{equation}}
\newcommand{\ee}{\end{equation}}
\newcommand{\C}{\mathbb{C}}
\newcommand{\N}{\mathbb{N}}
\newcommand{\R}{\mathbb{R}}
\newcommand{\T}{\mathbb{T}}
\renewcommand{\H}{\mathbb{H}}
\newcommand{\V}{\mathbb{V}}
\newcommand{\Z}{\mathbb{Z}}
\newcommand{\cA}{{\mathcal A}}
\newcommand{\cC}{{\mathcal C}}
\newcommand{\cE}{{\mathcal E}}
\newcommand{\cF}{{\mathcal F}}
\newcommand{\cG}{{\mathcal G}}
\newcommand{\cK}{{\mathcal K}}
\newcommand{\cM}{{\mathcal M}}
\newcommand{\cS}{{\mathcal S}}
\renewcommand{\det}{\mathop{\mathrm{det}}}
\newtheorem{theorem}{Theorem}[section]
\newtheorem{lemma}[theorem]{Lemma}
\newtheorem{corollary}[theorem]{Corollary}
\newtheorem{proposition}[theorem]{Proposition}
\theoremstyle{remark}
\newtheorem{remark}[theorem]{Remark}
\date{\today}
\begin{document}

\large
\title[Two-boson lattice Hamiltonian]{A two-boson lattice Hamiltonian
with interactions up to next-neighboring sites}


\author{Saidakhmat N.~Lakaev, Alexander K. Motovilov,  Mukhayyo O.~Akhmadova}

\address[Saidakhmat Lakaev]{Samarkand State University, 140104, Samarkand, Uzbekistan}
\email{slakaev@mail.ru}

\address[Alexander K. Motovilov]{Bogoliubov Laboratory of
Theoretical Physics, JINR, Joliot-Cu\-rie 6, 141980 Dubna, Russia,
and Dubna State University, Universitetskaya 19, 141980 Dubna,
Russia} \email{motovilv@theor.jinr.ru}

\address[Mukhayyo Akhmadova]{Samarkand State University, 140104, Samarkand, Uzbekistan}
\email{mukhayyo@mail.ru}

\begin{abstract}
A system of two  identical spinless bosons on the two-dimensional
lattice is considered under the assumption that on-site and first
and second nearest-neighboring site interactions between the bosons
are only nontrivial  and that these interactions are of magnitudes
$\gamma$, $\lambda$, and $\mu$, respectively. A partition of the
$(\gamma,\lambda,\mu)$-space into connected components is
established such that, in each connected component, the two-boson
Schr\"odinger operator corresponding to the zero quasi-momentum of
the center of mass has a definite (fixed) number of eigenvalues,
which are situated below the bottom of the essential (continuous)
spectrum and above its top. Moreover,  for each connected component,
a sharp lower bound is established on the number of isolated
eigenvalues for the two-boson Schr\"odinger operator corresponding
to any admissible nonzero value of the center-of-mass quasimomentum.

\end{abstract}

\keywords{Two-boson lattice Hamiltonian, nearest-neighboring-site
interaction, second-nearest-neighboring-site interaction, spectral
properties}

\maketitle

\section{Introduction}

Lattice models of physical systems are widely used in mathematical
physics. One of such models is the lattice $N$-body Hamiltonian
\cite{Mattis:1986} that may be interpreted as the simplest version
of the corresponding Bose-Hubbard model involving just $N$ particles
of a certain type. The lattice $N$-particle Hamiltonians are of
great interest even in the case of $N\leq 3$ and the corresponding
lattice $N$-particle problems have been intensively studied over the
past several decades \cite{ALzM:2004,ALMM:2006,
ALKh:2012,BdsPL:2017,FICarroll:2002, HMumK:2020, KhLA:2021,
Lakaev:1993, LAbdukhakimov:2020, LBA:2022, LBozorov:2009,
LO'zdemir:2016,LKhL:2012,LMAbduhakimov:2023}. One more reason to
study the lattice $N$-particle Hamiltonians is that, in fact, they
represent a natural discrete approximation of the respective
original continuous $N$-body Schr\"odinger operators
\cite{Faddeev:1986}; setting down the $N$-body problem on a lattice
allows one to study it in the context of the theory of bounded
operators.

The lattice $N$-body Schr\"odinger operators also represent the
simplest model for systems formed by $N$ particles traveling through
periodic structures, e.g. for ultracold atoms injected into optical
crystals \cite{Bloch:2005, Winkler:2006}. The study of ultracold
few-atom systems in optical lattices have became popular in the last
years due to availability of controllable parameters such as
temperature, particle masses, interaction potentials etc. (see e.g.,
\cite{Bloch:2005,JBC:1998, JZ:2005, Lewenstein:2012} and references
therein).


It is well known that the celebrated Efimov effect
\cite{Efimov:1970} was originally attributed to a system of three
particles on the three-dimensional continuous space $\mathbb{R}^3$.
A rigorous mathematical proof of the Efimov effect has been given in
\cite{Ovchinnikov:1979,Sobolev:1993,Tamura:1991,Yafaev:1974}. Later
on it has been shown that the Efimov effect takes place also in
three-particle systems on lattices
\cite{ALzM:2004,ALKh:2012,DzMSh:2011,Lakaev:1993}. Thus, the lattice
three-body problems represent one more active field for Efimov
physics \cite{EfiPhys}.

Lattice Hamiltonians are of an interest in fusion physics, too. For
example, in \cite{Motovilov:2001}, a one-dimensional lattice based
Hamiltonian has been successfully employed to show that an
arrangement of molecules of a certain type in a lattice structure
may essentially enhance their nuclear fusion probability.

It is well known that in the continuous case, the center-of-mass
motion of a system of $N$ particles can be separated, but this is
not the case for the lattice $N$-particle problems. However, the
discrete translation invariance of a lattice Hamiltonian gives an
opportunity to employ the Floquet-Bloch decomposition (see, e.g.,
\cite[Sec. 4]{ALMM:2006}). In particular, for the $N$-particle
lattice Hamiltonian $\mathrm{H}$ acting on a functional Hilbert
space, defined on $\T^{Nd}$,  one can use the following  von Neumann
direct integral decomposition
\begin{equation}\label{entries}
\mathrm{H}\simeq\int\limits_{K\in \T^d} ^\oplus  H(K)\,d K,
\end{equation}
where $\T^d$ stands for the $d$-dimensional torus, $K$ for
quasimomentum of the center of mass, and $H(K)$  is the so called
fiber Hamiltonian. For each $K\in\T^d$, the latter is an operator in
the Hilbert space on $\T^{(N-1)d}$. The decomposition
\eqref{entries} allows one to reduce the problem of the study of the
total Hamiltonian $\mathrm{H}$ to studying the fiber operators
$H(K)$. We notice that the dependence of $H$ on the quasimomentum
$K\in\T^d$ is quite nontrivial (see e.g., \cite{ ALzM:2004,
ALMM:2006, RSimonIII:1982}).

In this work, we restrict ourselves to the case were $N=d=2$ and the
particles are identical spinless bosons. This implies that, for any
$K\in\T^2$, the fiber Hamiltonian $H(K)$ should be an operator in
the Hilbert space $L^{2,\mathrm{e}}(\T^2)$ whose elements are
measurable square integrable functions $f:\T^2\to\C$ which, in
addition, are even, that is $f(-p)=f(p)$, $p\in\T^2$.   More
precisely, we
 study the spectral properties of the fiber
Hamiltonians $H(K),\,K\in\T^2$, that act on $L^{2,\mathrm{e}}(\T^2)$
as
\begin{equation}\label{fiber-intro}
H_{\gamma\lambda\mu}(K):=H_0(K) + V_{\gamma\lambda\mu},
\end{equation}
where $H_0(K)$ is the fiber kinetic-energy operator,
$$
\bigl(H_0(K)f\bigr)(p)=\cE_K(p)f(p),
$$
with
\begin{equation}\label{def:dispersion}
\cE_K(p):= 2 \sum_{i=1}^2\Big(1-\cos\tfrac{K_i}2\,\cos p_i\Big)
\end{equation}
and $V_{\gamma\lambda\mu}$ is the combined interaction potential
that does not depend on $K$. The parameters  $\gamma$, $\lambda$ and
$\mu$, called coupling constants, are nothing but factors
determining interactions between the particles which are located on
same site, on nearest neighboring sites and on next-to-nearest
neighboring sites of the lattice, respectively.

As a next step, we realize that, in the case of the potential
$V_{\gamma\lambda\mu}$ employed, the bosonic space
$L^{2,\mathrm{e}}(\T^2)$ itself admits an orthogonal decomposition
into three subspaces:
$$
L^{2,{\rm e}}(\T^2)=L^{2,\mathrm{ees}}(\T^2) \oplus
L^{2,\mathrm{oos}}(\T^2) \oplus L^\mathrm{2,ea}(\T^2),
$$
where
\begin{align}\label{def:ees_space}
L^{2,\mathrm{ees}}&(\T^2)\\
&\nonumber
=\{f\in L^{2,\mathrm{e}}(\T^2): f(p_1,p_2)=f(p_2,p_1)=f(-p_1,p_2), \,\, \forall p_1,p_2\in\T\}\\
L^{2,\mathrm{oos}}&(\T^2)\label{def:oos_space} \\
\nonumber
&=\{f\in L^{2,\mathrm{e}}(\T^2):f(p_1,p_2)=f(p_2,p_1)=-f(-p_1,p_2), \,\, \forall p_1,p_2\in\T \}\\
\label{def:ea_space} L^{2,\mathrm{ea}}&(\T^2)\\
\nonumber &=\{{f}\in L^\mathrm{2,e}(\T^2): {f}(p_1,p_2)=
-{f}(p_2,p_1),\,\, \forall p_1,p_2\in\T\}.
\end{align}
The important thing is that the orthogonal subspaces
$L^{2,\mathrm{ees}}(\T^2)$,\,\,$L^{2,\mathrm{oos}}(\T^2)$, \, and \,
$L^{\rm 2,ea}(\T^2)$ are reducing for ${H}_{\gamma\lambda\mu}(0)$.
This allows one to study the spectrum of the Hamiltonian
${H}_{\gamma\lambda\mu}(0)$ separately for its parts in
$L^{2,\mathrm{ees}}(\T^2)$,\,\,$L^{2,\mathrm{oos}}(\T^2)$, \, and \,
$L^{\rm 2,ea}(\T^2)$.

In the model \eqref{fiber-intro}, we determine both the exact number
and location of the eigenvalues for the operator
$H_{\gamma\lambda\mu}(0)$, depending on the interaction parameter
$\gamma$, $\lambda$ and $\mu$. We also reveal the mechanisms of
emergence and absorption of these eigenvalues at the thresholds of
the essential (continuous) spectrum of $H_{\gamma\lambda\mu}(0)$, as
the parameters $\gamma$, $\lambda$ and $\mu$ vary. Furthermore, for
any value of the quasimomentum $K\in\T$ we establish sharp lower
bounds for the numbers of isolated eigenvalues of $H_{\gamma \lambda
\mu}(K)$ (counting their multiplicities) that lie below the
essential spectrum and above it.

The paper is organized as follows. In Sec. \ref{sec:hamiltonian} we
introduce the two-boson Hamiltonian in the position and
(quasi)momentum representations. Sec. \ref{sec:main_results}
contains statements of our main results. Some auxiliary technical
lemmas are presented in Sec. \ref{sec:auxiliary}. Proofs of the main
results are given in Sec. \ref{sec:proofs}.

\section{Hamiltonian of a two-boson system on the lattice $\Z^2$
and its basic properties} \label{sec:hamiltonian}

\subsection{The position-space representation}

Let $\Z^2:=\Z\times\Z$ be the two-dimensional integer lattice and
let $\ell^{2,s}(\Z^2\times\Z^2)$ be the Hilbert space of
square-summable symmetric functions on $\Z^2\times\Z^2.$ We consider
a system of two  identical spinless bosons on the two-dimensional
lattice under the assumption that on-site and first and second
nearest-neighboring site interactions between the particles are only
nontrivial  and that these interactions are of magnitudes $\gamma$,
$\lambda$, and $\mu$, respectively. The corresponding Hamiltionian
in the position-space represenatation is introduced on
$\ell^{2,s}(\Z^2\times\Z^2)$ as
\begin{equation}\label{two_total}
\hat \H_{\gamma\lambda\mu}=\hat \H_0 + \hat
\V_{\gamma\lambda\mu},\qquad \gamma,\lambda,\mu\in\R.
\end{equation}
Here, $\hat \H_0$ stands for the Hamiltonian of the pair of
identical non-interacting bosons:
\begin{equation*}
(\hat \H_0 \hat f)(x_1,x_2)= \sum_{y\in\Z^2} \hat \epsilon(x_1-y)
\hat f(y,x_2) + \sum\limits_{y\in\Z^2} \hat \epsilon(x_2-y) \hat
f(x_1,y),\quad x_1,x_2\in\Z^2,
\end{equation*}
where
$$
\hat \epsilon(s) =
\begin{cases}
2 & \text{if $|s|=0,$}\\
-\frac{1}2 & \text{if $|s|=1,$}\\
0 & \text{if $|s|>1,$}
\end{cases}
$$
and $|s|=|s_1|+|s_2|$ for $s=(s_1,s_2)\in \Z^2.$ The interaction
$\hat \V_{\gamma\lambda\mu}$ is the operator of multiplication by a
function $\hat{v}_{\gamma\lambda\mu}$,
\begin{equation*}\label{interaction}
(\hat \V_{\gamma\lambda\mu} \hat f) (x,y) = \hat
v_{\gamma\lambda\mu}(x-y) \hat f(x,y),
\end{equation*}
where
$$
\hat{v}_{\gamma\lambda\mu}(x) =
\begin{cases}
\gamma & |x| = 0,\\
\frac{\lambda}{2} & |x| = 1,\\
\frac{\mu}{2} & |x| = 2,\\
0 & |x|>2.
\end{cases}
$$

\subsection{The quasimomentum-space representation.}
By $\T^2$ we denote the two-dimensional torus,  $\T^2=(\R /2\pi
\Z)^2\equiv [-\pi,\pi)^2$.  The torus $\T^2$ coincides with the
Pontryagin dual group of $\Z^2$, equipped with the Haar measure $\d
p$. Let $L^{2,s}(\T^2\times\T^2)$ be the Hilbert space of
square-integrable symmetric functions on $\T^2\times\T^2.$ We use
the notation $\cF$ for the standard Fourier transform from
$\ell^2(\Z^2)$ to $L^2(\T^2)$,
$$
(\cF \hat f)(p)=\frac{1}{2\pi} \sum_{x\in\Z^2} \hat f(x) e^{ip\cdot
x},
$$
where $p\cdot x: = p_1x_1+p_2x_2$ for $p=(p_1,p_2)\in\T^2$ and
$x=(x_1,x_2)\in\Z^2.$


The quasimomentum-space version of the Hamiltonian \eqref{two_total}
reads as
$$
\H_{\gamma\lambda\mu}:=(\cF\otimes\cF) \hat
\H_{\gamma\lambda\mu}(\cF\otimes\cF)^*:=\H_0 +
\V_{\gamma\lambda\mu}.
$$
Here the free Hamiltonian  $ \H_0=(\cF \otimes \cF) \hat \H_0
(\cF\otimes \cF)^*$  is the multiplication operator:
$$
(\H_0 f)(p,q) = [\epsilon(p) + \epsilon(q)]f(p,q),
$$
where
$$
\epsilon(p) := \sum\limits_{i=1}^2 \big(1-\cos p_i),\quad
p=(p_1,p_2)\in \T^2,
$$
is the \emph{dispersion relation} of a single boson.
The interaction  $\V_{\gamma\lambda\mu}=(\cF \otimes
\cF)\hat\V_{\gamma\lambda\mu} (\cF\otimes\cF)^*$ is the (partial)
integral operator
$$
(\V_{\gamma\lambda\mu} f)(p,q) = \frac{1}{(2\pi)^2}\int_{\T^2}
v_{\gamma\lambda\mu}(p-u) f(u,p+q-u)\d u,
$$
where
$$
v_{\gamma\lambda\mu}(p)= \gamma+\lambda\sum_{i=1}^2\cos p_i + \mu
\sum_{i=1}^2\cos 2p_i+2\mu\cos p_1\cos p_2,\quad p=(p_1,p_2)\in
\T^2.
$$

\subsection{The Floquet-Bloch decomposition of $\H_{\gamma\lambda\mu}$
and fiber Schr\"odinger operators}\label{subsec:von_neuman}

Since $\hat H_{\gamma\lambda\mu}$ commutes with the representation
of the discrete group $\Z^2$ by shift operators on the lattice, we
can decompose the space $L^{2,s}(\T^2\times\T^2)$ and
$\H_{\gamma\lambda\mu}$ into the von Neumann direct integral as
\begin{equation}\label{hilbertfiber}
L^{2,s}(\T^2\times \T^2)\simeq \int\limits_{K\in \T^2} \oplus
L^{2,e}(\T^2)\,\d K
\end{equation}
and
\begin{equation}\label{fiber}
\H_{\gamma\lambda\mu} \simeq \int\limits_{K\in \T^2} \oplus
H_{\gamma\lambda\mu}(K)\,\d K,
\end{equation}
where $L^{2,e}(\T^2)$ is the Hilbert space of square-integrable even
functions on $\T^2$ (see, e.g., \cite{ALMM:2006}). The fiber
operator $H_{\gamma\lambda\mu}(K),$ $K\in\T^2,$ is a self-adjoint
operator on $L^{2,e}(\T^2)$ of the form
\begin{equation*}
H_{\gamma\lambda\mu}(K) := H_0(K) + V_{\gamma\lambda\mu},
\end{equation*}
where the unperturbed operator $H_0(K)$ is the operator of
multiplication by the function
\begin{equation}\label{multiplication}
\cE_K(p):= 2 \sum_{i=1}^2\Big(1-\cos\tfrac{K_i}2\,\cos p_i\Big).
\end{equation}
and the perturbation  $V_{\gamma\lambda\mu}$ reads as
\begin{align} \label{moment_poten}
V_{\gamma\lambda\mu}f(p) =
&\frac{\gamma}{4\pi^2}\int\limits_{\T^2}f(q)\d q +
\frac{\lambda}{4\pi^2}
\sum\limits_{i=1}^2 \cos p_i \int\limits_{\T^2} \cos q_i f(q) \d q   \\
\nonumber &\,+\frac{\mu}{4\pi^2} \sum\limits_{i=1}^2\cos 2p_i
\int\limits_{\T^2}\cos 2q_if(q)\d q \\
\nonumber &+\frac{\mu}{2\pi^2}\cos p_1 \cos p_2
\int\limits_{\T^2}\cos q_1 \cos q_2 f(q)\d q  \\
&+ \frac{\mu}{2\pi^2}\sin p_1 \sin p_2 \int\limits_{\T^2}\sin q_1
\sin q_2 f(q)\d q. \nonumber
\end{align}
The parameter $K\in\T^2$ is usually called the \emph{two-particle
quasimomentum} and the entry $H_{\gamma\lambda\mu}(K)$ is called the
\emph{fiber Schr\"odinger operator} associated with the two-particle
Hamiltonian $\hat \H_{\gamma\lambda\mu}$.

\subsection{Essential spectrum of the fiber Schr\"odinger operators}
\label{subsec:ess_spec}

Depending on $\gamma,\lambda, \mu \in \R$ the rank of
$V_{\gamma\lambda\mu}$  varies but never exceeds seven. Hence, by
the classical Weyl theorem, for any $K\in\T^2$, the essential
spectrum $\sigma_{\mathrm{ess}}(H_{\gamma\lambda\mu}(K))$  of the
fiber Schr\"odinger operator $H_{\gamma\lambda\mu}(K)$ coincides
with the spectrum of $H_0(K):$
\begin{equation}\label{eq:ess_spec}
\sigma_{\mathrm{ess}}(H_{\gamma\lambda\mu}(K)) = \sigma(H_0(K)) =
[\cE_{\min}(K),\cE_{\max}(K)],
\end{equation}
where
\begin{align}
\label{EKmin} \cE_{\min}(K):= & \min_{p\in  \T ^2}\,\cE_K(p) =
2\sum\limits_{i=1}^2\Big(1-\cos \tfrac{K_{i}}2\Big)\geq \cE_{\min}(0)=0,\\
\label{EKmax} \cE_{\max}(K):= & \max_{p\in  \T ^2}\,\cE_K(p)=
2\sum\limits_{i=1}^2\Big(1+\cos \tfrac{K_{i}}2\Big)\leq
\cE_{\max}(0)=8.
\end{align}

\subsection{Relationship between the discrete spectra of
$H_{\gamma\lambda\mu}(K)$ for $K\neq 0$ and
$H_{\gamma\lambda\mu}(0)$}\label{sec:Rel0K}

For every $n\ge1$ we define
\begin{equation}\label{enK}
e_n(K;\gamma,\lambda,\mu):= \sup\limits_{\phi_1,\ldots,\phi_{n-1}\in
L^{2,\rm e}(\T^2)}\,\,\inf\limits_{{\psi}
\in[\phi_1,\ldots,\phi_{n-1}]^\perp,\,\|{\psi}\|=1}
({H}_{\gamma\lambda\mu}(K){\psi},{\psi})
\end{equation}
and
\begin{equation}\label{EnK}
E_n(K; \gamma,\lambda,\mu):=
\inf\limits_{\phi_1,\ldots,\phi_{n-1}\in L^{2,\rm
e}(\T^2)}\,\,\sup\limits_{{\psi}
\in[\phi_1,\ldots,\phi_{n-1}]^\perp,\,\|{\psi}\|=1}
({H}_{\gamma\lambda\mu}(K){\psi},{\psi}).
\end{equation}
By the minimax principle, $e_n(K;\gamma,\lambda,\mu)\le
\cE_{\min}(K)$ and $E_n(K;\gamma,\lambda,\mu)\ge \cE_{\max}(K).$
From the representation \eqref{moment_poten} it follows that the
rank of $V_{\gamma\lambda\mu}$ does not exceed seven. Then, by
choosing suitable elements $\phi_1$, $\phi_2$, $\phi_3$, $\phi_4$,
$\phi_5$,$\phi_6$  and $\phi_7$ from the range of
$V_{\gamma\lambda\mu}$,  one concludes that
$e_n(K;\gamma,\lambda,\mu) = \cE_{\min}(K)$ and
$E_n(K;\gamma,\lambda,\mu) = \cE_{\max}(K)$ for all $n\ge8.$ If the
numbers $e_n(K;\gamma,\lambda,\mu)<\cE_{\min}(K)$ and/or
$E_n(K;\gamma, \lambda,\mu)>\cE_{\max}(K)$ exist, they represent the
discrete eigenvalues of $H_{\gamma\lambda\mu}(K)$, $K\in\T^2$.

\begin{lemma}\label{lem:monoton_xos_qiymat}
Let $n\ge1$ and $i\in\{1,2\}.$ For each fixed $K_j\in\T,$
$j\in\{1,2\}\setminus\{i\},$  the map
$$
K_i\in\T \mapsto \cE_{\min}((K_1,K_2)) -
e_n((K_1,K_2);\gamma,\lambda,\mu)
$$
is non-increasing in $(-\pi,0]$ and non-decreasing in $[0,\pi]$.
Similarly, for every fixed $K_j\in\T,$ $j\in\{1,2\}\setminus\{i\},$
the map
$$
K_i\in\T \mapsto E_n((K_1,K_2);\gamma,\lambda,\mu) -
\cE_{\max}((K_1,K_2))
$$
is non-increasing in $(-\pi,0]$ and non-decreasing in $[0,\pi]$.
\end{lemma}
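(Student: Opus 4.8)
The plan is to exploit the variational characterizations \eqref{enK} and \eqref{EnK} together with the explicit, coordinate-separated way in which $K_i$ enters the dispersion relation. First I would record the identity
$$
\cE_K(p) - \cE_{\min}(K) = 2\sum_{i=1}^2 \cos\tfrac{K_i}{2}\,(1-\cos p_i),
$$
which follows at once from \eqref{multiplication} and \eqref{EKmin}. Since $V_{\gamma\lambda\mu}$ does not depend on $K$, for every unit vector $\psi\in L^{2,\mathrm{e}}(\T^2)$ one then has
$$
\cE_{\min}(K) - (H_{\gamma\lambda\mu}(K)\psi,\psi) = -2\sum_{i=1}^2 \cos\tfrac{K_i}{2}\,c_i(\psi) - (V_{\gamma\lambda\mu}\psi,\psi),
$$
where $c_i(\psi):=\int_{\T^2}(1-\cos p_i)|\psi(p)|^2\,\d p\ge 0$. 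The decisive point is that, with $K_j$ ($j\ne i$) and $\psi$ held fixed, the whole dependence on $K_i$ is carried by the single term $-2\cos\tfrac{K_i}{2}\,c_i(\psi)$. Because $c_i(\psi)\ge0$ and $K_i\mapsto\cos\tfrac{K_i}{2}$ is non-decreasing on $(-\pi,0]$ and non-increasing on $[0,\pi]$, the map $K_i\mapsto\cE_{\min}(K)-(H_{\gamma\lambda\mu}(K)\psi,\psi)$ is non-increasing on $(-\pi,0]$ and non-decreasing on $[0,\pi]$, for each fixed $\psi$.

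The second step is to transfer this pointwise monotonicity through the minimax. Rewriting \eqref{enK} as
$$
\cE_{\min}(K)-e_n(K;\gamma,\lambda,\mu) = \inf_{\phi_1,\dots,\phi_{n-1}}\ \sup_{\substack{\psi\in[\phi_1,\dots,\phi_{n-1}]^\perp\\ \|\psi\|=1}}\bigl[\cE_{\min}(K)-(H_{\gamma\lambda\mu}(K)\psi,\psi)\bigr],
$$
I would invoke the elementary fact that a pointwise supremum, and likewise a pointwise infimum, of functions sharing a common monotonicity pattern on an interval again has that pattern. Crucially, the index sets over which $\psi$ and the $\phi$'s range here do not depend on $K$ (the Hilbert space is fixed and $H_{\gamma\lambda\mu}(K)$ is a bounded operator on it for every $K$). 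Applying this stability observation first to $\sup_\psi$ and then to $\inf_\phi$ yields the asserted monotonicity of $K_i\mapsto\cE_{\min}(K)-e_n(K;\gamma,\lambda,\mu)$.

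The statement for $E_n$ is entirely parallel. Here I would use
$$
\cE_K(p)-\cE_{\max}(K) = -2\sum_{i=1}^2\cos\tfrac{K_i}{2}\,(1+\cos p_i),
$$
so that, for $\|\psi\|=1$, $(H_{\gamma\lambda\mu}(K)\psi,\psi)-\cE_{\max}(K) = -2\sum_{i=1}^2\cos\tfrac{K_i}{2}\,d_i(\psi)+(V_{\gamma\lambda\mu}\psi,\psi)$ with $d_i(\psi):=\int_{\T^2}(1+\cos p_i)|\psi(p)|^2\,\d p\ge 0$. The coefficient of $\cos\tfrac{K_i}{2}$ is once more non-positive, so the same sign/monotonicity analysis applies verbatim, and passing through the $\inf_\phi\sup_\psi$ in \eqref{EnK} gives the claim for $E_n$.

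I do not anticipate a genuine obstacle: the substance is the clean factorization of the $K_i$-dependence into $\cos\tfrac{K_i}{2}$ times a manifestly non-negative quadratic form, the $K$-independence of $V_{\gamma\lambda\mu}$, and the stability of monotonicity under $\sup$ and $\inf$. The only point demanding a little care is the directional bookkeeping — verifying that the non-negativity of $c_i(\psi)$ and $d_i(\psi)$, combined with the fact that $\cos\tfrac{K_i}{2}$ attains its maximum at $K_i=0$, produces exactly the stated ``minimum at $K_i=0$'' behaviour of both difference functions.
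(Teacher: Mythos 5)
Your proposal is correct and follows essentially the same route as the paper's own proof: both isolate the $K_i$-dependence of $(H_{\gamma\lambda\mu}(K)\psi,\psi)-\cE_{\min}(K)$ (resp. $-\cE_{\max}(K)$) as $\cos\tfrac{K_i}{2}$ times a non-negative quadratic form, use the $K$-independence of $V_{\gamma\lambda\mu}$, and push the resulting pointwise monotonicity through the $\sup\inf$ (resp. $\inf\sup$) in \eqref{enK} and \eqref{EnK}. You merely make explicit the step the paper leaves implicit, namely that monotonicity is stable under suprema and infima over $K$-independent index sets.
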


\begin{proof}
Without loss of generality we assume that $i=1.$ Given ${\psi}\in
L^{2,e}(\T^2)$ consider
$$
(({H}_0(K) - \cE_{\min}(K)){\psi},{\psi})=\int_{\T^2}
\sum\limits_{i=1}^2 \cos\tfrac{K_i}{2}\,\big(1-\cos
q_i\big)|\psi(q)|^2\,\mathrm{d} q, \quad K:=(K_1,K_2).
$$
Clearly, the map $K_1\in\T\mapsto (({H}_0(K) -
\cE_{\min}(K)){\psi},{\psi})$ is non-decreasing in $(-\pi,0]$ and is
non-increasing in $[0,\pi].$ Since ${V}_{\gamma\lambda\mu}$ is
independent of $K,$ by definition of $e_n(K;\gamma,\lambda,\mu)$ the
map $K_1\in\T\mapsto e_n(K;\gamma,\lambda,\mu) - \cE_{\min}(K)$ is
non-decreasing in $(-\pi,0]$ and is non-increasing  in $[0,\pi].$

The case of $K_i\mapsto E_n(K;\gamma,\lambda,\mu) - \cE_{\max}(K)$
is similar.
\end{proof}

The following result represents an extension of \cite[Theorems 1 and
2]{ALMM:2006} and \cite[Theorem 3.1]{LKhKhamidov:2021}. It may be
considered as the first main result of the paper.

\begin{theorem}\label{teo:disc_Kvs0}
Assume that $\gamma, \lambda, \mu \in \mathbb{R}$. If the operator
$H_{\gamma\lambda\mu}(0)$ has $n_-$ eigenvalues below and $n_+$
eigenvalues above the essential spectrum, then for any $K \in \T^2$,
the fiber Hamiltonian $H_{\gamma \lambda \mu}(K)$ has at least $n_-$
eigenvalues below the essential spectrum and at least $n_+$ above
it. In each case the number $n_\pm$ is taken counting multiplicity
of the respective eigenvalues.
\end{theorem}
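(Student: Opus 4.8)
The plan is to translate the eigenvalue counts into statements about the minimax values $e_n(K;\gamma,\lambda,\mu)$ and $E_n(K;\gamma,\lambda,\mu)$ defined in \eqref{enK}--\eqref{EnK}, and then to exploit the coordinatewise monotonicity supplied by Lemma~\ref{lem:monoton_xos_qiymat}. The first thing I would record is the standard reduction: since the essential spectrum of $H_{\gamma\lambda\mu}(K)$ is the interval $[\cE_{\min}(K),\cE_{\max}(K)]$, the number of eigenvalues lying strictly below it (counted with multiplicity) equals the number of indices $n$ with $e_n(K;\gamma,\lambda,\mu)<\cE_{\min}(K)$, and the number lying strictly above it equals the number of $n$ with $E_n(K;\gamma,\lambda,\mu)>\cE_{\max}(K)$. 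This is the minimax principle for a self-adjoint operator whose essential spectrum is an interval.

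Next I would set $g_n(K):=\cE_{\min}(K)-e_n(K;\gamma,\lambda,\mu)$, which is nonnegative by the minimax bound quoted before Lemma~\ref{lem:monoton_xos_qiymat}. By that lemma, for each fixed value of the other coordinate the map $K_i\mapsto g_n(K)$ is non-increasing on $(-\pi,0]$ and non-decreasing on $[0,\pi]$, so it attains its minimum over $K_i$ at $K_i=0$. Applying this in each coordinate separately yields $g_n(K_1,K_2)\ge g_n(0,K_2)\ge g_n(0,0)$, and hence, using $\cE_{\min}(0)=0$,
\begin{equation}\label{eq:key_monoton}
\cE_{\min}(K)-e_n(K;\gamma,\lambda,\mu)\ \ge\ \cE_{\min}(0)-e_n(0;\gamma,\lambda,\mu)=-e_n(0;\gamma,\lambda,\mu)
\end{equation}
for every $K\in\T^2$.

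Now assume $H_{\gamma\lambda\mu}(0)$ has $n_-$ eigenvalues below the essential spectrum. By the reduction above this means $e_n(0;\gamma,\lambda,\mu)<\cE_{\min}(0)=0$ for each $n=1,\dots,n_-$, so the right-hand side of \eqref{eq:key_monoton} is strictly positive for these $n$. Therefore $e_n(K;\gamma,\lambda,\mu)<\cE_{\min}(K)$ for all $K\in\T^2$ and all $n\le n_-$, which exhibits at least $n_-$ eigenvalues of $H_{\gamma\lambda\mu}(K)$ below its essential spectrum, counted with multiplicity. The argument above the spectrum is entirely parallel: I would set $G_n(K):=E_n(K;\gamma,\lambda,\mu)-\cE_{\max}(K)\ge0$, invoke the second half of Lemma~\ref{lem:monoton_xos_qiymat} to obtain $G_n(K)\ge G_n(0)=E_n(0;\gamma,\lambda,\mu)-8$ (as $\cE_{\max}(0)=8$), and conclude from $E_n(0;\gamma,\lambda,\mu)>8$ for $n\le n_+$ that $E_n(K;\gamma,\lambda,\mu)>\cE_{\max}(K)$, giving at least $n_+$ eigenvalues above.

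The only point requiring care is the bookkeeping in the first step: one must be certain that the minimax values falling below $\cE_{\min}(K)$ are genuine discrete eigenvalues counted with the correct multiplicity, rather than an artifact of accumulation at the bottom of the essential spectrum. Here this is harmless because $V_{\gamma\lambda\mu}$ has rank at most seven, so $e_n(K;\gamma,\lambda,\mu)=\cE_{\min}(K)$ for all $n\ge 8$ and only finitely many values can lie outside the essential spectrum; I would cite this finite-rank structure to justify the identification. Everything else is a direct application of Lemma~\ref{lem:monoton_xos_qiymat}, so I do not anticipate a substantial obstacle.
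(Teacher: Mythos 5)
Your proposal is correct and follows essentially the same route as the paper: both translate the eigenvalue counts into the minimax values $e_n(K;\gamma,\lambda,\mu)$, $E_n(K;\gamma,\lambda,\mu)$ and apply the coordinatewise monotonicity of Lemma~\ref{lem:monoton_xos_qiymat} to compare $K$ with $K=0$. Your version is, if anything, slightly more explicit about the two-step application of the lemma in each coordinate and about why the sub-threshold minimax values are genuine discrete eigenvalues, but these are points the paper also relies on.
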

\begin{proof}
By  using Lemma \ref{lem:monoton_xos_qiymat},  for any $K\in\T^2$
and $m\ge1$ one derives the following inequalities
\begin{equation}\label{eq:min_eigen0}
0\le \cE_{\min}(0) - e_m(0;\gamma,\lambda,\mu) \le \cE_{\min}(K) -
e_m(K;\gamma,\lambda,\mu),
\end{equation}
and
$$
E_m(K;\gamma,\lambda,\mu) - \cE_{\max}(K) \ge
E_m(0;\gamma,\lambda,\mu) - \cE_{\max}(0) \ge 0.
$$
Assume that, counting multiplicities, the operator
$H_{\gamma\lambda\mu}(0)$ has $n\geq 1$ eigenvalues below the
essential spectrum, i.e. $ e_n(0;\gamma,\lambda,\mu)$ is a discrete
eigenvalue of ${H}_{\gamma\lambda\mu}(0)$  for some
$\gamma,\lambda,\mu\in\R.$ Then, $\cE_{\min}(0) -
e_n(0;\gamma,\lambda,\mu)>0,$ and hence, by \eqref{eq:min_eigen0}
and \eqref{eq:ess_spec} $e_n(K;\gamma,\lambda,\mu)$ is a discrete
eigenvalue of ${H}_{\gamma\lambda\mu}(K)$ for any $K\in\T^2.$ Since,
by definition, $e_1(K;\gamma,\lambda,\mu)\le \ldots \le
e_n(K;\gamma,\lambda,\mu)<\cE_{\min}(K),$ it follows that,  counting
multiplicities, ${H}_{\gamma\lambda\mu}(K)$ has at least $n$
eigenvalues below the essential spectrum. The case of
$E_n(K;\gamma,\lambda,\mu)$ is similar.
\end{proof}

Theorem \ref{teo:disc_Kvs0} implies that the numbers of discrete
eigenvalues of $H_{\gamma\lambda\mu}(0)$ below and above the
essential spectrum provide us with exact lower bounds (over all
$K\in \T^2$) for the corresponding numbers of discrete eigenvalues
of $H_{\gamma\lambda\mu}(K)$, $K\in\T^2$.

\subsection{Reducing subspaces of the fiber Schr\"odinger
operators $H_{\gamma\lambda\mu}(0)$}\label{sec:ReduSub}

In this subsection we show that for any
$\gamma,\lambda,\mu\in\mathbb{R}$ the subspaces
$L^{2,\mathrm{ees}}(\T^2)$,\,\,$L^{2,\mathrm{oos}}(\T^2)$ \, and \,
$L^\mathrm{2,ea}(\T^2)$  introduced in
\eqref{def:ees_space}--\eqref{def:ea_space}, form a complete set of
mutually orthogonal reducing subspaces of $H_{\gamma\lambda\mu}(0)$.

\begin{lemma}\label{lem:subspaces}
The followings statements hold true:
\begin{itemize}
\item[(i)]
The subspaces
$L^{2,\mathrm{ees}}(\T^2)$,\,\,$L^{2,\mathrm{oos}}(\T^2)$ \, and \,
$L^\mathrm{2,ea}(\T^2)$ are mutually orthogonal and $L^{2,\rm
e}(\T^2)=L^{2,\mathrm{ees}}(\T^2) \oplus L^{2,\mathrm{oos}}(\T^2)
\oplus L^\mathrm{2,ea}(\T^2)$;
\item[(ii)] The subspaces
$L^{2,\mathrm{ees}}(\T^2)$,\,\,$L^{2,\mathrm{oos}}(\T^2)$ \, and \,
$L^\mathrm{2,ea}(\T^2)$ are reducing for $H_{\gamma \lambda
\mu}(0)$.
\end{itemize}
\end{lemma}

\begin{proof}
\rm{(i)} Let $f\in L^{2,\rm e}(\T^2)$ and
\begin{align*}
&f_{\mathrm{ees}}(p_1,p_2)=\frac{f(p_1,p_2)+f(p_2,p_1)+f(p_1,-p_2)+f(-p_2,p_1)}{4},\\
& f_\mathrm{oos}(p_1,p_2)=\frac{f(p_1,p_2)+f(p_2,p_1)-f(p_1,-p_2)-f(-p_2,p_1)}{4},\\
&f_\mathrm{ea}(p_1,p_2)=\frac{f(p_1,p_2)-f(p_2,p_1)}{2}.
\end{align*}
Then  $f=f_{\mathrm{ees}}+f_\mathrm{oos}+f_\mathrm{ea}$ and
$f_{\mathrm{ees}}\in L^{2,\mathrm{\mathrm{ees}}}(\T^2),
f_\mathrm{oos}\in L^{2,\mathrm{oos}}(\T^2), f_\mathrm{ea}\in
L^{2,\mathrm{ea}}(\T^2)$.

On the other hand, for every  $h\in L^{2,\mathrm{ea}}(\T^2)$ the
followings equalities hold:
\begin{align*}
\int\limits_{\T^2} h(x,y)dxdy=&\int\limits_{\T}
\biggl(\int\limits_{\T} h(x,y)dx\biggr)dy=\int\limits_{\T}
\biggl(\int\limits_{\T} h(x,y)dy\biggr)dx\\
&= \int\limits_{\T} \biggl(\int\limits_{\T}
-h(y,x)dy\biggr)dx=-\int\limits_{\T^2} h(x,y)dxdy.
\end{align*}
Thus, $\int\limits_{\T^2} h(x,y)dxdy=0.$  By definitions
\eqref{def:ees_space}--\eqref{def:ea_space}, for any
$f_{\mathrm{ees}}\in L^{2,\mathrm{ees}}(\T^2)$,\, $f_\mathrm{oos}\in
L^{2,\mathrm{oos}}(\T^2)$ and $f_\mathrm{ea}\in
L^{2,\mathrm{ea}}(\T^2)$ the relations
$f_{\mathrm{ees}}f_\mathrm{ea}\in L^{2,\mathrm{ea}}(\T^2)$ and
$f_\mathrm{oos}f_\mathrm{ea}\in L^{2,\mathrm{ea}}(\T^2)$ hold true,
which implies that $f_{\mathrm{ees}}\perp f_\mathrm{ea}$ and $
f_\mathrm{oos}\perp f_\mathrm{ea}.$ Similarly, if $ h\in
L^{2,\mathrm{oos}}(\T^2)$ then
$$ \int\limits_{\T^2} h(x,y)dxdy=\int\limits_{\T} \biggl(\int\limits_{\T}
h(y,x)dy\biggr)dx= \int\limits_{\T} 0\, dx=0.
$$
Thus, $f_\mathrm{oos}f_{\mathrm{ees}}\in L^{2,\mathrm{oos}}(\T^2)$
and hence $f_\mathrm{oos}\perp f_{\mathrm{ees}}.$

\rm{(ii)} Recall that $H_0(0)$ is the multiplication operator by the
function $\cE_0(p)=2\epsilon(p)\in L^{2,\mathrm{ees}}(\T^2)$. This
yields that the invariant subspaces $L^{2,\mathrm{ees}}(\T^2) $,
$L^{2,\mathrm{oos}}(\T^2)$ and $ L^\mathrm{2,ea}(\T^2)$ are reducing
for $H_0(0)$.

Taking into account the identity
\begin{align*}
2\cos x \cos y &+ 2\cos z \cos t =(\cos x + \cos z)(\cos y + \cos t)
 + (\cos x - \cos
z)(\cos y - \cos t),\\
&\text{for any \,}x,y,z,t\in\R,
\end{align*}
from \eqref{moment_poten} one derives that
\begin{align}\label{moment_poten_2}
(V_{\gamma\lambda\mu}f)(p) &=\frac{\gamma }{4\pi^2}\int_{\T^2}
f(q)\,\d q  + \frac{\lambda }{8\pi^2}(\cos p_1 + \cos p_2)
\int_{\T^2} (\cos q_1 + \cos q_2)f(q)\,\d q\\ \nonumber &+
\frac{\lambda }{8\pi^2}(\cos p_1 - \cos p_2) \int_{\T^2} (\cos q_1 -
\cos q_2)f(q)\,\d q\\ \nonumber &+\frac{\mu }{8\pi^2}(\cos 2p_1 +
\cos 2p_2) \int_{\T^2} (\cos 2q_1 + \cos 2q_2)f(q)\,\d q\\ \nonumber
&+\frac{\mu }{8\pi^2}(\cos 2p_1 - \cos 2p_2) \int_{\T^2} (\cos 2q_1
- \cos 2q_2)f(q)\,\d q\\ \nonumber & +\frac{\mu }{2\pi^2}\cos
p_1\cos p_2 \int_{\T^2} \cos q_1\cos q_2f(q)\,\d q\\ \nonumber
&+\frac{\mu }{2\pi^2}\sin p_1\sin p_2 \int_{\T^2} \sin q_1\sin
q_2f(q)\,\d q,
\end{align}
which implies that the subspaces $L^{2,\mathrm{ees}}(\T^2) $,
$L^{2,\mathrm{oos}}(\T^2)$ and $ L^{2,\rm ea}(\T^2)$ are reducing
for $V_{\gamma\lambda\mu}$ and, hence, for
$H_{\gamma\lambda\mu}(0)$.
\end{proof}

\section{Main results: The discrete spectrum of the
two-boson fiber Hamiltonians}\label{sec:main_results}

From Lemma \ref{lem:subspaces}\,(ii) it immediately follows that
\begin{equation}\label{separation_K0}
\sigma\bigl(H_{\gamma \lambda \mu }(0)\bigr) =
 \bigcup_{\theta \in \{\mathrm{oos},\mathrm{ea},\mathrm{ees}\} }
\sigma \bigl(
H_{\gamma\lambda\mu}(0)\big|_{L^{2,\theta}(\T^2)}\bigr),
\end{equation}
where $A\big|_\cK$ stands for the restriction of a bounded  operator
$A$ on its reducing subspace~$\cK.$ Therefore, in order to study the
number of eigenvalues of the total operator
${H}_{\gamma\lambda\mu}(0)$ in the space $L^{2,\rm e}(\T^2)$, it
suffices to separately evaluate the numbers of eigenvalues of the
parts of ${H}_{\gamma\lambda\mu}(0)$ in the reducing subspaces
$L^{2,\mathrm{ees}}(\T^2)$, $L^{2,\mathrm{oos}}(\T^2)$, and
$L^{2,\mathrm{ea}}(\T^2)$. Below, we will use the following
notations
\begin{align}
\label{Hoos}
{H}_{\mu}^\mathrm{oos}(0)&:={H}_{\gamma\lambda\mu}(0)\big|_{L^{2,\mathrm{oos}}(\T^2)}
= H_0(0) + {V}_{\mu}^\mathrm{oos},\\
\label{Hea}
{H}_{\lambda\mu}^\mathrm{ea}(0)&:={H}_{\gamma\lambda\mu}(0)\big|_{L^{2,\mathrm{ea}}(\T^2)}
= H_0(0) + {V}_{\lambda\mu}^\mathrm{ea},\\
\label{Hees}
{H}_{\gamma\lambda\mu}^\mathrm{ees}(0)&:={H}_{\gamma\lambda\mu}(0)
\big|_{L^{2,\mathrm{ees}}(\T^2)}=H_0(0) +
{V}_{\gamma\lambda\mu}^\mathrm{ees}
\end{align}
for the respective parts of the Schr\"odinger operator
$H_{\gamma\lambda\mu}(0)$, $\gamma,\lambda,\mu\in\mathbb{R}$, in its
reducing subspaces
$L^{2,\mathrm{ees}}(\T^2)$,\,\,$L^{2,\mathrm{oos}}(\T^2)$ \, and \,
$L^\mathrm{2,ea}(\T^2)$. Here, by ${V}_{\mu}^\mathrm{oos}$,
${V}_{\lambda\mu}^\mathrm{ea}$ and
${V}_{\gamma\lambda\mu}^\mathrm{ees}$ we denote the corresponding
parts of the potential ${V}_{\gamma\lambda\mu}$. From
\eqref{moment_poten_2} it follows that
   \begin{align}
\label{Voos}
 (V_{\mu }^\mathrm{oos}f)(p)&=\frac{\mu }{2\pi^2}\sin p_1\sin p_2
\int_{\T^2} \sin q_1\sin q_2f(q)\,\d q,\\
\label{Vea} (V_{\lambda \mu }^\mathrm{ea}f)(p) &=\frac{\lambda
}{8\pi^2}(\cos
p_1 - \cos p_2) \int_{\T^2} (\cos q_1 - \cos q_2)f(q)\,\d q \\
\nonumber &+ \frac{\mu }{8\pi^2}(\cos 2p_1 - \cos 2p_2) \int_{\T^2}
(\cos 2q_1 - \cos 2q_2)f(q)\,\d q
\end{align}
and
\begin{align}
\label{Vees} (V_{\gamma \lambda \mu }^\mathrm{ees}f)(p) &=
\frac{\gamma }{4\pi^2}\int_{\T^2} f(q)\,\d q  + \frac{\lambda
}{8\pi^2}(\cos p_1 + \cos p_2) \int_{\T^2} (\cos q_1 + \cos
q_2)f(q)\,\d q\\ \nonumber &+\frac{\mu }{8\pi^2}(\cos 2p_1 + \cos
2p_2) \int_{\T^2} (\cos 2q_1 + \cos 2q_2)f(q)\,\d q\\ \nonumber
&+\frac{\mu }{2\pi^2}\cos p_1\cos p_2 \int_{\T^2} \cos q_1\cos
q_2f(q)\,\d q.
\end{align}

\subsection{The discrete spectrum of ${H}_{\mu}^\mathrm{oos}(0)$.}
We begin consideration of the discrete spectrum of the operator
$H_{\gamma\lambda\mu}(0)$ with the study of its simplest part
$H^{\rm oos}_{\mu}(0)$ defined in \eqref{Hoos}, \eqref{Voos}. Let us
introduce the following sets:
\begin{align*}
\cS^{+}_{0}=(-\infty,\tfrac{3\pi}{3\pi-8}], \quad
\cS^{+}_{1}=(\tfrac{3\pi}{3\pi-8},+\infty)
\end{align*}
and
\begin{align*}
\cS^{-}_{0}=[-\tfrac{3\pi}{3\pi-8},+\infty), \quad
\cS^{-}_{1}=(-\infty,-\tfrac{3\pi}{3\pi-8})
\end{align*}
\begin{theorem}\label{teo:eigs_of_H_oos}
Let $\mu\in\mathbb{R}$ and $\alpha=0,1$. If  $\mu\in \cS_\alpha^+$
resp.  $\mu\in \cS_\alpha^-,$  then $H^{\rm oos}_{\mu}(0)$ has
exactly $\alpha$  eigenvalues above resp. below the essential
spectrum.
\end{theorem}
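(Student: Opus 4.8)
The plan is to exploit that $V^{\mathrm{oos}}_\mu$ is a rank-one operator and thereby reduce the eigenvalue problem to a single scalar equation. From \eqref{Voos}, setting $\phi(p):=\sin p_1\sin p_2$, one has $(V^{\mathrm{oos}}_\mu f)(p)=\tfrac{\mu}{2\pi^2}(f,\phi)\,\phi(p)$, where $(\cdot,\cdot)$ is the inner product of $L^{2,\mathrm{e}}(\T^2)$ and $\|\phi\|^2=\pi^2$; note that $\phi\in L^{2,\mathrm{oos}}(\T^2)$. Since $H_0(0)$ acts as multiplication by $\cE_0$ with $\sigma_{\mathrm{ess}}=[0,8]$ on the reducing subspace $L^{2,\mathrm{oos}}(\T^2)$, a real point $z\notin[0,8]$ is an eigenvalue of $H^{\mathrm{oos}}_\mu(0)$ precisely when the rank-one (Fredholm-determinant) condition
\[
\Delta_\mu(z):=1+\mu\,I(z)=0,\qquad I(z):=\frac{1}{2\pi^2}\int_{\T^2}\frac{\sin^2 q_1\,\sin^2 q_2}{\cE_0(q)-z}\,\d q,
\]
holds; the corresponding eigenfunction is then $(H_0(0)-z)^{-1}\phi$, so every such eigenvalue is simple.

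The second step is to analyze the scalar function $I$. Differentiation under the integral sign gives $I'(z)>0$, so $I$ is strictly increasing on each of $(-\infty,0)$ and $(8,+\infty)$. On $(-\infty,0)$ one has $I(z)>0$, with $I(z)\to0^+$ as $z\to-\infty$ and $I(z)\to I(0)$ as $z\to0^-$; this last limit is finite because near $q=0$ the numerator $\sin^2 q_1\sin^2 q_2\sim q_1^2q_2^2$ compensates the vanishing $\cE_0(q)\sim|q|^2$. The reflection $q_i\mapsto\pi-q_i$ leaves $\sin^2 q_1\sin^2 q_2$ invariant while replacing $\cE_0(q)$ by $8-\cE_0(q)$, which yields the key symmetry $I(8)=-I(0)$; hence on $(8,+\infty)$ one has $I(z)<0$ increasing from the finite value $I(8)=-I(0)$ (as $z\to8^+$) to $0$ (as $z\to+\infty$). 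Thus the range of $I$ is $(0,I(0))$ on $(-\infty,0)$ and $(-I(0),0)$ on $(8,+\infty)$.

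The crux is to evaluate the threshold constant $I(0)$. I would integrate first in $q_1$ using the elementary identity $\int_{-\pi}^{\pi}\tfrac{\sin^2 q_1}{a-\cos q_1}\,\d q_1=2\pi\bigl(a-\sqrt{a^2-1}\bigr)$, valid for $a>1$, applied with $a=2-\cos q_2$. Since $a^2-1=(1-\cos q_2)(3-\cos q_2)$, the remaining integral in $q_2$ separates into an elementary polynomial contribution, equal to $4\pi^2$, and a square-root contribution which, after the substitution $q_2=2\theta$ followed by $u=\cos\theta$, reduces to $\int_0^1 u^2(1-u^2)\sqrt{2-u^2}\,\d u=\tfrac16$ and contributes $\tfrac{32\pi}{3}$. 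Collecting factors gives $I(0)=1-\tfrac{8}{3\pi}=\tfrac{3\pi-8}{3\pi}$, whence $1/I(0)=\tfrac{3\pi}{3\pi-8}$.

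It then remains to count the solutions of $\mu\,I(z)=-1$. Below the essential spectrum a root $z<0$ exists (and, by strict monotonicity, is unique and simple) iff $-1/\mu\in(0,I(0))$, i.e. iff $\mu<-\tfrac{3\pi}{3\pi-8}$, which is exactly $\mu\in\cS^{-}_1$; for $\mu\in\cS^{-}_0$ there is none. Above the essential spectrum, by the symmetry $I(8)=-I(0)$, a root $z>8$ exists iff $-1/\mu\in(-I(0),0)$, i.e. iff $\mu>\tfrac{3\pi}{3\pi-8}$, which is $\mu\in\cS^{+}_1$, and none for $\mu\in\cS^{+}_0$. At the endpoints $\mu=\pm\tfrac{3\pi}{3\pi-8}$ the root of $\Delta_\mu$ is pushed to the threshold $z=0$ (resp. $z=8$), which belongs to the essential spectrum and is therefore not an isolated eigenvalue; this is exactly why each boundary point is assigned to $\cS^{\pm}_0$. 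Since each genuine root is simple, this yields the stated counts $\alpha\in\{0,1\}$. The main obstacle is the explicit evaluation of $I(0)$: the rank-one reduction, the monotonicity, and the reflection symmetry are all routine, whereas pinning down the precise constant $\tfrac{3\pi}{3\pi-8}$ requires the exact computation of the two-dimensional integral together with verifying that the limiting values $I(0)$ and $I(8)$ are finite, which is what makes the boundary cases fall as claimed.
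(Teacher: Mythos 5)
Your proposal is correct and follows essentially the same route the paper takes (the paper merely notes that $V^{\mathrm{oos}}_\mu$ has rank one and defers to \cite[Lemma 4.1]{LKhKhamidov:2021}, which is exactly the rank-one Fredholm-determinant reduction you carry out). I checked the key numerics — the identity $\int_{-\pi}^{\pi}\tfrac{\sin^2 q}{a-\cos q}\,\d q=2\pi\bigl(a-\sqrt{a^2-1}\bigr)$, the value $\int_0^1 u^2(1-u^2)\sqrt{2-u^2}\,\d u=\tfrac16$, hence $I(0)=\tfrac{3\pi-8}{3\pi}$ — and the reflection symmetry $I(8-z)=-I(z)$, and they all hold, so the stated counts on $\cS^{\pm}_0,\cS^{\pm}_1$ follow as you argue.
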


\begin{proof}
In the case under consideration, the perturbation operator,
$V_\mu^{\rm oos}$, is given by \eqref{Voos}. Clearly, it is of rank
one. In such a case the proof would literally repeat the proof of an
analogous result in \cite[Lemma 4.1]{LKhKhamidov:2021}.  Thus, we
skip it.
\end{proof}

\subsection{The discrete spectrum of ${H}_{\lambda\mu}^\mathrm{ea}(0)$.}

Recall that the part ${H}_{\lambda\mu}^\mathrm{ea}(0)$ of the fiber
Schr\"o\-din\-ger operator ${H}_{\lambda\mu}(0)$ in the reducing
subspace $L^{2,\mathrm{ea}}(\T^2)$ has the form \eqref{Hea} with the
potential $V_{\lambda \mu }^\mathrm{ea}$ given by \eqref{Vea}.
Spectral properties of this operator have already been studied in
\cite{LHA:2024}. In order to describe the results of \cite{LHA:2024}
concerning the number and location of eigenvalues with
(antisymmetric) eigenfunctions of the Hamiltonian $H^{\rm
ea}_{\lambda\mu}(0)$ for various $\lambda,\mu\in\mathbb{R}$, we
define
\begin{align}\label{root_0}
\mu^*:&=\tfrac{4(4-\pi)}{32-9\pi}
\end{align}
and introduce the following sets (see Fig. \ref{fig:sohalar_umumiy})
and introduce the following sets 
\begin{align}
\nonumber
\cA_{0}^\pm:=
&\Big\{(\lambda,\mu)\in\R^2:\,\,\mp\lambda\geq\frac{4}{\mp\mu +
\mu^*} + 8, \,\, \pm\mu<\mu^* \Big\},\\ \nonumber 
\cA_{1}^\pm:=
&\Big\{(\lambda,\mu)\in\R^2:\,\,\mp\lambda<\frac{4}{\mp\mu+\mu^*}
+ 8, \,\, \pm\mu<\mu^* \Big\} \\
\label{six_sets_p}
&\cup \Big\{(\lambda,\mu^*):\,\, \lambda\in\R
\Big\}\\ \nonumber
&\cup\Big\{(\lambda,\mu)\in\R^2:\,\,\mp\lambda\geq\frac{4}{\mp\mu
+ \mu^*} + 8, \,\,\,\, \pm\mu>\mu^* \Big\},\\ \nonumber 
\cA_{2}^\pm :=
&\Big\{(\lambda,\mu)\in\R^2:\,\,\mp\lambda<\frac{4}{\mp\mu +
\mu^*} + 8, \,\,\,\, \pm\mu>\mu^* \Big\}.
\end{align}

Obviously, both the "$+$" and "$-$" variants of the disjoint sets
$\cA_{j}^{+}$ and $\cA_{j}^{-}$, $j=0,1,2$, provide us with a
partition of the $(\lambda, \mu)$ plane $\R^2$ such that
$\cup_{\alpha = 0}^2 \cA_{j}^{\pm} = \R^2$.
\begin{figure}[h]
\includegraphics[width=0.65\textwidth]{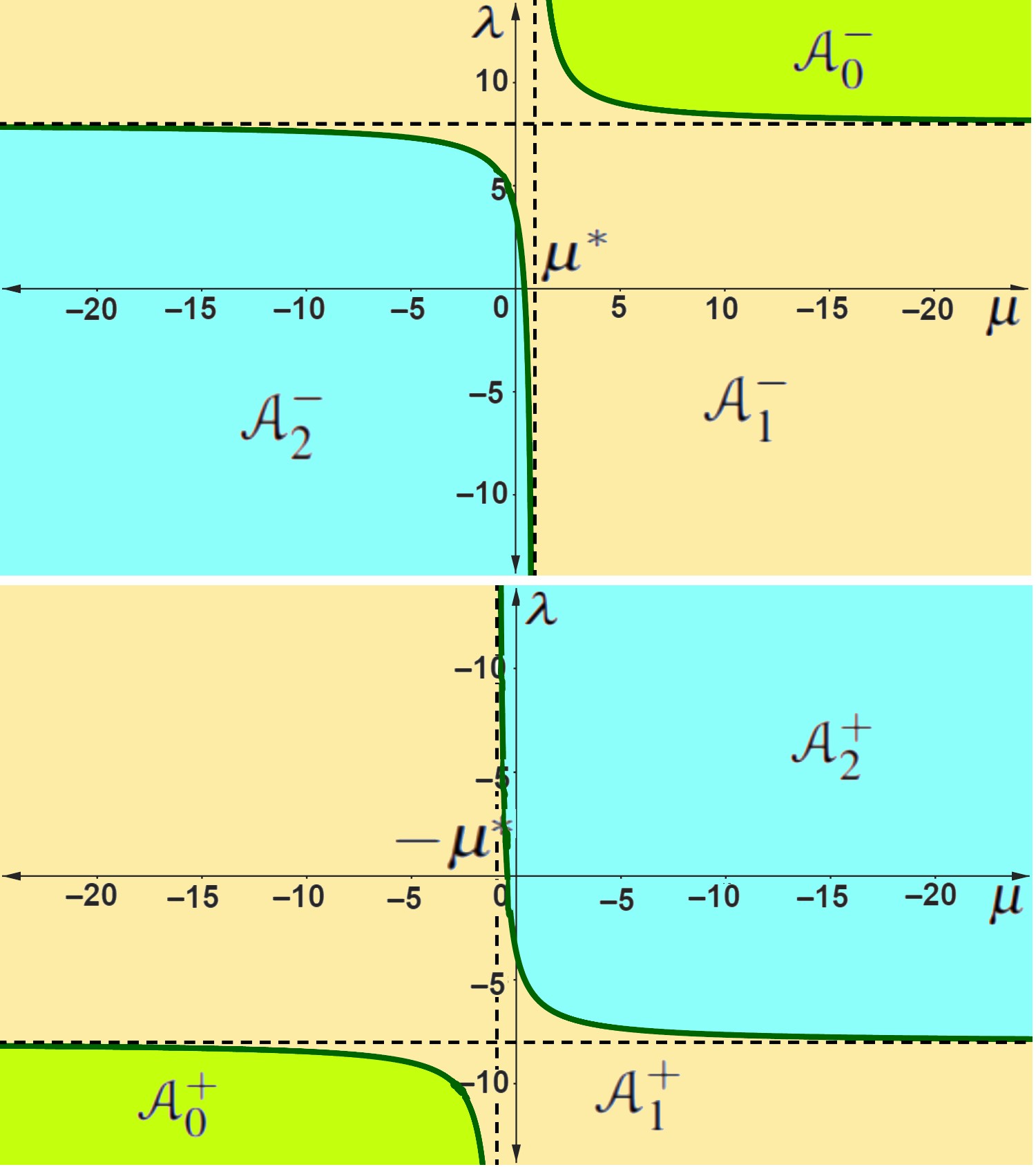}
\caption{A schematic location of the sets $\cA_{j}^\pm$, $j=0,1,2,$
defined in \eqref{six_sets_p}.}
\label{fig:sohalar_umumiy}
\end{figure}
\begin{theorem}[\cite{LHA:2024}, Theorem 3.3]\label{teo:eigens_of_H_ea}
Let $\lambda,\mu\in\mathbb{R}$ and $\beta=0,1,2$. If
$(\lambda,\mu)\in \cA_{\beta}^+$ \, (resp.  $(\lambda,\mu)\in
\cA_{\beta}^-$ \,) then the operator $H^\mathrm{ea}_{\lambda\mu}(0)$
has exactly $\beta$ eigenvalues above (resp. below) its essential
spectrum.
\end{theorem}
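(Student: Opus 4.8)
The plan is to exploit that the perturbation $V^{\mathrm{ea}}_{\lambda\mu}$ in \eqref{Vea} has rank at most two, its range being spanned by $\phi_1(p):=\cos p_1-\cos p_2$ and $\phi_2(p):=\cos 2p_1-\cos 2p_2$ with coupling constants $\tfrac{\lambda}{8\pi^2}$ and $\tfrac{\mu}{8\pi^2}$. Since the essential spectrum is the band $[0,8]$ by \eqref{eq:ess_spec}, the eigenvalues of $H^{\mathrm{ea}}_{\lambda\mu}(0)$ above $8$ (resp.\ below $0$) are exactly the points $z\in(8,\infty)$ (resp.\ $z\in(-\infty,0)$) at which the $2\times2$ perturbation determinant $D(z):=\det\bigl(I+\cC\,\cG(z)\bigr)$ vanishes, where $\cC=\mathrm{diag}\bigl(\tfrac{\lambda}{8\pi^2},\tfrac{\mu}{8\pi^2}\bigr)$ and $\cG(z)$ is the Gram matrix with entries $\cG_{ij}(z)=\int_{\T^2}\phi_i\phi_j\,(\cE_0-z)^{-1}\d q$. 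The first step is thus a standard Birman--Schwinger reduction, turning the eigenvalue count into a count (with multiplicity) of the zeros of the scalar function $D$ on each of the two half-lines outside $[0,8]$.

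The decisive simplification is the pointwise identity $\phi_2=(4-\cE_0)\phi_1$, which follows from $\cos2p=2\cos^2p-1$ together with $\cE_0=4-2(\cos p_1+\cos p_2)$. Writing $J(z):=\int_{\T^2}\phi_1^2\,(\cE_0-z)^{-1}\d q$ and the moments $m_0:=\|\phi_1\|^2$, $m_1:=\langle\phi_1,H_0(0)\phi_1\rangle$, the elementary algebraic identity $\tfrac{4-\cE_0}{\cE_0-z}=\tfrac{4-z}{\cE_0-z}-1$ expresses every entry of $\cG(z)$ as an affine function of $J(z)$ with polynomial-in-$z$ coefficients: one finds $\cG_{12}=(4-z)J-m_0$ and $\cG_{22}=(4-z)^2J-(4-z)m_0$ (the parity-odd moment reducing to $m_1=4m_0$, since $\cos p_1+\cos p_2$ integrates to zero against $\phi_1^2$), together with the Gram relation $\det\cG=m_0\,\cG_{12}$. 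Consequently $D(z)=A(z)+B(z)J(z)$ for explicit polynomials $A$, $B$ of degree at most two, so the entire transcendental content is carried by the single function $J$. I would then record that $\phi_1$ and $\phi_2$ both vanish quadratically at the two extrema $p=0$ and $p=(\pi,\pi)$ of $\cE_0$, so the integrals defining $J$ (hence all of $\cG$) extend continuously to the band edges and $J(0)$, $J(8)$ reduce to elementary trigonometric integrals. These finite threshold values are the source of the critical number $\mu^*$ of \eqref{root_0} and of the hyperbola $\mp\lambda=\tfrac{4}{\mp\mu+\mu^*}+8$: the threshold condition $D(8)=0$ (emergence of an eigenvalue at the top edge) is a Möbius curve of exactly this shape, whose vertical asymptote fixes the critical $\mu$-line and whose horizontal asymptote fixes the additive constant in \eqref{six_sets_p}.

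With $D=A+BJ$ in hand, the count follows from the monotonicity of $J$. On $(8,\infty)$ one has $J<0$, $J'(z)=\int\phi_1^2(\cE_0-z)^{-2}\d q>0$ and $J''<0$, so $J$ increases concavely from the finite value $J(8)$ to $0$, and symmetrically on $(-\infty,0)$. Because $V^{\mathrm{ea}}_{\lambda\mu}$ has rank at most two, $D$ has at most two zeros on each half-line, so $\beta\in\{0,1,2\}$; I would pin the exact number in each region by tracking the sign of $D$ at the threshold and at infinity against the low-degree rational curve $-A/B$ that the graph of $J$ must meet. For the "$-$" case I would avoid a separate computation by invoking the shift $p\mapsto p+(\pi,\pi)$, which sends $\cE_0\mapsto 8-\cE_0$, $\phi_1\mapsto-\phi_1$, $\phi_2\mapsto\phi_2$ and hence conjugates $H^{\mathrm{ea}}_{\lambda\mu}(0)$ to $8-H^{\mathrm{ea}}_{-\lambda,-\mu}(0)$; this identifies the eigenvalues above $8$ at $(\lambda,\mu)$ with those below $0$ at $(-\lambda,-\mu)$, so the "$-$" statement follows from the "$+$" one, consistently with $\cA^-_\beta=-\cA^+_\beta$.

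The hard part will be the exact bookkeeping in the counting step rather than any individual estimate. The matrix $\cC$ is sign-indefinite whenever $\lambda$ and $\mu$ have opposite signs, so the textbook Birman--Schwinger monotonicity (eigenvalue branches of a positive operator crossing the level $1$) is not directly available; one must argue instead with the scalar $D=A+BJ$, bounding how many times the concave increasing graph of $J$ can cross $-A/B$ and checking that the transitions occur exactly on $D(8)=0$ (resp.\ $D(0)=0$) and on the line $\pm\mu=\mu^*$. Particular care is needed at the zeros of $B$, where the condition $D=0$ degenerates to $A=0$, and along the critical line $\mu=\mu^*$, where the two channels coalesce and a direct verification that exactly one eigenvalue persists is required.
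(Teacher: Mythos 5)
The paper does not actually prove this theorem: it is imported from \cite{LHA:2024} (Theorem 3.3), with only the remark that the proof there follows the rank-two template of \cite{LKhKhamidov:2021} and \cite{LMAbduhakimov:2023} --- that is, the same two-stage scheme the present paper uses for the ees channel: constancy of the eigenvalue count on each connected parameter region (a Rouch\'e argument as in Theorem \ref{teo:constant}) followed by an explicit count at one convenient point of each region, with a deflation/interlacing step to pass from the rank-one to the rank-two perturbation. Your route is genuinely different and more computational, and its verifiable ingredients are correct: the identity $\phi_2=(4-\cE_0)\phi_1$ holds because $\cos 2p_1-\cos 2p_2=2(\cos p_1+\cos p_2)(\cos p_1-\cos p_2)$ and $4-\cE_0=2(\cos p_1+\cos p_2)$; the resulting relations $\cG_{12}=(4-z)J-m_0$, $\cG_{22}=(4-z)^2J-(4-z)m_0$ and $\det\cG=m_0\,\cG_{12}$ are exactly the ea-channel analogue of Proposition \ref{Rabcd}; the quadratic vanishing of $\phi_1$ at $p=0$ and $p=(\pi,\pi)$ does make $J$ finite at the band edges, which is why the critical curves here are M\"obius curves in $(\lambda,\mu)$ rather than log-threshold conditions; and the conjugation $p\mapsto p+(\pi,\pi)$ does give $U H^{\mathrm{ea}}_{\lambda\mu}(0)U^{-1}=8I-H^{\mathrm{ea}}_{-\lambda,-\mu}(0)$, reducing the ``$-$'' half to the ``$+$'' half (note only that, as printed, the middle set in \eqref{six_sets_p} reads $\{(\lambda,\mu^*)\}$ for both signs, so the symmetry $\cA^-_\beta=-\cA^+_\beta$ holds only up to an apparent sign typo there).

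The gap is in the counting step, and it is not mere bookkeeping. Tracking the sign of $D$ at the threshold and at infinity (where $D\to 1$) determines the number of zeros only modulo $2$; combined with the rank-two bound this settles the cases where the count is odd, i.e.\ it proves the ``exactly one'' assertions, but it cannot separate $\cA_0^{\pm}$ (no eigenvalue) from $\cA_2^{\pm}$ (two eigenvalues), since in both of those regions the endpoint signs of $D$ agree. Your fallback --- counting intersections of the concave increasing graph of $J$ with the rational curve $-A/B$ --- is the right kind of argument but is nowhere carried out, and it is exactly where the difficulty lives: one must locate the pole of $-A/B$ (the zero of $B$) relative to the half-line, treat the degenerate line $\pm\mu=\mu^*$ on which $\cA_1^{\pm}$ is defined to contain every $\lambda$, and handle the closed inequality ``$\geq$'' in $\cA_0^{\pm}$, where $D$ vanishes at the band edge itself. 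The cleanest way to close this, and the one consistent with the cited references and with the paper's own proof of Theorem \ref{teo:ee,s}, is deflation: first treat the rank-one operator with $\lambda=0$, whose determinant $1+\tfrac{\mu}{8\pi^2}\cG_{22}(z)$ has at most one zero $z_1$; then show that the full $D$ has a definite sign at $z_1$ (a perfect-square computation as in \eqref{ineq6}); conclude by interlacing, and finally propagate the count over each connected component by the Rouch\'e argument of Theorem \ref{teo:constant}. Without some such mechanism the proposal establishes at most the parity of the eigenvalue count in $\cA_0^{\pm}$ and $\cA_2^{\pm}$, not the exact values $0$ and $2$.
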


Notice that the proof of this statement in \cite{LHA:2024} is quite
similar to the proof of the analogous results for the case of
rank-two perturbations established in \cite[Theorem
3.4]{LKhKhamidov:2021} and \cite[Theorems 3.2 and
3.4]{LMAbduhakimov:2023}.

\subsection{The discrete spectrum of ${H}_{\gamma\lambda\mu}^\mathrm{ees}(0)$.}

The part ${H}_{\gamma\lambda\mu}^\mathrm{ees}(0)$ of the fiber
Schr\"o\-din\-ger operator ${H}_{\gamma\lambda\mu}(0)$ in the
reducing subspace $L^{2,\mathrm{ees}}(\T^2)$ is given by \eqref{Hea}
where the perturbation $V_{\gamma\lambda \mu }^\mathrm{ees}$ is of
the form \eqref{Vees}. Since  $V_{\gamma\lambda \mu }^\mathrm{ees}$
depends on the three independent parameters and its rank is four,
the consideration of ${H}_{\gamma\lambda\mu}^\mathrm{ees}(0)$ is
essentially more cumbersome than that of ${H}_{\mu}^\mathrm{oos}(0)$
and $H^{\rm ea}_{\lambda\mu}(0)$. By this reason, we transfer the
lengthiest proofs of the results in this subsection to Section
\ref{sec:proofs}.

We start with introducing the following four real numbers:
\begin{align}\label{roots0}
&\mu_0^{\pm}:= \tfrac{-80+24\pi \pm \sqrt{666\pi^2-4128\pi+6400}}{3(16-5\pi)},\\
\label{roots1} &\mu_1^{\pm}:= \tfrac{-80+21\pi \pm
\sqrt{1161\pi^2-5664\pi+6400}}{12(16-5\pi)}.
\end{align}
Notice that the numerical values of  $ \mu^{\pm}_{0}$ and
$\mu^{\pm}_{1}$ are as follows:
\begin{align*}
\mu_0^-=-7.7170..., \quad  \mu_0^+=-2.7880...,\quad
\mu_1^-=-7.2646...,\quad \mu_1^+=-0,7404...,
\end{align*}
and, hence, these numbers satisfy the relations
\begin{align}\label{ineq:roots}
&\mu^{-}_{0}<\mu_{1}^{-} <\mu^{+}_{0}<\mu^{+}_{1}<0.
\end{align}

Let $Q^\pm(\gamma,\lambda,\mu)$ be defined as
\begin{align*}
Q^\pm(\gamma,\lambda,\mu)=&\mp(\gamma +2\lambda +4\mu) +
\frac{1}{2}\gamma \lambda +\frac{39\pi-104}{3\pi}\gamma \mu +
\frac{153\pi-448}{6\pi}\lambda \mu +\\
&+\frac{15\pi-40}{3\pi}\mu^2 \mp(\frac{18\pi-52}{3\pi}\gamma \lambda
\mu
+\frac{2(900\pi-135\pi^2-1472)}{9\pi^2}\gamma \mu^2  +\\
&\frac{3720\pi-585\pi^2-5888}{9\pi^2}\lambda \mu^2)
+\frac{3720\pi-585\pi^2-5888}{32\pi^2}\gamma \lambda \mu^2.
\end{align*}
Then
\begin{equation}\label{polynomial:Q}
Q^{\pm}(\gamma, \lambda, \mu)=(\gamma \mp 4)\Big(\lambda
Q^{\pm}_{0}(\mu) \mp Q^{\pm}_1(\mu)\Big)-8Q^{\pm}_{0}(\mu),
\end{equation}
where
\begin{align}
\label{polynomial:q0}
 Q_0^{\pm}(\mu):= &\tfrac{16-5\pi}{4\pi}(\mu \pm \mu_{0}^{-})(\mu
\pm \mu_{0}^{+}),\\
\label{polynomial:q1} Q^{\pm}_1(\mu):= &\tfrac{2(16-5\pi)}{\pi}(\mu
\pm \mu_{1}^{-})(\mu \pm \mu_{1}^{+}).
\end{align}

We also introduce the following notations:
\begin{align}\label{def:interval_I}
&I^{-}_{1}=(\mu_{0}^{+},+\infty), \,\,
I^{-}_{2}=(\mu_{0}^{-},\mu_{0}^{+}) , \,\,
I^{-}_{3}=(-\infty,\mu_{0}^{-}); \\ \nonumber
&I^{+}_{1}=(-\infty,-\mu_{0}^{+}), \,\,
I^{+}_{2}=(-\mu_{0}^{+},-\mu_{0}^{-}) , \,\,
I^{+}_{3}=(-\mu_{0}^{-},+\infty).
\end{align}

\begin{lemma}\label{lem:regions}
The set of points $(\lambda,\mu)\in \R^2$ satisfying equation
$\lambda Q^{\pm}_{0}(\mu)\mp Q^{\pm}_1(\mu)= 0$ coincides  with the
graph of the function
\begin{equation}\label{def:function_lambda1}
\lambda^{\pm}(\mu)=\pm
\frac{Q_{1}^{\pm}(\mu)}{Q_{0}^{\pm}(\mu)},\quad
\mu\in\R\setminus\{\mp\mu_0^-,\mp\mu_0^+\}.
\end{equation}
This set consists of three isolated smooth unbounded connected
curves
\begin{align}\label{curves}
&\tau^{\pm}_{j}:=\left\{(\lambda,\mu)\in\R^2:\,\,
\lambda=\pm\frac{Q^{\pm}_{1}(\mu)}{Q^{\pm}_{0}(\mu)},\,\,\mu\in
I^{\pm}_{j}\right\},\,\, j=1,2,3,
\end{align}
and partitions the $(\lambda,\mu)$-plane into four unbounded
connected components
\begin{equation}
\label{def:regions_D}
\begin{array}{cl}
 D^{\pm}_1:=&\{(\lambda,\mu)\in\R^2:\,\,
\lambda>\lambda^{\pm}(\mu),\,\, \mu \in I^{\pm}_{1}\},\\
D^{\pm}_2: =&\{(\lambda,\mu)\in\R^2:\,\,
\lambda<\lambda^{\pm}(\mu),\,\, \mu \in I^{\pm}_{1}\}\cup
\{(\lambda,\mu)\in\R^2:\,\,
\mu=\mp \mu_0^{+} \} \\
&\{(\lambda,\mu)\in\R^2:\,\,
\lambda>\lambda^{\pm}(\mu),\,\, \mu \in I^{\pm}_{2}\},\\
D^{\pm}_3: =&\{(\lambda,\mu)\in\R^2:\,\,
\lambda<\lambda^{\pm}(\mu),\,\, \mu \in I^{\pm}_{2}\}\cup
\{(\lambda,\mu)\in\R^2:\,\,
\mu=\mp \mu_0^{-} \} \\
&\{(\lambda,\mu)\in\R^2:\,\,
\lambda>\lambda^{\pm}(\mu),\,\, \mu \in I^{\pm}_{3}\},\\
D^{\pm}_4:=&\{(\lambda,\mu)\in\R^2:\,\,
\lambda<\lambda^{\pm}(\mu),\,\,\mu \in I_{3}^{\pm}\}
\end{array}
\end{equation}
\end{lemma}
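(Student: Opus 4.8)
The plan is to collapse the whole statement to an elementary study of the graph of the single rational function $\lambda^{\pm}(\mu)=\pm Q_1^{\pm}(\mu)/Q_0^{\pm}(\mu)$, and to use a reflection symmetry so that only one choice of sign must be treated in detail. From the factored forms \eqref{polynomial:q0}--\eqref{polynomial:q1} one reads off $Q_0^{+}(\mu)=Q_0^{-}(-\mu)$ and $Q_1^{+}(\mu)=Q_1^{-}(-\mu)$, hence $\lambda^{+}(\mu)=-\lambda^{-}(-\mu)$. Consequently the involution $(\lambda,\mu)\mapsto(-\lambda,-\mu)$, a homeomorphism of $\R^2$, carries the graph of $\lambda^{-}$ onto the graph of $\lambda^{+}$ and the intervals $I_j^{-}$ of \eqref{def:interval_I} onto $I_j^{+}$; it therefore transports every topological assertion about the ``$-$'' objects to the corresponding ``$+$'' objects, so I shall argue for a single fixed sign.

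First I would solve the defining equation. Setting $F^{\pm}(\lambda,\mu):=\lambda Q_0^{\pm}(\mu)\mp Q_1^{\pm}(\mu)$, for each $\mu$ with $Q_0^{\pm}(\mu)\neq0$ the equation $F^{\pm}=0$ has the unique root $\lambda=\lambda^{\pm}(\mu)$ of \eqref{def:function_lambda1}. At the two zeros $\mu=\mp\mu_0^{-},\,\mp\mu_0^{+}$ of $Q_0^{\pm}$ the interlacing \eqref{ineq:roots} forces $Q_1^{\pm}\neq0$, so there $F^{\pm}$ is a nonzero constant in $\lambda$ and the two vertical lines contain no solution. This identifies the zero set with the graph of $\lambda^{\pm}$ over $\R\setminus\{\mp\mu_0^{-},\mp\mu_0^{+}\}$, whose components are exactly the intervals $I_1^{\pm},I_2^{\pm},I_3^{\pm}$; the restriction of the graph to $I_j^{\pm}$ is the curve $\tau_j^{\pm}$ of \eqref{curves}. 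Each $\tau_j^{\pm}$ is smooth, being the graph of a ratio of polynomials whose denominator does not vanish on $I_j^{\pm}$, and connected as a graph over an interval; since the three curves lie over pairwise disjoint open intervals they are mutually disjoint, hence isolated.

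Next I would pin down the asymptotic shape of each branch, which is the data that governs the component count. Because $\deg Q_1^{\pm}=\deg Q_0^{\pm}=2$ with leading-coefficient ratio $\tfrac{2(16-5\pi)/\pi}{(16-5\pi)/(4\pi)}=8$, one obtains the horizontal asymptote $\lambda^{\pm}(\mu)\to\pm8$ as $\mu\to\pm\infty$, so the branches over the two unbounded intervals are unbounded. At each finite endpoint of an $I_j^{\pm}$ the denominator $Q_0^{\pm}$ vanishes while the numerator does not, producing a vertical asymptote along which $\lambda^{\pm}\to\pm\infty$. The \emph{sign} of each one-sided infinite limit is determined by reading off the signs of $Q_0^{\pm}$ and $Q_1^{\pm}$ on the two intervals abutting the zero, which follow from \eqref{polynomial:q0}--\eqref{polynomial:q1} together with \eqref{ineq:roots}; the upshot is that across each zero of $Q_0^{\pm}$ one adjacent branch tends to $+\infty$ and the other to $-\infty$.

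Finally I would assemble the partition. This sign pattern shows that near each of the lines $\mu=\mp\mu_0^{-}$, $\mu=\mp\mu_0^{+}$ the region lying below one branch and the region lying above the neighbouring branch both fill out the whole $\lambda$-axis as $\mu$ approaches the line, so they join across that line (which, carrying no solution, lies in the complement); this produces exactly the union structure of \eqref{def:regions_D} making $D_2^{\pm}$ and $D_3^{\pm}$ span two adjacent strips. For the outermost regions the adjacent branch instead runs to $+\infty$ (for $D_1^{\pm}$) or to $-\infty$ (for $D_4^{\pm}$), pinching the region off so that it stays confined to a single strip. Thus each of the four sets in \eqref{def:regions_D} is connected and unbounded, they are pairwise separated by the curves $\tau_j^{\pm}$, and their union with $\tau_1^{\pm}\cup\tau_2^{\pm}\cup\tau_3^{\pm}$ exhausts $\R^2$. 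The main obstacle is precisely this sign bookkeeping: getting the direction of every vertical asymptote correct, since a single wrong sign would reshuffle which strips merge and alter the component count, and the interlacing \eqref{ineq:roots} is the indispensable input that fixes these signs.
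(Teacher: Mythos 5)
Your proposal is correct in substance and in fact supplies more than the paper's own proof does. The paper argues exactly as you do for the first half: the system $\lambda Q^{\pm}_{0}(\mu)\mp Q^{\pm}_1(\mu)=0$, $Q^{\pm}_{0}(\mu)=0$ forces $Q^{\pm}_{0}(\mu)=Q^{\pm}_{1}(\mu)=0$, which is impossible by \eqref{ineq:roots}, so the zero set is the graph of $\lambda^{\pm}$ over the three intervals $I^{\pm}_j$. For the second half the paper simply asserts that the three curves ``separate the plane into four unbounded distinct regions (see Figure \ref{Fig_D_minus})''; your asymptote analysis (degree count giving the horizontal asymptote $\pm8$, simple poles of $\lambda^{\pm}$ at the zeros of $Q^{\pm}_0$, and the one-sided signs of the divergence deciding which half-strips merge across each vertical line) is precisely the argument that replaces the appeal to the figure. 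Note only that the claim ``one adjacent branch tends to $+\infty$ and the other to $-\infty$'' is automatic for a simple pole; the interlacing \eqref{ineq:roots} is needed for the finer information of \emph{which} side goes where, and that is the part you defer.

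That deferred computation is the one place where you must be careful, and it interacts with your symmetry reduction. The involution $(\lambda,\mu)\mapsto(-\lambda,-\mu)$ does carry the graph of $\lambda^{-}$ onto the graph of $\lambda^{+}$ and $I^{-}_j$ onto $I^{+}_j$, but it also reverses the inequality $\lambda\gtrless\lambda^{\pm}(\mu)$: the image of $\{\lambda>\lambda^{-}(\mu),\,\mu\in I^{-}_j\}$ is $\{\lambda<\lambda^{+}(\mu),\,\mu\in I^{+}_j\}$. Carrying out the sign bookkeeping for the ``$-$'' case (e.g.\ $Q^{-}_1(\mu_0^{+})<0$ since $\mu_1^{-}<\mu_0^{+}<\mu_1^{+}$, whence $\lambda^{-}(\mu)\to+\infty$ as $\mu\searrow\mu_0^{+}$ and $\to-\infty$ as $\mu\nearrow\mu_0^{+}$) confirms the connectivity pattern of the sets $D^{-}_\alpha$ exactly as written in \eqref{def:regions_D}. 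Transporting by the involution, however, produces ``$+$''-components that are the central reflections of the $D^{-}_\alpha$ --- i.e.\ the sets of \eqref{def:regions_D} with all $\lambda$-inequalities reversed --- which agrees with Lemma \ref{lem:regions_D}\,(iv) but not with the literal ``$+$'' formulas in \eqref{def:regions_D}; a direct check ($\lambda^{+}(\mu)\to-\infty$ as $\mu\nearrow-\mu_0^{+}$ within $I^{+}_1$, $\to+\infty$ as $\mu\searrow-\mu_0^{+}$ within $I^{+}_2$) confirms the reflected version is the one whose pieces actually join across the lines $\mu=-\mu_0^{\pm}$. So your method is the right one and, when executed, it settles the ``$-$'' case as stated; for the ``$+$'' case you should state explicitly that the components are the reflections of the $D^{-}_\alpha$ rather than assume the inequality pattern of \eqref{def:regions_D} carries over verbatim.
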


\begin{proof}
We notice that the system of equations
\begin{align}\label{system 10}
\begin{cases}
&\lambda Q^{\pm}_{0}(\mu)\mp Q^{\pm}_1(\mu)= 0\\
&Q^{\pm}_{0}(\mu)=0
\end{cases}
\end{align}
is equivalent to the system of equations  $Q^{\pm}_{0}(\mu)=
Q_1^{\pm}(\mu)= 0$ which has no solutions, since by
\eqref{ineq:roots} the roots  $ \mu_{0}^{-},\,\mu_{0}^{+}$ and  $
\mu_{1}^{-}, \,\mu_{1}^{+} $ of the polynomials $Q^{\pm}_{0}$ and
$Q^{\pm}_{1}$ are different. Therefore $Q^{\pm}_{0}(\mu)\lambda \mp
Q_1^{\pm}(\mu)= 0$ implies the inequality $Q^{\pm}_{0}(\mu) \neq 0$
and, consequently,  the
 function \eqref{def:function_lambda1}

is well defined on $\R\setminus\{\mp\mu_0^-,\mp\mu_0^+\}$. Thus, the
set of  points  $(\lambda,\mu)\in\mathbb{R}^2$ defined by the
equation $Q^{\pm}_{0}(\mu)\lambda \mp Q^{\pm}_1(\mu)= 0$ coincides
with the graph

\begin{equation}\label{graph}
\bigl\{(\lambda^{\pm}(\mu),\mu),\mu\in\R\setminus
\{\mp\mu_0^-,\mp\mu_0^+\}\bigr\}\subset\mathbb{R}^2
\end{equation}
of the function $\lambda^{\pm}(\mu)$.  Since the function
$\lambda^\pm$ is defined on three disjoint intervals of the real
line, its graph is composed of three smooth curves on the
two-dimensional plane $\R^2$, namely, of the curves   $\tau_j^\pm$,
$j=1,2,3$, defined in \eqref{curves}.  These curves are isolated,
unbounded, simple, smooth and separate the $(\lambda,\mu)$-plane
into four unbounded distinct regions  (see Figure
\ref{Fig_D_minus}). These completes the proof.
\end{proof}

\begin{figure}
\centering
\includegraphics[width=0.9\textwidth]{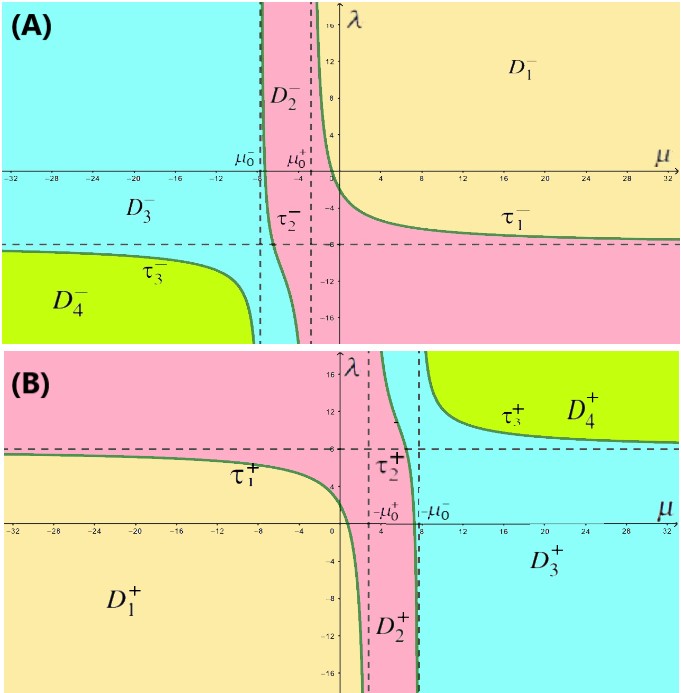}
\caption{ (A) Plots of the curves $\tau^-_j$, $j=1,2,3$,
partitioning the domain of the function  $\gamma^{-}$ into the parts
$D^-_\alpha$, $\alpha=1,2,3,4$. (B) Plots of the curves $\tau^+_j$,
$j=1,2,3$, partitioning the domain of the function  $\gamma^{+}$
into the parts $D^+_\alpha$, $\alpha=1,2,3,4$.} \label{Fig_D_minus}
\end{figure}

The next assertion describes relations between the sets
$D^{\pm}_{\alpha}, \alpha=1,2,3,4$ defined in Lemma
\ref{lem:regions} by formulas \eqref{def:regions_D}.

\begin{lemma}\label{lem:regions_D}
The following relations hold:
\begin{itemize}
\item[(i)]$D^{-}_4 \subseteq D^{+}_1$,
\item[(ii)]$D^{+}_4 \subseteq D^{-}_1$,
\item[(iii)]$D^{-}_4 \cap D^{+}_4=\emptyset$,
\item[(iv)] For any $\alpha=1,2,3,4$ the regions $D^{-}_\alpha$
and  $ D^{+}_\alpha$ are symmetric with respect to  the origin.
\end{itemize}
\end{lemma}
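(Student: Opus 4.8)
The whole lemma rests on one elementary symmetry. Comparing the factorizations \eqref{polynomial:q0}--\eqref{polynomial:q1}, the substitution $\mu\mapsto-\mu$ turns each $+$-polynomial into the corresponding $-$-polynomial, $Q_0^{+}(-\mu)=Q_0^{-}(\mu)$ and $Q_1^{+}(-\mu)=Q_1^{-}(\mu)$, simply because the roots $-\mu_0^{-},-\mu_0^{+}$ (resp. $-\mu_1^{-},-\mu_1^{+}$) of the $+$-polynomials are the negatives of the roots $\mu_0^{-},\mu_0^{+}$ (resp. $\mu_1^{-},\mu_1^{+}$) of the $-$-polynomials. Dividing and taking the outer sign in \eqref{def:function_lambda1} into account, this yields the clean relation $\lambda^{-}(-\mu)=-\lambda^{+}(\mu)$ on the common domain. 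First I would record its geometric meaning: the point reflection $R(\lambda,\mu):=(-\lambda,-\mu)$ carries the graph of $\lambda^{-}$ onto the graph of $\lambda^{+}$, it sends each interval $I^{-}_{j}$ onto $I^{+}_{j}$ (indeed $-I^{-}_{j}=I^{+}_{j}$, which is immediate from \eqref{ineq:roots}), and, via $\lambda^{-}(-\mu)=-\lambda^{+}(\mu)$, it turns a relation $\lambda\gtrless\lambda^{-}(\mu)$ into one of the form $\lambda\lessgtr\lambda^{+}(\mu)$.

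This reflection is exactly assertion (iv). Being an involutive homeomorphism of $\R^2$ that maps the curve system $\tau^{-}_{1}\cup\tau^{-}_{2}\cup\tau^{-}_{3}$ onto $\tau^{+}_{1}\cup\tau^{+}_{2}\cup\tau^{+}_{3}$ (by Lemma \ref{lem:regions}), $R$ maps the four connected components of the one complement bijectively onto those of the other. To prove $R(D^{-}_{\alpha})=D^{+}_{\alpha}$ I would transform each defining description in \eqref{def:regions_D}: the clause $\mu\in I^{-}_{j}$ becomes $\mu\in I^{+}_{j}$, the clause involving $\lambda^{-}(\mu)$ becomes the matching one involving $\lambda^{+}(\mu)$, and the separating lines $\mu=\mp\mu_0^{+}$, $\mu=\mp\mu_0^{-}$ are interchanged accordingly. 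Matching the transformed pieces against the description of $D^{+}_{\alpha}$ — including checking that the two half-strips glued across a pole line go over into the two half-strips of the image region — identifies the component and gives the required equality for every $\alpha$.

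Assertion (iii) is then the cheapest, being purely a statement about $\mu$-supports. The projection of $D^{-}_{4}$ to the $\mu$-axis is $I^{-}_{3}=(-\infty,\mu_0^{-})$ and that of $D^{+}_{4}$ is $I^{+}_{3}=(-\mu_0^{-},+\infty)$; since $\mu_0^{-}<0<-\mu_0^{-}$ by \eqref{ineq:roots} these ranges are disjoint, whence $D^{-}_{4}\cap D^{+}_{4}=\emptyset$. For assertion (i) I would argue in two steps. The ordering \eqref{ineq:roots} gives $\mu_0^{-}<-\mu_0^{+}$, so the $\mu$-support $I^{-}_{3}$ of $D^{-}_{4}$ sits inside the $\mu$-support $I^{+}_{1}$ of $D^{+}_{1}$, and it only remains to compare the two curves on $I^{-}_{3}$. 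Using \eqref{polynomial:q0}--\eqref{polynomial:q1} together with \eqref{ineq:roots} to fix the signs, one checks that $Q_0^{-},Q_1^{-},Q_0^{+},Q_1^{+}$ are all positive on $(-\infty,\mu_0^{-})$, so that $\lambda^{-}(\mu)=-Q_1^{-}(\mu)/Q_0^{-}(\mu)<0<Q_1^{+}(\mu)/Q_0^{+}(\mu)=\lambda^{+}(\mu)$ there; hence the region cut out over this common range by the $D^{-}_{4}$-inequality is contained in the one cut out by the $D^{+}_{1}$-inequality, i.e. $D^{-}_{4}\subseteq D^{+}_{1}$. Finally, (ii) needs no separate computation: applying $R$ to (i) and invoking (iv) in the form $R(D^{-}_{4})=D^{+}_{4}$ and $R(D^{+}_{1})=R\bigl(R(D^{-}_{1})\bigr)=D^{-}_{1}$ transforms $D^{-}_{4}\subseteq D^{+}_{1}$ into $D^{+}_{4}\subseteq D^{-}_{1}$.

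The genuinely delicate part is the bookkeeping in (iv), not any analytic estimate. The reflection identity and the sign table for $\lambda^{\pm}$ are routine; the care is needed in matching the piecewise descriptions of the glued regions $D^{\pm}_{2}$ and $D^{\pm}_{3}$ across the pole lines $\mu=\mp\mu_0^{+}$ and $\mu=\mp\mu_0^{-}$, i.e. in verifying that $R$ sends the ``above one branch / below the next branch'' structure of $D^{-}_{\alpha}$ exactly onto the corresponding structure of $D^{+}_{\alpha}$ and does not accidentally interchange two of the four components. Once that correspondence is pinned down, assertions (i)--(iii) follow from the two short arguments above.
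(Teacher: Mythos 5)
Your structural plan --- the reflection $R(\lambda,\mu)=(-\lambda,-\mu)$, the identities $Q_0^{+}(-\mu)=Q_0^{-}(\mu)$, $Q_1^{+}(-\mu)=Q_1^{-}(\mu)$, hence $\lambda^{-}(-\mu)=-\lambda^{+}(\mu)$, and $-I_j^{-}=I_j^{+}$ --- is exactly the fleshed-out version of the paper's one-line argument (which only invokes $Q^{+}(\gamma,\lambda,\mu)=Q^{-}(-\gamma,-\lambda,-\mu)$), and your treatment of (iii) is correct. But there is a genuine gap precisely at the ``bookkeeping'' you defer. You correctly record that $R$ \emph{flips} the inequality: $\lambda>\lambda^{-}(\mu)$ over $I_j^{-}$ goes to $\lambda<\lambda^{+}(\mu)$ over $I_j^{+}$. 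Carried through against the definitions \eqref{def:regions_D} as printed, this gives $R(D_1^{-})=\{\lambda<\lambda^{+}(\mu),\ \mu\in I_1^{+}\}$, which is the first piece of the printed $D_2^{+}$, not $D_1^{+}=\{\lambda>\lambda^{+}(\mu),\ \mu\in I_1^{+}\}$; so the matching you assert in (iv) fails for every $\alpha$. The same flip undoes your (i): from $\lambda<\lambda^{-}(\mu)<0<\lambda^{+}(\mu)$ on $I_3^{-}$ you can only conclude $\lambda<\lambda^{+}(\mu)$, which is the \emph{negation} of the printed $D_1^{+}$-condition, so your (correct) sign computation actually shows $D_4^{-}\cap D_1^{+}=\emptyset$ for the sets as written. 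Item (ii) then inherits whatever happens to (i) and (iv).

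The resolution is that the ``$+$'' half of \eqref{def:regions_D} cannot be taken at face value: with the printed inequalities the sets $D_2^{+}$ and $D_3^{+}$ are not even connected (near $\mu=-\mu_0^{+}$ one has $\lambda^{+}\to-\infty$ from the left and $\lambda^{+}\to+\infty$ from the right, so it is $\{\lambda>\lambda^{+}\}$ over $I_1^{+}$ that glues across that line to $\{\lambda<\lambda^{+}\}$ over $I_2^{+}$, the opposite of what is written), whereas the ``$-$'' half does produce the four connected components. The $D_\alpha^{+}$ must therefore be read as the reflections $R(D_\alpha^{-})$, i.e.\ with the inequality signs in the ``$+$'' lines of \eqref{def:regions_D} reversed; under that corrected reading your arguments for (i), (ii) and (iv) go through essentially verbatim. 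A complete proof has to make this correction explicit rather than asserting that the transformed pieces ``match'' the printed descriptions --- as it stands, the very computations you perform contradict items (i) and (iv) for the sets as literally defined.
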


\begin{proof}
Statements (i)--(iii) follow immediately from the definitions
\eqref{def:regions_D} of the sets  $D^{\pm}_{\alpha},
\alpha=1,2,3,4$. Item (iv) is a consequence of the identity
$Q^{+}(\gamma,\lambda,\mu)=Q^{-}(-\gamma,-\lambda,-\mu)$ for any
$\gamma,\lambda,\mu\in\R$ (also see Figure \ref{Fig_D_minus}).
\end{proof}

Proof of the next statement is slightly more complicated but its
basic idea is exactly the same as that of Lemma~\ref{lem:regions}.

\begin{lemma}\label{lem:function_gamma}
The set of points $(\gamma,\lambda,\mu)\in\mathbb{R}^3$ satisfying
the equation $Q^{\pm}(\gamma,\lambda,\mu)= 0$ coincides with the
graph of the function
\begin{equation}\label{functiongamma}
\gamma^{\pm}(\lambda,\mu)=\frac{8Q^{\pm}_{0}(\mu)}{ \lambda
Q^{\pm}_{0}(\mu) \mp Q^{\pm}_1(\mu)} \pm 4, \quad
(\lambda,\mu)\in\R^2\setminus\bigcup\limits_{i=1}^3 \tau_i^\pm,
\end{equation}
where $Q^{\pm}_0(\cdot)$ and $Q^{\pm}_{1}(\cdot)$ are defined by
\eqref{polynomial:q0} and \eqref{polynomial:q1}, respectively, and
$\tau^\pm_i$, $i=1,2,3,$ are the null sets \eqref{curves} of the
denominators $\lambda Q^{\pm}_{0}(\mu) \mp Q^{\pm}_1(\mu)$. The
graph  of the function \eqref{functiongamma} consists of four
isolated unbounded smooth  connected surfaces
\begin{align}\label{def:Gamma}
&\Gamma^{\pm}_{j}=\{(\gamma,\lambda,\mu)\in\R^3:\gamma=\gamma^\pm(\lambda,\mu),\,\,
(\lambda,\mu)\in D^{\pm}_j\},\quad j=1,2,3,4,
\end{align}
that  separate
 the $(\gamma,\lambda,\mu)$-space into five unbounded
 connected components
\begin{equation}
\label{def:ees_sets}
\begin{array}{cl}
\cC^{\pm}_{0}=&\{(\gamma,\lambda,\mu)\in\R^3:\gamma>
\gamma^\pm(\lambda,\mu), \,\,(\lambda,\mu)\in D^{\pm}_1\},\\
\cC^{\pm}_{1}=&\{(\gamma,\lambda,\mu)\in\R^3:
\gamma<\gamma^\pm(\lambda,\mu),\,\, (\lambda,\mu)\in D^{\pm}_1\}\cup \\
&\{(\gamma,\lambda,\mu)\in\R^3: (\lambda,\mu)\in \tau^{\pm}_1\}\cup \\
&\{(\gamma,\lambda,\mu)\in\R^3:
\gamma>\gamma^\pm(\lambda,\mu),\,\, (\lambda,\mu)\in D^{\pm}_2 \},\\
\cC^{\pm}_{2}=&\{(\gamma,\lambda,\mu)\in\R^3:
\gamma<\gamma^\pm(\lambda,\mu),\,\, (\lambda,\mu)\in D^{\pm}_2\}\cup \\
&\{(\gamma,\lambda,\mu)\in\R^3: (\lambda,\mu)\in \tau^{\pm}_2\}\cup \\
&\{(\gamma,\lambda,\mu)\in\R^3:
\gamma>\gamma^\pm(\lambda,\mu),\,\, (\lambda,\mu)\in D^{\pm}_3 \},\\
\cC^{\pm}_{3}=&\{(\gamma,\lambda,\mu)\in\R^3:
\gamma<\gamma^\pm(\lambda,\mu),\,\, (\lambda,\mu)\in D^{\pm}_3\}\cup \\
&\{(\gamma,\lambda,\mu)\in\R^3: (\lambda,\mu)\in \tau^{\pm}_3\}\cup \\
&\{(\gamma,\lambda,\mu)\in\R^3:
\gamma>\gamma^\pm(\lambda,\mu),\,\, (\lambda,\mu)\in D^{\pm}_4 \},\\
\cC^{\pm}_{4}=&\{(\gamma,\lambda,\mu)\in\R^3:
\gamma<\gamma^\pm(\lambda,\mu), (\lambda,\mu)\in  D^{\pm}_4\}.
\end{array}
\end{equation}
\end{lemma}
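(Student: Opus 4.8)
The plan is to mirror the structure of Lemma~\ref{lem:regions}: solve the equation $Q^{\pm}(\gamma,\lambda,\mu)=0$ for $\gamma$ by exploiting the factorization \eqref{polynomial:Q}, which rewrites it as
\begin{equation*}
(\gamma\mp4)\bigl(\lambda Q^{\pm}_0(\mu)\mp Q^{\pm}_1(\mu)\bigr)=8\,Q^{\pm}_0(\mu).
\end{equation*}
The first point I would settle is that the zero set contains no point lying over one of the curves $\tau^{\pm}_i$. Indeed, if $(\lambda,\mu)\in\tau^{\pm}_i$ then the left-hand factor $\lambda Q^{\pm}_0(\mu)\mp Q^{\pm}_1(\mu)$ vanishes by \eqref{curves}, so the equation collapses to $8\,Q^{\pm}_0(\mu)=0$; but on $\tau^{\pm}_i$ we have $\mu\in I^{\pm}_i$, and by \eqref{def:interval_I} together with \eqref{ineq:roots} the intervals $I^{\pm}_i$ avoid the roots $\mp\mu_0^{-},\mp\mu_0^{+}$ of $Q^{\pm}_0$, so $Q^{\pm}_0(\mu)\neq0$ and the equation has no solution for any $\gamma$. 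Consequently, for $(\lambda,\mu)\notin\bigcup_i\tau^{\pm}_i$ the denominator is nonzero, the equation is solved uniquely by $\gamma=\gamma^{\pm}(\lambda,\mu)$ from \eqref{functiongamma}, and the zero set of $Q^{\pm}$ is exactly the graph of $\gamma^{\pm}$.

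Next I would identify the domain $\R^2\setminus\bigcup_i\tau^{\pm}_i$ of $\gamma^{\pm}$ with the disjoint union of the four connected components $D^{\pm}_j$ of Lemma~\ref{lem:regions}. On each $D^{\pm}_j$ the function $\gamma^{\pm}$ is a ratio of polynomials with nonvanishing denominator, hence smooth, so its graph is the smooth connected unbounded sheet $\Gamma^{\pm}_j$ of \eqref{def:Gamma}. For the blow-up analysis I would use the algebraic simplification that follows from $\lambda^{\pm}(\mu)=\pm Q^{\pm}_1(\mu)/Q^{\pm}_0(\mu)$ in \eqref{def:function_lambda1}: wherever $Q^{\pm}_0(\mu)\neq0$ the denominator equals $Q^{\pm}_0(\mu)\bigl(\lambda-\lambda^{\pm}(\mu)\bigr)$, the factor $Q^{\pm}_0(\mu)$ cancels against the numerator, and
\begin{equation*}
\gamma^{\pm}(\lambda,\mu)=\frac{8}{\lambda-\lambda^{\pm}(\mu)}\pm4 .
\end{equation*}
Thus $\gamma^{\pm}\to+\infty$ as a wall is approached from the side $\lambda>\lambda^{\pm}(\mu)$ and $\gamma^{\pm}\to-\infty$ from the side $\lambda<\lambda^{\pm}(\mu)$, the direction being governed purely by the defining inequality of the regions $D^{\pm}_j$ in \eqref{def:regions_D}. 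On the two vertical lines $\mu=\mp\mu_0^{+}$ and $\mu=\mp\mu_0^{-}$ interior to $D^{\pm}_2$ and $D^{\pm}_3$ respectively, the simplification is invalid, but there the numerator $8Q^{\pm}_0$ vanishes while the denominator does not, so $\gamma^{\pm}\equiv\pm4$ and each sheet remains a single smooth surface across these lines; since consecutive sheets escape to opposite infinities near the intervening wall, they are mutually isolated.

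Finally I would assemble the five connected components from this blow-up picture. Over each column $D^{\pm}_j\times\R$ the sheet $\Gamma^{\pm}_j$ splits the column into $\{\gamma>\gamma^{\pm}\}$ and $\{\gamma<\gamma^{\pm}\}$, while the full column over each $\tau^{\pm}_i$ lies in the complement of the zero set. The sign computation shows that, at the wall $\tau^{\pm}_i$, the lower piece over the $\lambda>\lambda^{\pm}$ side opens to all of $\R$ and fuses, through the wall column, with the upper piece over the $\lambda<\lambda^{\pm}$ side; carrying this out along the chain $D^{\pm}_1\!-\!D^{\pm}_2\!-\!D^{\pm}_3\!-\!D^{\pm}_4$ produces exactly the sets $\cC^{\pm}_1,\cC^{\pm}_2,\cC^{\pm}_3$ of \eqref{def:ees_sets}, with the extreme pieces $\{\gamma>\gamma^{\pm}\}$ over $D^{\pm}_1$ and $\{\gamma<\gamma^{\pm}\}$ over $D^{\pm}_4$ giving $\cC^{\pm}_0$ and $\cC^{\pm}_4$. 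I expect the main obstacle to be precisely this separation bookkeeping: one must confirm that each $\cC^{\pm}_k$ is globally connected despite the bending of $D^{\pm}_2$ and $D^{\pm}_3$ across their vertical asymptote lines, and that the matched blow-up directions glue ``below over $D^{\pm}_i$'' to ``above over $D^{\pm}_{i+1}$'' and not to a wrong neighbour, so that the five sets are disjoint and exhaust $\R^3\setminus\bigcup_j\Gamma^{\pm}_j$.
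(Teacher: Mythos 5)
Your proposal is correct and follows essentially the same route as the paper: the identification of the zero set of $Q^{\pm}$ with the graph of $\gamma^{\pm}$ rests on the factorization \eqref{polynomial:Q} together with the fact that $Q^{\pm}_0(\mu)\neq 0$ on the walls $\tau^{\pm}_i$ (the paper phrases this as the incompatibility of the system $Q^{\pm}=0$, $\lambda Q^{\pm}_0(\mu)\mp Q^{\pm}_1(\mu)=0$ with the distinctness of the roots of $Q^{\pm}_0$ and $Q^{\pm}_1$, which is equivalent to your observation that the intervals $I^{\pm}_i$ omit $\mp\mu_0^{\mp}$), and the five-component partition is then read off from the four sheets over the regions $D^{\pm}_j$ of Lemma~\ref{lem:regions}. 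Your treatment of the second step is in fact more detailed than the paper's, which simply asserts the partition; the simplification $\gamma^{\pm}=8/(\lambda-\lambda^{\pm}(\mu))\pm 4$ and the resulting blow-up directions correctly justify both the isolation of the sheets and the gluing of ``below over $D^{\pm}_i$'' to ``above over $D^{\pm}_{i+1}$'' through the column over $\tau^{\pm}_i$.
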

\begin{proof}
First, one observes that the system of equations
\begin{align}\label{system1}
\begin{cases}
&Q^{\pm}(\gamma, \lambda, \mu)=0 \\
&\lambda Q^{\pm}_{0}(\mu) \mp Q_1^{\pm}(\mu)=0
\end{cases}
\end{align}
has no solutions. Indeed, if \eqref{system1} has a solution then
\eqref{polynomial:Q} immediately implies
\begin{equation}\label{different}
Q^{\pm}_{0}(\mu)=Q^{\pm}_{1}(\mu)=0.
\end{equation}
However this contradicts the fact that the roots  $
\mp\mu_{0}^{-},\,\mp\mu_{0}^{+}$ and  $\mp\mu_{1}^{-},
\,\mp\mu_{1}^{+} $ of the polynomials $Q^{\pm}_{0}$ and
$Q^{\pm}_{1}$ are all different \eqref{ineq:roots}.

Taking into account that the system \eqref{system1} has no
solutions, the equality $Q^{\pm}(\gamma,\lambda,\mu)=0$ implies the
inequality $\lambda Q^{\pm}_{0}(\mu) \mp Q_1^{\pm}(\mu) \neq 0$
which means that the function
\begin{equation}\label{functiongamma^{-}}
\gamma^{\pm}(\lambda,\mu)=\frac{8Q^{\pm}_{0}(\mu)}{ \lambda
Q^{\pm}_{0}(\mu) \mp Q^{\pm}_1(\mu)}\pm 4
\end{equation}
is well defined on $\R^2\setminus\bigcup\limits_{i=1}^3 \tau_i^\pm$.
Now, let $(\gamma,\lambda,\mu)\in\mathbb{R}^3$ belong to the graph
of the function $\gamma^{\pm}(\lambda,\mu)$, i.e.,
\begin{equation}\label{functiongamma1}
\gamma=\frac{8Q^{\pm}_{0}(\mu)}{\lambda Q^{\pm}_{0}(\mu) \mp
Q^{\pm}_1(\mu)}\pm 4.
\end{equation}
Combining \eqref{functiongamma1} with \eqref{polynomial:Q} yields
the first equality of \eqref{system1}. Thus, we have showed that the
set of the points  $(\gamma,\lambda,\mu)\in\mathbb{R}^3$ defined by
the equation $Q^{\pm}(\gamma,\lambda,\mu)= 0$ coincides with the
graph $(\gamma^{\pm}(\lambda,\mu),\lambda,\mu)\in\mathbb{R}^3$ of
the smooth function $\gamma^{\pm}$.

Similarly, in Lemma \ref{lem:regions}, the domain of the function
$\gamma^\pm$ consists of four disjoint unbounded regions
$D^\pm_j\subset\mathbb{R}^2$, $j=1,2,3,4,$ defined in
\eqref{def:regions_D}. This implies that the graph of the function
$\gamma^\pm$ consists of four separate unbounded simple smooth
surfaces $\Gamma_j^\pm$, $j = 1, 2, 3, 4$, given by
\eqref{def:Gamma}. These surfaces partition the space $\mathbb{R}^3$
into five separate unbounded connected components, as defined in
\eqref{def:ees_sets}. This completes the proof.
\end{proof}

\begin{remark}\label{remark:Gamma_surfaces}
Recall that the domain of the function $\gamma^{\pm}(\cdot,\cdot)$
is the set
\begin{align*}
&\R^2\setminus \bigcup\limits_{i=1}^3  \tau^{\pm}_i
=\bigcup\limits_{j=1}^4 D^{\pm}_{j}.
\end{align*}
The curve $\tau^{\pm}_j$ in $\R^2$ defines the corresponding
cylinder
\begin{align*}
&\Upsilon^{\pm}_j:=\bigl\{\bigl(\gamma,\lambda,\mu\bigr)\in\R^3\big|\,
(\lambda,\mu)\in \tau^{\pm}_j\bigr\}, \quad j=1,2,3,
\end{align*}
that represents an asymptotic surface for the graph of $\gamma^\pm$.
The cylinders $\Upsilon^{\pm}_1$,$\Upsilon^{\pm}_2$  and
$\Upsilon^{\pm}_3$ separate the graph of the function
$\gamma^{\pm}(\cdot,\cdot)$    into  the four isolated smooth
connected unbounded surfaces $\Gamma^{\pm}_{1}$, $\Gamma^{\pm}_{2}$,
$\Gamma^{\pm}_{3}$ and $\Gamma^{\pm}_{4}$ in $\R^3$, respectively.

\end{remark}
\begin{remark}
\label{Rem_Q_sigdef} By construction of the domains
\eqref{def:ees_sets}, the  function $Q^\pm(\gamma,\lambda,\mu)$
keeps it sign on every respective domain $\cC^\pm_\alpha$,
$\alpha=0,1,2,3,4$.
\end{remark}

Now we evaluate the number of bound states of the operator
$H_{\gamma \lambda \mu }^\mathrm{ees}(0)$ with energies lying below
the essential spectrum and above it.

\begin{theorem}\label{teo:constant}
Let $\cC^+$ be one of the open connected components $\cC^+_\zeta$,
$\zeta=0,1,2,3,4$, of the $(\gamma,\lambda,\mu)$-space defined in
\eqref{def:ees_sets} with the sign ``$+$''. Then for any
$(\gamma,\lambda,\mu)\in\cC^+$ the number
$n_+({H}^{\mathrm{ees}}_{\gamma\lambda\mu}(0))$ of eigenvalues of
$H^{\mathrm{ees}}_{\gamma\lambda\mu}(0)$ lying above the essential
spectrum
$\sigma_\mathrm{ess}\bigl(H^{\mathrm{ees}}_{\gamma\lambda\mu}(0)\bigr)$
remains constant.

Analogously, let $\cC^-$ be one of the above open connected
components $\cC^-_\zeta$, $\zeta=0,1,2,3,4$,  of the
$(\gamma,\lambda,\mu)$-space defined in \eqref{def:ees_sets} with
the sign ``$-$''. Then for any $(\gamma,\lambda,\mu)\in\cC^-$ the
number $n_-({H}^\mathrm{ees}_{\gamma\lambda\mu}(0))$ of eigenvalues
of $H^\mathrm{ees}_{\gamma\lambda\mu}(0)$ lying below the essential
spectrum
$\sigma_\mathrm{ess}\bigl(H_{\gamma\lambda\mu}^\mathrm{ees}(0)\bigr)$
remains constant.
\end{theorem}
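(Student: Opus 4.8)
The plan is to reduce the counting of eigenvalues of $H^{\mathrm{ees}}_{\gamma\lambda\mu}(0)$ lying above the essential spectrum $[\,0,8\,]$ (see \eqref{eq:ess_spec}--\eqref{EKmax}) to the counting of zeros of a Fredholm-type determinant, and then to run a continuity argument in the parameters $(\gamma,\lambda,\mu)$. First I would exploit that, by \eqref{Vees}, the perturbation $V^{\mathrm{ees}}_{\gamma\lambda\mu}$ is a rank-four self-adjoint operator whose range is spanned by the four even, symmetric form factors $\phi_0=1$, $\phi_1=\cos p_1+\cos p_2$, $\phi_2=\cos 2p_1+\cos 2p_2$ and $\phi_3=\cos p_1\cos p_2$. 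Fixing $z>8=\cE_{\max}(0)$, the resolvent $(H_0(0)-z)^{-1}$ is the bounded operator of multiplication by $(\cE_0(\cdot)-z)^{-1}$, so the Birman--Schwinger principle shows that $z$ is an eigenvalue of $H^{\mathrm{ees}}_{\gamma\lambda\mu}(0)$ (with multiplicity) exactly when a $4\times 4$ determinant $\Delta^{+}(z;\gamma,\lambda,\mu)$ vanishes (with the same multiplicity). The entries of the underlying matrix are the coupling constants times the Gram-type integrals $\int_{\T^2}\frac{\phi_i(q)\phi_j(q)}{\cE_0(q)-z}\,\d q$, so $\Delta^{+}$ is jointly real-analytic in $(z,\gamma,\lambda,\mu)$ on $(8,\infty)\times\R^3$.

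Next I would record two elementary facts. Since the restriction of $H_0(0)$ has norm $8$ and $V^{\mathrm{ees}}_{\gamma\lambda\mu}$ depends continuously on $(\gamma,\lambda,\mu)$, along any path contained in a (path-connected) component $\cC^{+}_{\zeta}$ the operator norm stays uniformly bounded, so no eigenvalue can escape to $+\infty$; moreover the eigenvalues depend continuously on the parameters. Consequently the number $n_{+}\bigl(H^{\mathrm{ees}}_{\gamma\lambda\mu}(0)\bigr)$ of eigenvalues in $(8,\infty)$, counted with multiplicity, can change only when a zero of $\Delta^{+}(\cdot\,;\gamma,\lambda,\mu)$ reaches the top threshold $z=8$.

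The decisive step is to show that an eigenvalue reaches the threshold $z=8$ precisely on the surfaces where $Q^{+}(\gamma,\lambda,\mu)=0$. For this I would analyse the limit $z\to 8^{+}$ of $\Delta^{+}$; the only singularity comes from a neighbourhood of $p=(\pi,\pi)$, where $8-\cE_0(p)$ vanishes quadratically, producing a logarithmic divergence of those Gram integrals whose form factors do not vanish at $(\pi,\pi)$. Factoring out this divergent contribution, the sign-determining finite part of the threshold determinant equals, up to a nonzero factor, the polynomial $Q^{+}(\gamma,\lambda,\mu)$ of \eqref{polynomial:Q}--\eqref{polynomial:q1}; hence $\Delta^{+}$ exhibits an eigenvalue crossing at $z=8$ if and only if $Q^{+}(\gamma,\lambda,\mu)=0$, that is, if and only if $(\gamma,\lambda,\mu)\in\bigcup_{j=1}^{4}\Gamma^{+}_{j}$.

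Finally, by Lemma~\ref{lem:function_gamma} the components $\cC^{+}_{\zeta}$, $\zeta=0,\dots,4$, are exactly the connected components of $\R^{3}\setminus\bigcup_{j=1}^{4}\Gamma^{+}_{j}$, and by Remark~\ref{Rem_Q_sigdef} the function $Q^{+}$ keeps a fixed sign on each of them. Thus along any path inside a single $\cC^{+}_{\zeta}$ no eigenvalue reaches the threshold $8$ and none escapes to $+\infty$, so $n_{+}$ is constant on $\cC^{+}_{\zeta}$. The ``$-$'' case is identical after replacing the top threshold $z=8$ by the bottom threshold $z=0=\cE_{\min}(0)$ (cf.\ \eqref{EKmin}), reversing the sign of the resolvent accordingly, and replacing $Q^{+}$ by $Q^{-}$; alternatively it follows from the symmetry $Q^{+}(\gamma,\lambda,\mu)=Q^{-}(-\gamma,-\lambda,-\mu)$ of Lemma~\ref{lem:regions_D}(iv). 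I expect the main obstacle to be the threshold asymptotics: controlling the logarithmic divergence of the Gram integrals as $z\to 8^{+}$ and verifying that the finite, sign-determining remainder of the $4\times4$ determinant reproduces the factored polynomial $Q^{+}$ of \eqref{polynomial:Q}.
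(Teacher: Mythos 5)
Your proposal is correct and follows essentially the same route as the paper: reduce eigenvalue counting to counting zeros of the $4\times4$ Birman--Schwinger determinant $\Delta^{\mathrm{ees}}_{\gamma\lambda\mu}$ (Lemma \ref{lem:det_zeros_vs_eigen}), use the threshold asymptotics $\Delta^{\mathrm{ees}}_{\gamma\lambda\mu}(z)\sim-\tfrac{1}{4\pi}Q^{\pm}(\gamma,\lambda,\mu)\ln|z-z_{\mathrm{thr}}|$ of Lemma \ref{lem:asymp_det} together with the sign-definiteness of $Q^{\pm}$ on each component to block the threshold, and conclude by a continuity argument in the parameters. The only cosmetic difference is that the paper makes the final step uniform by confining all zeros to a compact interval along a curve in $\cC^{\pm}_{\zeta}$ and invoking Rouch\'e's theorem, which is exactly the uniformity your eigenvalue-continuity argument implicitly requires.
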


We postpone the proof of this results as well as the proof of
Theorem \ref{teo:ee,s}  till Sec. \ref{sec:proofs}.

\begin{theorem}\label{teo:ee,s}
Let $\gamma,\lambda,\mu\in \mathbb{R}$  and $ \zeta=0,1,2,3,4$. If
$( \gamma,\lambda,\mu)\in \mathcal{C}^{-}_{\zeta} $ (resp.  $(
\gamma,\lambda,\mu)\in \mathcal{C}^{+}_{\zeta} $)  then
$H^\mathrm{ees}_{ \gamma\lambda\mu}(0)$ has exactly $\zeta$ bound
states with energies lying  below  (resp. above) the essential
spectrum.
\end{theorem}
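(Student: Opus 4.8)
The plan is to collapse the two halves of the statement into one via a threshold-exchanging symmetry, then to reduce the eigenvalue count to the zeros of a $4\times4$ Birman--Schwinger determinant, and finally to identify $Q^-$ with the value of that determinant at the lower threshold and count zeros region by region with the help of Theorem~\ref{teo:constant}. For the \emph{symmetry reduction}, let $(Uf)(p):=f(p_1+\pi,p_2+\pi)$. One checks that $U$ is a unitary involution of $L^{2,\mathrm{ees}}(\T^2)$, that it turns the multiplier $\cE_0$ into $8-\cE_0$, and that it sends the four generating functions $\phi_0:=1$, $\phi_1:=\cos p_1+\cos p_2$, $\phi_2:=\cos2p_1+\cos2p_2$, $\phi_3:=\cos p_1\cos p_2$ of \eqref{Vees} to $\phi_0,-\phi_1,\phi_2,\phi_3$, leaving the rank-one projectors they define unchanged. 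Hence $U H^{\mathrm{ees}}_{\gamma\lambda\mu}(0)U^{*}=8-H^{\mathrm{ees}}_{-\gamma,-\lambda,-\mu}(0)$, so $E>8$ is an eigenvalue of $H^{\mathrm{ees}}_{\gamma\lambda\mu}(0)$ iff $8-E<0$ is an eigenvalue of $H^{\mathrm{ees}}_{-\gamma,-\lambda,-\mu}(0)$. Together with the identity $Q^{+}(\gamma,\lambda,\mu)=Q^{-}(-\gamma,-\lambda,-\mu)$ and Lemma~\ref{lem:regions_D}(iv), which give $(\gamma,\lambda,\mu)\in\cC^{+}_{\zeta}\iff(-\gamma,-\lambda,-\mu)\in\cC^{-}_{\zeta}$, this reduces the ``$+$'' assertion to the ``$-$'' one; from now on I count eigenvalues below $0$.

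For the \emph{Birman--Schwinger reduction}, note that $\phi_0,\dots,\phi_3$ are mutually orthogonal and $V^{\mathrm{ees}}_{\gamma\lambda\mu}=\sum_{i}\kappa_i\langle\,\cdot\,,\phi_i\rangle\phi_i$ with $\kappa_0\propto\gamma$, $\kappa_1\propto\lambda$, $\kappa_2,\kappa_3\propto\mu$, so a number $z<0$ is an eigenvalue of $H^{\mathrm{ees}}_{\gamma\lambda\mu}(0)$ precisely when
\begin{equation*}
\Delta(z;\gamma,\lambda,\mu):=\det\bigl(I+K\,G(z)\bigr)=0,\qquad
G_{ij}(z):=\int_{\T^2}\frac{\phi_i(p)\phi_j(p)}{\cE_0(p)-z}\,\d p,
\end{equation*}
with $K=\mathrm{diag}(\kappa_0,\kappa_1,\kappa_2,\kappa_3)$. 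The matrix $G(z)$ is real-analytic and strictly increasing in the Loewner order on $(-\infty,0)$, with $G(z)\to0$ as $z\to-\infty$; because of the two-dimensional threshold the only divergence as $z\to0^-$ is logarithmic and confined to the single direction $v=(\phi_0(0),\phi_1(0),\phi_2(0),\phi_3(0))=(1,2,2,1)$, while the complementary block of $G$ stays bounded. Thus $n_-\bigl(H^{\mathrm{ees}}_{\gamma\lambda\mu}(0)\bigr)$ equals the number of zeros of $\Delta(\cdot;\gamma,\lambda,\mu)$ in $(-\infty,0)$, and since $V^{\mathrm{ees}}_{\gamma\lambda\mu}$ has rank at most four this number never exceeds $4$, in agreement with $\zeta\le4$.

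Next I would identify the \emph{threshold determinant with $Q^-$}: after clearing the (convergent) denominators, $\lim_{z\to0^-}\Delta$ should be a positive multiple of $Q^{-}(\gamma,\lambda,\mu)$. The factorisation \eqref{polynomial:Q}, namely $Q^{-}=(\gamma-4)\bigl(\lambda Q_0^{-}-Q_1^{-}\bigr)-8Q_0^{-}$, is exactly what a Schur-complement reduction of the $4\times4$ system produces: the $\mu$-dependent $2\times2$ block built from $\phi_2,\phi_3$ contributes the quadratics $Q_0^{-},Q_1^{-}$ (whose roots are the $\mu_0^{\pm},\mu_1^{\pm}$ of \eqref{roots0}--\eqref{roots1}), while the $\phi_0,\phi_1$ block supplies the affine factors in $\gamma$ and $\lambda$ together with the constants $4$ and $8$ coming from the relevant threshold integrals $G_{ij}(0)$. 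Consequently the surfaces $\Gamma^{-}_{j}$ of Lemma~\ref{lem:function_gamma}, on which $Q^{-}=0$, are exactly the parameter values for which $\Delta$ has a zero sitting at $z=0$, i.e. for which an eigenvalue is created or absorbed there. Now for the \emph{count}: by Theorem~\ref{teo:constant} the integer $n_-$ is constant on each $\cC^{-}_{\zeta}$, so it suffices to pin its value down once per component. The all-positive octant $\{\gamma,\lambda,\mu>0\}$ lies in $\cC^{-}_{0}$, and there $V^{\mathrm{ees}}_{\gamma\lambda\mu}\ge0$, whence $H^{\mathrm{ees}}_{\gamma\lambda\mu}(0)\ge H_0(0)\ge0$ and $n_-=0$; this anchors $\cC^{-}_{0}$. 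Moving outward along a path crossing the surfaces $\Gamma^{-}_{1},\Gamma^{-}_{2},\Gamma^{-}_{3},\Gamma^{-}_{4}$ one at a time, in the order in which the nested definition \eqref{def:ees_sets} stacks $\cC^{-}_{0},\dots,\cC^{-}_{4}$, and using the monotonicity of $G$ in $z$ together with the simplicity of the zero of $Q^{-}$ transverse to each surface, I verify that at each crossing exactly one zero of $\Delta$ detaches from $z=0$ into $(-\infty,0)$, so that $n_-$ increases by exactly one. Hence $n_-=\zeta$ throughout $\cC^{-}_{\zeta}$, and the symmetry of the first step upgrades this to the full statement.

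The principal difficulty I anticipate is the threshold analysis itself: because the binding in two dimensions is only logarithmic, proving that precisely one eigenvalue — no more, no fewer — emerges at each crossing of a surface $\Gamma^{-}_{j}$ requires a careful expansion of $\Delta(z)$ near $z=0$ that separates the diverging $v$-direction from the finite complementary block, and a proof that the derivative of $Q^{-}$ transverse to $\Gamma^{-}_{j}$ does not vanish. Establishing that the renormalised threshold determinant is genuinely a constant-sign multiple of $Q^{-}$, with the constants $4$ and $8$ in \eqref{polynomial:Q} identified as the specific values of the threshold integrals $G_{ij}(0)$, is the other computation on which the entire count rests.
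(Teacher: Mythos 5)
Your symmetry reduction is correct and is a nice touch the paper does not make explicit: the shift $p\mapsto p+(\pi,\pi)$ does preserve $L^{2,\mathrm{ees}}(\T^2)$, sends $\cE_0$ to $8-\cE_0$, fixes each rank-one piece of $V^{\mathrm{ees}}_{\gamma\lambda\mu}$ up to a sign on $\cos p_1+\cos p_2$, and together with $Q^{+}(\gamma,\lambda,\mu)=Q^{-}(-\gamma,-\lambda,-\mu)$ it legitimately reduces the ``$+$'' half to the ``$-$'' half. The Birman--Schwinger setup, the use of Theorem~\ref{teo:constant} to propagate a count through each component, and the anchoring of $\cC^{-}_{0}$ by positivity of the potential are all sound and consistent with the paper's framework (one small imprecision: by Lemma~\ref{lem:asymp_det} the determinant does not have a finite limit at $z=0$ when $Q^{-}\neq0$; $Q^{-}$ is the coefficient of the $\ln(-z)$ divergence, not a limit ``after clearing denominators'').

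The genuine gap is the counting step. Your entire determination of $n_-$ on $\cC^{-}_{1},\dots,\cC^{-}_{4}$ rests on the claim that at each transversal crossing of a surface $\Gamma^{-}_{j}$ exactly one zero of $\Delta^{\mathrm{ees}}_{\gamma\lambda\mu}$ detaches from $z=0$, and you yourself defer the proof of this to an unperformed threshold expansion. What the asymptotics of Lemma~\ref{lem:asymp_det} actually give for free is only a parity statement: since $\Delta\to1$ at $-\infty$ and $\to\pm\infty$ at $0^-$ according to $\mathrm{sign}\,Q^{-}$, the number of zeros below $0$ changes by an \emph{odd} number at each crossing. Combined with $n_-\le4$ this leaves, e.g., $n_-\in\{1,3\}$ on $\cC^{-}_{1}$ and $n_-\in\{0,2,4\}$ on $\cC^{-}_{4}$, so the theorem is not yet proved; to exclude the other values you must control both the direction and the size of the jump, which is precisely the hard analysis you postpone. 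The paper avoids this issue entirely: it pins down $\cC^{-}_{0},\cC^{-}_{1},\cC^{-}_{2}$ by evaluating at explicit points with $\mu=0$, where the perturbation has rank two and the count is known from Lemma~\ref{lem:simple2}, and for $\cC^{-}_{4}$ it runs a rank-by-rank interlacing argument --- showing $\Delta^{\mathrm{ees}}_{00\mu}$ has two zeros, that $\Delta^{\mathrm{ees}}_{0\lambda\mu}$ changes sign at those zeros (via perfect-square identities for the cofactors $A_{ij}$) and hence has three interlacing zeros, and finally that $\Delta^{\mathrm{ees}}_{\gamma\lambda\mu}$ has four. If you want to salvage your route without that interlacing machinery, you would need to add an argument forcing the jump at each crossing to be $+1$; one option is to choose the path with $\gamma,\lambda,\mu$ all non-increasing, so that $V^{\mathrm{ees}}_{\gamma\lambda\mu}$ decreases in the operator order and the minimax values $e_n$ are non-increasing, whence $n_-$ cannot decrease, and then the parity constraint together with $n_-\le4$ over four crossings forces unit jumps --- but some such supplement must be supplied; as written the proposal does not establish the exact counts.
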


Let
$$
\mathcal{P}^{\pm}_{\zeta}:=\mathcal{C}^{\pm}_{\zeta}\cup
\Gamma^{\pm}_{\zeta+1} \quad\text{if \, $\zeta=0,1,2,3$}\quad
\text{and \, }\mathcal{P}^{\pm}_{4}:=\mathcal{C}^{\pm}_{4}.
$$
By construction, the set $\{
\mathcal{P}^{\pm}_{0},\mathcal{P}^{\pm}_{1},\mathcal{P}^{\pm}_{2},
\mathcal{P}^{\pm}_{3},\mathcal{P}^{\pm}_{4} \}$ is a partition of
the space $\R^3$.
\begin{corollary}\label{corr:eig_of_H(0)}
The number of eigenvalues of the operator
$H^{\mathrm{ees}}_{\gamma\lambda\mu}(0),$ that lie above (resp.
below) the essential spectrum remains constant in each set
$\mathcal{P}^+\in \{
\mathcal{P}^{+}_{0},\mathcal{P}^{+}_{1},\mathcal{P}^{+}_{2},
\mathcal{P}^{+}_{3},\mathcal{P}^{+}_{4} \}$ (resp. in each set
$\mathcal{P}^-\in \{
\mathcal{P}^{-}_{0},\mathcal{P}^{-}_{1},\mathcal{P}^{-}_{2},
\mathcal{P}^{-}_{3},\mathcal{P}^{-}_{4} \}$).
\end{corollary}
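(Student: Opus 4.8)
The plan is to derive the corollary directly from Theorem \ref{teo:ee,s} by a boundary analysis on the separating surfaces $\Gamma^{\pm}_{\zeta+1}$. Since $\mathcal{P}^{\pm}_{\zeta}=\mathcal{C}^{\pm}_{\zeta}\cup\Gamma^{\pm}_{\zeta+1}$ for $\zeta=0,1,2,3$ while $\mathcal{P}^{\pm}_{4}=\mathcal{C}^{\pm}_{4}$, and since Theorem \ref{teo:ee,s} already fixes the count of eigenvalues above (resp. below) the essential spectrum to be exactly $\zeta$ on the open component $\mathcal{C}^{\pm}_{\zeta}$, the case $\zeta=4$ is immediate and the only remaining task is to show that on each surface $\Gamma^{\pm}_{\zeta+1}$ the number of eigenvalues of $H^{\mathrm{ees}}_{\gamma\lambda\mu}(0)$ lying \emph{strictly} above (resp. below) the essential spectrum is still $\zeta$.

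First I would recall, from the Birman--Schwinger reduction underlying the proof of Theorem \ref{teo:ee,s}, that because $V^{\mathrm{ees}}_{\gamma\lambda\mu}$ has rank four (its range being spanned by the four functions appearing in \eqref{Vees}), the eigenvalues of $H^{\mathrm{ees}}_{\gamma\lambda\mu}(0)$ lying outside $[\cE_{\min}(0),\cE_{\max}(0)]=[0,8]$ are precisely the points $z\notin[0,8]$ where the associated $4\times4$ Fredholm determinant vanishes. The key identification is that $Q^{+}(\gamma,\lambda,\mu)$ (resp. $Q^{-}(\gamma,\lambda,\mu)$) equals, up to a strictly positive factor, the value of this determinant at the upper threshold $z=\cE_{\max}(0)=8$ (resp. the lower threshold $z=\cE_{\min}(0)=0$). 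Hence the equation $Q^{\pm}(\gamma,\lambda,\mu)=0$, which by Lemma \ref{lem:function_gamma} and \eqref{polynomial:Q} defines exactly the surfaces $\Gamma^{\pm}_{j}$, characterizes those parameter triples for which a candidate eigenvalue sits \emph{at} the relevant threshold rather than strictly beyond the essential spectrum.

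Next I would track the eigenvalue branches by continuity in $(\gamma,\lambda,\mu)$, using the constancy on open components (Theorem \ref{teo:constant}) and the threshold monotonicity of Lemma \ref{lem:monoton_xos_qiymat}. Crossing $\Gamma^{\pm}_{\zeta+1}$ from $\mathcal{C}^{\pm}_{\zeta}$ (where there are $\zeta$ eigenvalues strictly beyond the threshold) into $\mathcal{C}^{\pm}_{\zeta+1}$ (where there are $\zeta+1$), the $(\zeta+1)$-st branch emerges from the threshold exactly on $\Gamma^{\pm}_{\zeta+1}$; there, by the determinant identification above, it coincides with $\cE_{\max}(0)$ (resp. $\cE_{\min}(0)$) and therefore lies \emph{in}, not beyond, the essential spectrum. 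Consequently exactly $\zeta$ eigenvalues remain strictly above (resp. below) on $\Gamma^{\pm}_{\zeta+1}$, and combining this with Theorem \ref{teo:ee,s} yields the asserted constant value $\zeta$ throughout $\mathcal{P}^{\pm}_{\zeta}$.

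The main obstacle will be the rigorous identification of $Q^{\pm}$ with the Birman--Schwinger determinant evaluated at the threshold, and the verification that the vanishing $Q^{\pm}=0$ forces the emerging branch to meet the threshold from the correct side, so that it is genuinely absorbed into the essential spectrum and does not persist as an isolated eigenvalue sitting on the surface. This will rely on the sign behaviour of $Q^{\pm}$ across adjacent components recorded in Remark \ref{Rem_Q_sigdef}, together with a check that no second branch simultaneously touches the opposite threshold while $\Gamma^{\pm}_{\zeta+1}$ is traversed; both facts should follow from the strict ordering of the roots in \eqref{ineq:roots} and the factored structure of $Q^{\pm}$ in \eqref{polynomial:Q}.
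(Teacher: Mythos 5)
Your reduction --- Theorem \ref{teo:ee,s} settles the count on each open component $\mathcal{C}^{\pm}_{\zeta}$, so the only remaining issue is the count on the attached boundary surface $\Gamma^{\pm}_{\zeta+1}$ --- is the right one, and it aims at more than the paper itself supplies (the paper's proof is the one-line assertion that the integer-valued function $n_{\pm}(H^{\mathrm{ees}}_{\gamma\lambda\mu}(0))$ is continuous, hence constant, on each connected set $\mathcal{P}^{\pm}_{\zeta}$). However, your ``key identification'' is factually wrong: by Lemma \ref{lem:asymp_det}, $Q^{\pm}(\gamma,\lambda,\mu)$ is the coefficient of the \emph{logarithmic singularity} of $\Delta^{\mathrm{ees}}_{\gamma\lambda\mu}(z)$ at the threshold, not the value of the determinant there up to a positive factor. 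For $Q^{\pm}\neq 0$ the determinant diverges to $+\infty$ or $-\infty$ as $z$ approaches the threshold from outside $[0,8]$; the surfaces $\Gamma^{\pm}_{j}$, where $Q^{\pm}=0$, are precisely the loci where this divergence is switched off and the determinant has a finite limit. Your sign bookkeeping built on reading $Q^{\pm}$ as a threshold value therefore does not go through as written. The appeal to Lemma \ref{lem:monoton_xos_qiymat} is also misplaced: that lemma concerns monotonicity in the quasimomentum $K$ at fixed couplings and says nothing about variation of $(\gamma,\lambda,\mu)$.

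More importantly, the decisive step is asserted rather than proved. The entire content of the corollary beyond Theorem \ref{teo:ee,s} is the claim that on $\Gamma^{\pm}_{\zeta+1}$ exactly $\zeta$, and not $\zeta+1$, zeros of $\Delta^{\mathrm{ees}}_{\gamma\lambda\mu}$ lie strictly outside $[0,8]$; you defer exactly this to the ``main obstacle'' paragraph. What comes cheaply is only one inequality: since $\sigma_{\mathrm{ess}}$ is independent of the couplings and the perturbation is finite rank and norm continuous, $n_{\pm}$ is lower semicontinuous in $(\gamma,\lambda,\mu)$, so approaching a point of $\Gamma^{\pm}_{\zeta+1}$ from inside $\mathcal{C}^{\pm}_{\zeta}$ yields $n_{\pm}\le\zeta$ on the surface. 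The matching bound $n_{\pm}\ge\zeta$ requires showing that the $\zeta$ eigenvalues present on $\mathcal{C}^{\pm}_{\zeta}$ do not themselves collapse into the threshold as the surface is reached; this needs an argument carried out \emph{on} the surface (for instance, the interlacing of the zeros of $\Delta^{\mathrm{ees}}$ with those of the lower-rank determinants used in the proof of Theorem \ref{teo:ee,s}, combined with the finiteness of the threshold limit of $\Delta^{\mathrm{ees}}$ when $Q^{\pm}=0$), and it is absent from your proposal.
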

\begin{proof}
The function $n_{+}(H^{\mathrm{ees}}_{\gamma\lambda\mu}(0)$ (resp.
$n_{-}(H^{\mathrm{ees}}_{\gamma\lambda\mu}(0)$)  is  continuous  on
each $ \mathcal{P}^{+}_{\zeta}$ (resp. on each
$\mathcal{P}^{-}_{\zeta}),$ $\zeta=0,1,2,3$). This proves the
assertion.
\end{proof}

\subsection{The total discrete spectrum
of $H_{\gamma\lambda\mu}(K)$} Introduce the following sets:
$$
\cG^{\pm}_{n}:=\bigcup_{(\alpha,\beta,\zeta)\in\cM_n}\,
\cS^\pm_{\alpha}\cap \cA^{\pm}_{\beta} \cap \cC^{\pm}_{\zeta}, \quad
n\in\N\cup\{0\}, \quad n\leq 7,
$$
where
\begin{align*}
\cM_n=&\bigl\{(\alpha,\beta,\zeta)\,\big|\, \alpha=0,1,\,
\beta=0,1,2, \,\, \zeta=0,1,2,3,4,\,\, \alpha+\beta+\zeta=n \bigr\}.
\end{align*}

Let
\begin{equation}\label{def:G_mn}
\mathbb{G}_{mn}=\cG^{-}_{m}\cap \cG^{+}_{n}, \quad
m,n=0,1,2,3,4,5,6,7.
\end{equation}
In view of \eqref{separation_K0}, Theorems \ref{teo:eigs_of_H_oos},
\ref{teo:eigens_of_H_ea} and \ref{teo:ee,s} lead to the following
conclusion.

\begin{corollary}\label{teo:eig_of_H(0)}
Let $\gamma,\lambda,\mu\in \mathbb{R}$  and $ m, n=0,1,2,3,4,5,6,7$.
If  $m+n\leq 7$ and  $( \gamma,\lambda,\mu)\in \mathbb{G}_{mn} $
then
$$
n_{-}\left(H_{ \gamma\lambda\mu}(0)\right)= m \quad \text{and} \quad
n_{+}\left( H_{ \gamma\lambda\mu}(0)\right)= n.
$$
\end{corollary}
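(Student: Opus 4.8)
The plan is to assemble the corollary from the orthogonal decomposition \eqref{separation_K0} together with the three separate eigenvalue-counting theorems already established for the reducing parts. First I would record that, for every $(\gamma,\lambda,\mu)\in\R^3$, each of the three restrictions $H^{\mathrm{oos}}_{\mu}(0)$, $H^{\mathrm{ea}}_{\lambda\mu}(0)$ and $H^{\mathrm{ees}}_{\gamma\lambda\mu}(0)$ is a finite-rank perturbation of $H_0(0)$ restricted to the corresponding symmetry sector. Since $\cE_0$ is invariant under all the symmetries defining these sectors, each such restriction of $H_0(0)$ is again multiplication by $\cE_0$; as the level sets of $\cE_0$ have measure zero while $\{p:|\cE_0(p)-t|<\varepsilon\}$ has positive measure for every $t\in[0,8]$, this restricted multiplication operator has purely continuous spectrum equal to $[\cE_{\min}(0),\cE_{\max}(0)]=[0,8]$. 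By Weyl's theorem the essential spectrum of each part therefore coincides with $[0,8]$, exactly as for the whole operator in \eqref{eq:ess_spec}. Consequently the thresholds $0$ and $8$ are the same in all three sectors and in the full space, so ``below'' and ``above'' the essential spectrum are unambiguous throughout.

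Next I would use that $L^{2,\mathrm{oos}}(\T^2)$, $L^{2,\mathrm{ea}}(\T^2)$ and $L^{2,\mathrm{ees}}(\T^2)$ are mutually orthogonal reducing subspaces (Lemma \ref{lem:subspaces}). Hence any eigenvalue of $H_{\gamma\lambda\mu}(0)$ lying outside $[0,8]$ is an eigenvalue of at least one restriction, and its multiplicity for the full operator is the sum of its multiplicities in the three sectors. Counting multiplicities, this yields the additivity
\begin{equation*}
n_{\pm}\bigl(H_{\gamma\lambda\mu}(0)\bigr)
= n_{\pm}\bigl(H^{\mathrm{oos}}_{\mu}(0)\bigr)
+ n_{\pm}\bigl(H^{\mathrm{ea}}_{\lambda\mu}(0)\bigr)
+ n_{\pm}\bigl(H^{\mathrm{ees}}_{\gamma\lambda\mu}(0)\bigr),
\end{equation*}
where $n_{\pm}(\cdot)$ denotes the number of eigenvalues above resp.\ below the essential spectrum, counted with multiplicity.

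Finally I would invoke the three counting theorems and translate set membership. Since $\cS^{\pm}_0,\cS^{\pm}_1$ partition $\R$, the sets $\cA^{\pm}_0,\cA^{\pm}_1,\cA^{\pm}_2$ partition $\R^2$, and $\cC^{\pm}_0,\dots,\cC^{\pm}_4$ are the pairwise disjoint open components of Lemma \ref{lem:function_gamma}, every point $(\gamma,\lambda,\mu)$ lying off the threshold surfaces $\Gamma^{-}_j$ determines a unique triple $(\alpha,\beta,\zeta)$ with $\mu\in\cS^{-}_{\alpha}$, $(\lambda,\mu)\in\cA^{-}_{\beta}$, $(\gamma,\lambda,\mu)\in\cC^{-}_{\zeta}$. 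By Theorems \ref{teo:eigs_of_H_oos}, \ref{teo:eigens_of_H_ea} and \ref{teo:ee,s} the three summands above equal $\alpha$, $\beta$, $\zeta$, so $n_{-}(H_{\gamma\lambda\mu}(0))=\alpha+\beta+\zeta$, and analogously for the sign ``$+$''. By the definition of $\cG^{-}_{m}$, the condition $(\gamma,\lambda,\mu)\in\cG^{-}_{m}$ means precisely that this unique triple satisfies $\alpha+\beta+\zeta=m$, whence $n_{-}=m$; likewise $(\gamma,\lambda,\mu)\in\cG^{+}_{n}$ gives $n_{+}=n$, and since $\mathbb{G}_{mn}=\cG^{-}_{m}\cap\cG^{+}_{n}$ both equalities hold on $\mathbb{G}_{mn}$. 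I do not expect a genuine obstacle: the only step requiring care is the additivity of eigenvalue multiplicities across the three orthogonal sectors together with the coincidence of their essential spectra, after which the statement is pure bookkeeping against the partition definitions; the bound $m+n\le 7$ is in fact automatic, since the ranks $1$, $2$, $4$ of $V^{\mathrm{oos}}_{\mu}$, $V^{\mathrm{ea}}_{\lambda\mu}$, $V^{\mathrm{ees}}_{\gamma\lambda\mu}$ cap the total eigenvalue count at $7$.
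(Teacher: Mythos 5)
Your proposal is correct and follows essentially the same route as the paper: the paper derives this corollary directly from the reducing-subspace decomposition \eqref{separation_K0} together with Theorems \ref{teo:eigs_of_H_oos}, \ref{teo:eigens_of_H_ea} and \ref{teo:ee,s}, which is exactly your additivity-plus-bookkeeping argument. Your explicit verification that the essential spectra of the three restrictions all coincide with $[0,8]$, and your remark that $m+n\le 7$ is automatic from the rank bound, are details the paper leaves implicit but they match its intent.
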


The next theorem  gives estimates for the respective numbers
$n_\pm({H}_{\gamma\lambda\mu}(K))$ of eigenvalues of the operator
$H_{\gamma\lambda\mu}(K)$, for all $ K\in\T^2$, depending only on
the parameters  $\gamma,\lambda,\mu \in \R.$

\begin{theorem}\label{teo:bound_K}
Let $\gamma,\lambda,\mu\in \mathbb{R}$  and $ m, n=0,1,2,3,4,5,6,7$.
\begin{itemize}
\item[(i)]If $( \gamma,\lambda,\mu)\in \mathbb{G}_{mn} $ and $m+n<7$
then
$$
n_{-}\left(H_{ \gamma\lambda\mu}(K)\right)\geq m \quad \text{and}
\quad n_{+}\left( H_{ \gamma\lambda\mu}(K)\right)\geq n.
$$
\item[(ii)]If $( \gamma,\lambda,\mu)\in \mathbb{G}_{mn} $ and $m+n=7$,
then
$$
n_{-}\left(H_{ \gamma\lambda\mu}(K)\right)= m \quad \text{and} \quad
n_{+}\left( H_{ \gamma\lambda\mu}(K)\right)= n.
$$
\end{itemize}
\end{theorem}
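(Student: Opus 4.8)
The plan is to combine the decomposition of the full fiber Hamiltonian $H_{\gamma\lambda\mu}(0)$ into its three reducing parts with the comparison principle of Theorem~\ref{teo:disc_Kvs0}, and then to use the rank count to close the gap in part~(ii). First I would recall that by Lemma~\ref{lem:subspaces}(ii) and the spectral splitting \eqref{separation_K0}, the eigenvalues of $H_{\gamma\lambda\mu}(0)$ lying below (resp.\ above) the essential spectrum are exactly the union of the corresponding eigenvalues of the three parts $H_\mu^{\mathrm{oos}}(0)$, $H_{\lambda\mu}^{\mathrm{ea}}(0)$ and $H_{\gamma\lambda\mu}^{\mathrm{ees}}(0)$, counted with multiplicity. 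By Theorems~\ref{teo:eigs_of_H_oos}, \ref{teo:eigens_of_H_ea} and \ref{teo:ee,s}, on the set $\mathbb{G}_{mn}=\cG^-_m\cap\cG^+_n$ these three parts contribute, respectively, $\alpha$, $\beta$, $\zeta$ eigenvalues below with $\alpha+\beta+\zeta=m$ and $\alpha'+\beta'+\zeta'=n$ above; hence Corollary~\ref{teo:eig_of_H(0)} gives $n_-(H_{\gamma\lambda\mu}(0))=m$ and $n_+(H_{\gamma\lambda\mu}(0))=n$. This is the input I would feed into the $K$-comparison.

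\textbf{For part~(i)} the argument is then immediate: Theorem~\ref{teo:disc_Kvs0} asserts that for \emph{any} $K\in\T^2$ the operator $H_{\gamma\lambda\mu}(K)$ has at least $n_-(H_{\gamma\lambda\mu}(0))=m$ eigenvalues below and at least $n_+(H_{\gamma\lambda\mu}(0))=n$ eigenvalues above its essential spectrum. So the two lower bounds $n_-(H_{\gamma\lambda\mu}(K))\ge m$ and $n_+(H_{\gamma\lambda\mu}(K))\ge n$ drop out with no further work. The only role of the hypothesis $m+n<7$ here is that part~(i) only claims inequalities, which are valid regardless; the strict inequality is what distinguishes it from the sharp case in~(ii).

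\textbf{The crux is part~(ii),} where $m+n=7$ must be upgraded from inequalities to equalities. The key structural fact is that the perturbation $V_{\gamma\lambda\mu}$ has rank at most seven, as read off from the seven rank-one terms in \eqref{moment_poten} (equivalently \eqref{moment_poten_2}). Consequently, for every $K\in\T^2$ the total number of eigenvalues of $H_{\gamma\lambda\mu}(K)$ strictly outside the essential spectrum — below plus above — cannot exceed the rank of the perturbation, i.e.\ cannot exceed $7$. This is exactly the statement recorded just before Lemma~\ref{lem:monoton_xos_qiymat}, namely that $e_n(K;\gamma,\lambda,\mu)=\cE_{\min}(K)$ and $E_n(K;\gamma,\lambda,\mu)=\cE_{\max}(K)$ for all $n\ge 8$. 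Therefore, when $m+n=7$, combining the lower bounds from part~(i),
\begin{equation*}
n_-(H_{\gamma\lambda\mu}(K))\ge m,\qquad n_+(H_{\gamma\lambda\mu}(K))\ge n,
\end{equation*}
with the global rank ceiling $n_-(H_{\gamma\lambda\mu}(K))+n_+(H_{\gamma\lambda\mu}(K))\le 7=m+n$ forces both inequalities to be equalities. I expect the main obstacle to be making the rank bound on the combined (below-plus-above) eigenvalue count fully rigorous: one must argue that a rank-$r$ additive perturbation can move at most $r$ eigenvalues out through the two thresholds \emph{simultaneously}, counting multiplicities on both sides together. The cleanest way is through the minimax quantities \eqref{enK}--\eqref{EnK}: choosing the test subspace spanned by a basis of $\Ran V_{\gamma\lambda\mu}$ pins $e_n$ to $\cE_{\min}(K)$ and $E_n$ to $\cE_{\max}(K)$ once $n\ge 8$, exactly as indicated in the text, so that at most seven indices $n$ can yield a value strictly below $\cE_{\min}(K)$ or strictly above $\cE_{\max}(K)$ in total. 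Once this ceiling is in place, the squeeze is purely arithmetic and the sharp equalities follow.
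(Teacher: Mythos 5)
Your proposal is correct and follows essentially the same route as the paper: the exact count $n_\pm(H_{\gamma\lambda\mu}(0))=m,n$ from Corollary~\ref{teo:eig_of_H(0)}, the lower bounds for all $K$ from Theorem~\ref{teo:disc_Kvs0}, and the rank-seven ceiling on $V_{\gamma\lambda\mu}$ via the minimax principle to force equality when $m+n=7$. The point you flag as the "main obstacle" — that at most seven eigenvalues can leave the essential spectrum below and above combined — is handled in the paper exactly as you suggest, through the observation in Subsection~\ref{sec:Rel0K} that $e_n(K;\gamma,\lambda,\mu)=\cE_{\min}(K)$ and $E_n(K;\gamma,\lambda,\mu)=\cE_{\max}(K)$ for all $n\ge 8$.
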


\section{Auxiliary statements}\label{sec:auxiliary}

\subsection{The Lippmann--Schwinger operator}

Let $\{\alpha^{\mathrm{ees}}_{i},\,\,i=1,2,3,4\} $ be a  system of
vectors in $L^{2,\mathrm{ees}}(\T^2)$, with
\begin{equation}\label{ons2}
\begin{array}{cl}
&\alpha_{1}^{\mathrm{ees}}(p)=\dfrac{1}{2\pi},\quad  \quad  \quad
\alpha_{2}^{\mathrm{ees}}(p)=\dfrac{\cos p_{1}+\cos
p_{2}}{2\pi},\\
&\alpha_{3}^{\mathrm{ees}}(p)=\dfrac{\cos 2p_{1}+\cos 2p_{2}
}{2\pi}, \quad  \alpha_{4}^{\mathrm{ees}}(p)=\dfrac{\cos p_{1}\cos
p_{2} }{\pi}.
\end{array}
\end{equation}
One easily verifies by inspection that the vectors \eqref{ons2}  are
orthonormal in $L^{2,\mathrm{ees}}(\T^2)$. By using the orthonormal
system \eqref{ons2}  one obtains
\begin{align}\label{repr1}
&{V}^{\mathrm{ees}}_{\gamma\lambda\mu}{f}=\gamma({f},\alpha_{1}^{\mathrm{ees}})
\alpha_{1}^{\mathrm{ees}}+\frac{\lambda}{2}({f},\alpha_{2}^{\mathrm{ees}})
\alpha_{2}^{\mathrm{ees}}+
\frac{\mu}{2}({f},\alpha_{3}^{\mathrm{ees}})\alpha_{3}^{\mathrm{ees}}+
\frac{\mu}{2}({f},\alpha_{4}^{\mathrm{ees}})\alpha_{4}^{\mathrm{ees}}
\end{align}
where  $(\cdot,\cdot)$ is the inner product in
$L^{2,\mathrm{ees}}(\T^2).$ For any $z\in\C \setminus[0,\,8]$ we
define (the transpose of) the Lippmann-Schwinger operator (see.,
e.g., \cite{LSchwinger:1950}) as
\begin{equation*}\label{LSchw1}
{B}_{\gamma\lambda\mu}^{\mathrm{ees}}(0,z)=
-{V}_{\gamma\lambda\mu}^{\mathrm{ees}}{R}_0(0,z),
\end{equation*}
where  ${R}_0(0,z):= [{H}_0(0)-zI]^{-1},\,
z\in\mathbb{C}\setminus[0,\,8]$, is the resolvent of the operator
${H}_0(0)$.
\begin{lemma}\label{eigen-eigenvalue}
For every $\gamma,\lambda,\mu \in \R$  the number $z\in
\mathbb{C}\setminus [0,\,8]$ is an eigenvalue of
${H}_{\gamma\lambda\mu}^{\mathrm{ees}}(0)$ if and only if the number
$1$ is an eigenvalue for
 ${B}_{\gamma\lambda\mu}^{\mathrm{ees}}(0,z)$.
\end{lemma}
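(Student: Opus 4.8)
The plan is to establish the standard Birman--Schwinger correspondence between the eigenvalue equation for ${H}_{\gamma\lambda\mu}^{\mathrm{ees}}(0)$ and the fixed-point equation $\phi={B}_{\gamma\lambda\mu}^{\mathrm{ees}}(0,z)\phi$. The crucial preliminary observation is that, since the spectrum of the multiplication operator $H_0(0)$ (by $\cE_0(p)=2\epsilon(p)$) on $L^{2,\mathrm{ees}}(\T^2)$ equals $[0,8]$ by \eqref{eq:ess_spec}, for every $z\in\mathbb{C}\setminus[0,8]$ the resolvent ${R}_0(0,z)=[H_0(0)-zI]^{-1}$ exists as a bounded, everywhere-defined, and injective operator on $L^{2,\mathrm{ees}}(\T^2)$ (it is multiplication by $(\cE_0(p)-z)^{-1}$, a bounded nonvanishing function). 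This invertibility and injectivity drive both directions of the equivalence.

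For the forward direction I would assume $z\in\mathbb{C}\setminus[0,8]$ is an eigenvalue of ${H}_{\gamma\lambda\mu}^{\mathrm{ees}}(0)$ with a nonzero eigenvector $\psi$, so that $(H_0(0)-z)\psi=-{V}_{\gamma\lambda\mu}^{\mathrm{ees}}\psi$. Applying ${R}_0(0,z)$ gives $\psi=-{R}_0(0,z){V}_{\gamma\lambda\mu}^{\mathrm{ees}}\psi$. Setting $\phi:={V}_{\gamma\lambda\mu}^{\mathrm{ees}}\psi$ and then applying ${V}_{\gamma\lambda\mu}^{\mathrm{ees}}$ to both sides yields $\phi=-{V}_{\gamma\lambda\mu}^{\mathrm{ees}}{R}_0(0,z)\phi={B}_{\gamma\lambda\mu}^{\mathrm{ees}}(0,z)\phi$, so $1$ is an eigenvalue of ${B}_{\gamma\lambda\mu}^{\mathrm{ees}}(0,z)$ with eigenvector $\phi$, provided $\phi\neq0$. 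Nontriviality of $\phi$ follows because $\phi=0$ would force $(H_0(0)-z)\psi=0$ with $\psi\neq0$, contradicting $z\notin\sigma(H_0(0))$.

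For the converse I would start from a nonzero $\phi$ with ${B}_{\gamma\lambda\mu}^{\mathrm{ees}}(0,z)\phi=\phi$, that is, $-{V}_{\gamma\lambda\mu}^{\mathrm{ees}}{R}_0(0,z)\phi=\phi$, and set $\psi:=-{R}_0(0,z)\phi$. Injectivity of ${R}_0(0,z)$ guarantees $\psi\neq0$. A direct computation then gives $(H_0(0)-z)\psi=-\phi$, while ${V}_{\gamma\lambda\mu}^{\mathrm{ees}}\psi=-{V}_{\gamma\lambda\mu}^{\mathrm{ees}}{R}_0(0,z)\phi=\phi$ by the fixed-point relation; adding these shows $(H_0(0)+{V}_{\gamma\lambda\mu}^{\mathrm{ees}}-z)\psi=0$, i.e. ${H}_{\gamma\lambda\mu}^{\mathrm{ees}}(0)\psi=z\psi$, so $z$ is an eigenvalue of ${H}_{\gamma\lambda\mu}^{\mathrm{ees}}(0)$.

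Since all operators involved are bounded and the argument is a purely algebraic manipulation of the resolvent relation, there is no genuine analytic obstacle here. The only point requiring care --- and the step I would flag --- is verifying that neither transformed vector ($\phi$ in the forward direction, $\psi$ in the converse) collapses to zero; in both cases this is secured precisely by the hypothesis $z\notin[0,8]=\sigma(H_0(0))$ and the resulting injectivity of ${R}_0(0,z)$.
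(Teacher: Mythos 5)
Your argument is correct and is precisely the standard Lippmann--Schwinger/Birman--Schwinger equivalence that the paper itself invokes: the paper omits the proof, citing it as standard, and your two-directional computation (with the key nondegeneracy checks $\phi=V^{\mathrm{ees}}_{\gamma\lambda\mu}\psi\neq0$ and $\psi=-R_0(0,z)\phi\neq0$, both secured by $z\notin\sigma(H_0(0))=[0,8]$) is exactly that standard proof. Nothing is missing.
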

The proof of this lemma is quite standard (see., e.g.,
\cite{Albeverio:1988}). Thus, we skip~it.
\medskip

The representation \eqref{repr1} yields the equivalence of the
Lippmann-Schwinger equation
\begin{align*}
{B}_{\gamma\lambda\mu}^{\mathrm{ees}}(0,z){\varphi}={\varphi},
\,\,\,{\varphi} \in L^{2,\mathrm{ees}}(\T^2)
\end{align*}
to the following algebraic linear system in the variables
$x_i:=({\varphi},\alpha_{i}^{\mathrm{ees}}),\,\,i=1,2,3,4$:
\begin{equation}\label{system}
\left\lbrace\begin{array}{ccc}
[1+2\gamma a_{11}(z)]x_{1}+\lambda a_{12}(z)x_2+  \mu a_{13}(z)x_3+\mu a_{14}(z)x_4=0, \\
2\gamma  a_{12}(z) x_{1}+ [1+\lambda a_{22}(z)]x_{2}+ \mu a_{23}(z) x_{3}+\mu  a_{24}(z)x_{4}=0,\\
2\gamma  a_{13}(z) x_{1}+ \lambda a_{23}(z)x_{2}+ [1+\mu a_{33}(z) x_{3}]+\mu  a_{34}(z)x_{4}=0,\\
2\gamma  a_{14}(z) x_{1}+ \lambda a_{24}(z)x_{2}+ \mu a_{34}(z) x_{3}+[1+\mu  a_{44}(z)]x_{4}=0,\\
\end{array}\right.
\end{equation}
where
\begin{align}\label{def:functions}
 a_{ij}(z):=&  \frac{1}{2}\int_{\T^2} \frac{\alpha^{\mathrm{ees}}_i(p)
 \alpha^{\mathrm{ees}}_{j}(p)\d p}{\cE_0(p)-z},\,\, i,j=1,2,3,4.
\end{align}

\begin{proposition}\label{Rabcd}
For $z\in\mathbb{C}\setminus[0,8]$, the following relations hold:
\begin{align*}
& a_{ji}(z)=a_{ij}(z), \quad i,j=1,2,3,4,\\
&a_{12}(z)=\tfrac{(4-z)}{2}a_{11}(z)-\tfrac{1}{4},\\
&a_{22}(z)=\tfrac{(4-z)}{2}a_{12}(z)=\tfrac{(4-z)^2}{4}a_{11}(z)-
 \tfrac{(4-z)}{8}, \\
&a_{23}(z)=\tfrac{(4-z)}{2}a_{13}(z), \\
&a_{24}(z)=\tfrac{(4-z)}{2} a_{14}(z).
\end{align*}
\end{proposition}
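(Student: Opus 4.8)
The plan is to reduce every identity to a single linear relation between the numerator factor $\cos p_1+\cos p_2$, which enters through $\alpha_2^{\mathrm{ees}}$, and the denominator $\cE_0(p)-z$. Since $\cE_0(p)=4-2\cos p_1-2\cos p_2$, one has the exact identity
\[
\cos p_1+\cos p_2=\frac{4-z}{2}-\frac{\cE_0(p)-z}{2},
\]
valid for every $p\in\T^2$ and every $z\in\C\setminus[0,8]$. The symmetry relation $a_{ji}(z)=a_{ij}(z)$ is immediate from the definition \eqref{def:functions}, since the integrand $\alpha_i^{\mathrm{ees}}\alpha_j^{\mathrm{ees}}/(\cE_0(p)-z)$ is symmetric in the two indices.

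For the three relations involving $a_{2j}(z)$ with $j=1,3,4$ I would substitute the displayed identity into the factor $\cos p_1+\cos p_2$ coming from $\alpha_2^{\mathrm{ees}}$. This splits each integral into two pieces: a piece carrying the factor $\tfrac{4-z}{2}$, in which the copy of $\alpha_2^{\mathrm{ees}}$ is effectively replaced by $\alpha_1^{\mathrm{ees}}$ and which therefore reproduces $a_{1j}(z)$; and a piece carrying the factor $-\tfrac{\cE_0(p)-z}{2}$, in which the denominator cancels, leaving a pure trigonometric integral over $\T^2$. For $a_{12}(z)$ that leftover integral is proportional to $\int_{\T^2}\d p=4\pi^2$ and, after accounting for the normalisation of \eqref{ons2} and the factor $\tfrac12$ in \eqref{def:functions}, produces the constant $-\tfrac14$, giving $a_{12}(z)=\tfrac{4-z}{2}a_{11}(z)-\tfrac14$. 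For $a_{23}(z)$ and $a_{24}(z)$ the leftover terms are proportional to $\int_{\T^2}(\cos 2q_1+\cos 2q_2)\,\d q$ and $\int_{\T^2}\cos q_1\cos q_2\,\d q$, respectively, both of which vanish upon integrating over a full period in each variable; this yields $a_{23}(z)=\tfrac{4-z}{2}a_{13}(z)$ and $a_{24}(z)=\tfrac{4-z}{2}a_{14}(z)$.

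For $a_{22}(z)$ I would apply the same substitution to just one of the two factors $\cos p_1+\cos p_2$ in $(\alpha_2^{\mathrm{ees}})^2$. The $-\tfrac{\cE_0(p)-z}{2}$ piece then leaves a term proportional to $\int_{\T^2}(\cos q_1+\cos q_2)\,\d q=0$, while the $\tfrac{4-z}{2}$ piece reproduces $a_{12}(z)$; hence $a_{22}(z)=\tfrac{4-z}{2}a_{12}(z)$. Substituting the formula for $a_{12}(z)$ already obtained gives the closed expression $a_{22}(z)=\tfrac{(4-z)^2}{4}a_{11}(z)-\tfrac{4-z}{8}$.

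I do not expect any serious analytic obstacle: the whole proposition follows from the one algebraic identity together with the vanishing of the elementary trigonometric integrals over $\T^2$. The only point requiring genuine care is the bookkeeping of the normalisation constants $\tfrac{1}{2\pi}$ and $\tfrac1\pi$ in the vectors \eqref{ons2} and the factor $\tfrac12$ in \eqref{def:functions}, so that each reduced integral is correctly identified with the intended lower-index coefficient $a_{1j}(z)$ and the residual constant in the $a_{12}(z)$ case comes out as exactly $-\tfrac14$.
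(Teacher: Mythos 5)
Your proposal is correct and is precisely the computation the paper leaves implicit: its proof of Proposition \ref{Rabcd} simply states that the relations "follow immediately from the definitions," and the identity $\cos p_1+\cos p_2=\tfrac{4-z}{2}-\tfrac{\cE_0(p)-z}{2}$ together with the vanishing of $\int_{\T^2}(\cos 2q_1+\cos 2q_2)\,\d q$ and $\int_{\T^2}\cos q_1\cos q_2\,\d q$ is exactly the intended route. Your bookkeeping of the normalisations in \eqref{ons2} and the factor $\tfrac12$ in \eqref{def:functions} checks out, including the constant $-\tfrac14$ in the $a_{12}$ relation.
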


\begin{proof}
The proof follows immediately from the definitions of the functions
$a_{ij}$ $ (i,j=1,2,3,4)$, see \eqref{def:functions}.
\end{proof}

Let
\begin{align}\label{determinant-ees}
&\Delta^{\mathrm{ees}}_{\gamma\lambda\mu}(z):=
\det[I-{B}_{\gamma\lambda\mu}^{\mathrm{ees}}(0,z)]\\
&=\begin{vmatrix}
1+2\gamma  a_{11}(z)&\lambda  a_{12}(z)&\mu  a_{13}(z)&\mu  a_{14}(z) \\
2\gamma  a_{12}(z)& 1+\lambda  a_{22}(z)&\mu  a_{23}(z)&\mu  a_{24}(z)\\
2\gamma  a_{13}(z)& \lambda  a_{23}(z)& 1+\mu  a_{33}(z)&\mu  a_{34}(z) \\
2\gamma  a_{14}(z)& \lambda  a_{24}(z)&\mu  a_{34}(z)&1+\mu
a_{44}(z)\nonumber
\end{vmatrix},
\end{align}
where $z\in\mathbb{C}\setminus[0,\,8]$.

The following lemma describes the relations between the function
$\Delta^\mathrm{ees}_{\gamma \lambda \mu  }(\cdot)$ and operator
$H_{\gamma \lambda \mu  }^\mathrm{ees}(0)$.%

\begin{lemma}\label{lem:det_zeros_vs_eigen}
Given $\gamma,\lambda,\mu \in \R$, a number $z\in
\mathbb{R}\setminus[0,8]$ is an eigenvalue of $H_{\gamma \lambda \mu
}^\mathrm{ees}(0)$  of multiplicity $m\ge1$ if and only if it is a
zero of $\Delta^\mathrm{ees}_{\gamma \lambda \mu }(\cdot)$ of
multiplicity $m.$ Moreover, in $\R\setminus [0,8]$ the function
$\Delta^\mathrm{ees}_{\gamma \lambda \mu  }(\cdot)$ has  at most
four zeros.
\end{lemma}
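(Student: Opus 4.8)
The plan is to view $H^{\mathrm{ees}}_{\gamma\lambda\mu}(0)=H_0(0)+V^{\mathrm{ees}}_{\gamma\lambda\mu}$ as a finite-rank perturbation of $H_0(0)$, to identify $\Delta^{\mathrm{ees}}_{\gamma\lambda\mu}(\cdot)$ with the associated perturbation determinant, and to read off the spectral data from its zeros in the resolvent set $\C\setminus[0,8]$ of $H_0(0)$. First I would record the finite-rank reduction. Because the vectors $\alpha^{\mathrm{ees}}_i$ of \eqref{ons2} are orthonormal, the representation \eqref{repr1} exhibits $V^{\mathrm{ees}}_{\gamma\lambda\mu}$ as a self-adjoint operator of rank at most four with range $\linspan\{\alpha^{\mathrm{ees}}_1,\dots,\alpha^{\mathrm{ees}}_4\}$ and diagonal coefficients $\kappa_1=\gamma$, $\kappa_2=\tfrac\lambda2$, $\kappa_3=\kappa_4=\tfrac\mu2$. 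For $z\in\C\setminus[0,8]$ the resolvent $R_0(0,z)$ is bounded, so $B^{\mathrm{ees}}_{\gamma\lambda\mu}(0,z)=-V^{\mathrm{ees}}_{\gamma\lambda\mu}R_0(0,z)$ is finite-rank with range inside the invariant subspace $\linspan\{\alpha^{\mathrm{ees}}_i\}$; hence the Fredholm determinant $\det[I-B^{\mathrm{ees}}_{\gamma\lambda\mu}(0,z)]$ collapses to the determinant of its $4\times4$ restriction. Computing that matrix in the basis $\{\alpha^{\mathrm{ees}}_i\}$ through $\bigl(R_0(0,z)\alpha^{\mathrm{ees}}_j,\alpha^{\mathrm{ees}}_i\bigr)=2a_{ij}(z)$, with $a_{ij}$ given by \eqref{def:functions}, reproduces exactly the expression \eqref{determinant-ees}; in particular $\Delta^{\mathrm{ees}}_{\gamma\lambda\mu}$ is analytic and pole-free on $\C\setminus[0,8]$.

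Next I would pin down the geometric multiplicity. The linear map $\psi\mapsto\bigl((\psi,\alpha^{\mathrm{ees}}_1),\dots,(\psi,\alpha^{\mathrm{ees}}_4)\bigr)$ sends $\ker\bigl(H^{\mathrm{ees}}_{\gamma\lambda\mu}(0)-z\bigr)$ into the solution space of the system \eqref{system}. It is injective, since $(\psi,\alpha^{\mathrm{ees}}_i)=0$ for all $i$ forces $V^{\mathrm{ees}}_{\gamma\lambda\mu}\psi=0$ and thus $(H_0(0)-z)\psi=0$, which is impossible for $z\notin[0,8]$ unless $\psi=0$. It is surjective, since from a solution $(x_i)$ of \eqref{system} the function $\psi:=-R_0(0,z)\sum_{i}\kappa_i x_i\alpha^{\mathrm{ees}}_i$ satisfies $(\psi,\alpha^{\mathrm{ees}}_j)=x_j$ and $(H_0(0)-z)\psi=-V^{\mathrm{ees}}_{\gamma\lambda\mu}\psi$, i.e. it is an eigenfunction. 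Combined with Lemma \ref{eigen-eigenvalue}, this shows that $z\in\R\setminus[0,8]$ is an eigenvalue of $H^{\mathrm{ees}}_{\gamma\lambda\mu}(0)$ if and only if $\Delta^{\mathrm{ees}}_{\gamma\lambda\mu}(z)=0$, and that its geometric multiplicity equals $\dim\ker[I-B^{\mathrm{ees}}_{\gamma\lambda\mu}(0,z)]$.

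The delicate point, and the main obstacle, is to upgrade this to an equality between the order of vanishing of $\Delta^{\mathrm{ees}}_{\gamma\lambda\mu}$ and the eigenvalue multiplicity, since for a general analytic matrix family the order of the determinant merely dominates the nullity. Here self-adjointness enters through the perturbation-determinant identity $\tfrac{d}{dz}\log\Delta^{\mathrm{ees}}_{\gamma\lambda\mu}(z)=\tr\bigl[R_0(0,z)-(H^{\mathrm{ees}}_{\gamma\lambda\mu}(0)-z)^{-1}\bigr]$, valid off the spectra of both operators. Near an eigenvalue $z_0\in\R\setminus[0,8]$ the resolvent of $H^{\mathrm{ees}}_{\gamma\lambda\mu}(0)$ behaves like $P_{z_0}/(z_0-z)$ with $P_{z_0}$ the orthogonal spectral projection, while $R_0(0,z)$ stays regular because $z_0\notin\sigma(H_0(0))=[0,8]$; hence the right-hand side has a simple pole at $z_0$ with residue $\tr P_{z_0}=m$, the multiplicity of the (self-adjoint, hence semisimple) eigenvalue. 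Consequently $\Delta^{\mathrm{ees}}_{\gamma\lambda\mu}$ vanishes to order exactly $m$ at $z_0$, which is the asserted correspondence.

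Finally, for the counting bound I would invoke the min-max principle. Splitting $V^{\mathrm{ees}}_{\gamma\lambda\mu}=V_+-V_-$ into its positive and negative parts, of ranks $r_\pm=\#\{i:\pm\kappa_i>0\}$ with $r_++r_-\le4$, one has $H_0(0)-V_-\le H^{\mathrm{ees}}_{\gamma\lambda\mu}(0)\le H_0(0)+V_+$. Since $\sigma(H_0(0))=[0,8]$, the operator $H_0(0)-V_-$ has at most $r_-$ eigenvalues below $0$ and $H_0(0)+V_+$ at most $r_+$ eigenvalues above $8$; by monotonicity of the min-max values the same bounds hold for $H^{\mathrm{ees}}_{\gamma\lambda\mu}(0)$, so it has at most $r_++r_-\le4$ eigenvalues, counted with multiplicity, in $\R\setminus[0,8]$. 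By the multiplicity correspondence just established, $\Delta^{\mathrm{ees}}_{\gamma\lambda\mu}(\cdot)$ has at most four zeros there, completing the proof.
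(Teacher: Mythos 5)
Your proposal is correct. Note, however, that the paper does not actually prove Lemma~\ref{lem:det_zeros_vs_eigen}: it declares the argument ``rather standard'' and refers to \cite{LBozorov:2009,LKhKhamidov:2021}, so there is no in-text proof to compare against. Your write-up supplies exactly the missing standard argument, and it does so cleanly: the reduction of $\det[I-{B}^{\mathrm{ees}}_{\gamma\lambda\mu}(0,z)]$ to the $4\times4$ determinant \eqref{determinant-ees} via $(R_0(0,z)\alpha^{\mathrm{ees}}_j,\alpha^{\mathrm{ees}}_i)=2a_{ij}(z)$ is right (your matrix is the transpose of the one printed in \eqref{determinant-ees}, which is harmless for the determinant and consistent with the paper's own labelling of $B$ as ``the transpose of'' the Lippmann--Schwinger operator); the bijection between $\ker\bigl(H^{\mathrm{ees}}_{\gamma\lambda\mu}(0)-z\bigr)$ and the solution space of \eqref{system} correctly identifies geometric multiplicity with the nullity of the matrix; and you correctly isolate the one genuinely nontrivial point --- that the \emph{order} of the zero of $\Delta^{\mathrm{ees}}_{\gamma\lambda\mu}$ equals the eigenvalue multiplicity rather than merely dominating the nullity --- and settle it with the perturbation-determinant trace identity $\tfrac{d}{dz}\log\Delta^{\mathrm{ees}}_{\gamma\lambda\mu}(z)=\tr\bigl[R_0(0,z)-(H^{\mathrm{ees}}_{\gamma\lambda\mu}(0)-z)^{-1}\bigr]$, which is legitimate here because the perturbation is finite rank and the eigenvalue of a self-adjoint operator is semisimple. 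Your counting argument is also the right one: the naive rank-four min--max bound applied separately below $0$ and above $8$ would only give eight zeros in total, whereas splitting $V^{\mathrm{ees}}_{\gamma\lambda\mu}=V_+-V_-$ with $\mathrm{rank}\,V_++\mathrm{rank}\,V_-\le 4$ and sandwiching $H_0(0)-V_-\le H^{\mathrm{ees}}_{\gamma\lambda\mu}(0)\le H_0(0)+V_+$ yields the stated total of at most four. I see no gap.
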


The proof of this statement is rather standard (cf e.g.
\cite{LBozorov:2009,LKhKhamidov:2021}) and we skip it.

\smallskip

Main properties of the functions \eqref{def:functions} and
asymptotic behavior  as $z\nearrow 0$ or $z\searrow 8$ is described
in the next asserion.

\begin{proposition}\label{prop:asymp_functions}
The functions  $a_{ij}(z)$ $(i,j=1,2,3,4)$ are real-valued for
$z\in\mathbb{R}\setminus[0,8]$, strictly increasing and positive in
$(-\infty,0)$, strictly increasing and negative in  $(8,+\infty)$.
Furthermore, the following asymptotics are valid:
\begin{align*}
a_{11}(z)=&\begin{cases}
-\frac{1}{8\pi}\ln(-z)+\frac{5\ln2}{8\pi}+o(1), \,\, \text{as} \quad z\nearrow0,  \\
\frac{1}{8\pi} \ln(z-8)-\frac{5\ln2}{8\pi}+o(1), \,\, \text{as}
\quad z\searrow 8,
\end{cases}
\end{align*}
\begin{align*}
a_{12}(z)=&\begin{cases}
-\frac{1}{4\pi}\ln(-z)+\frac{5\ln2-\pi}{4\pi}+o(1), \,\, \text{as} \quad z\nearrow0,  \\
\frac{1}{4\pi} \ln(z-8)-\frac{5\ln2-\pi}{4\pi}+o(1), \,\, \text{as}
\quad z\searrow 8,
\end{cases}
\end{align*}
\begin{align*}
a_{13}(z)=&\begin{cases}
-\frac{1}{4\pi}\ln(-z)+\frac{5\ln2-4\pi+8}{4\pi}+o(1), \,\, \text{as} \quad z\nearrow0,  \\
\frac{1}{4\pi} \ln(z-8)-\frac{5\ln2-4\pi+8}{4\pi}+o(1), \,\,
\text{as} \quad z\searrow 8,
\end{cases}
\end{align*}
\begin{align*}
a_{14}(z)=&\begin{cases}
-\frac{1}{4\pi}\ln(-z)+\frac{5\ln2-4}{4\pi}+o(1), \,\, \text{as} \quad z\nearrow0,  \\
\frac{1}{4\pi}\ln(z-8)-\frac{5\ln2-4}{4\pi}+o(1), \,\, \text{as}
\quad z\searrow 8,
\end{cases}
\end{align*}
\begin{align*}
a_{22}(z)=&\begin{cases}
-\frac{1}{2\pi}\ln(-z)+\frac{5\ln2-\pi}{2\pi}+o(1), \,\, \text{as} \quad z\nearrow0,  \\
\frac{1}{2\pi}\ln(z-8)-\frac{5\ln2-\pi}{2\pi}+o(1), \,\, \text{as}
\quad z\searrow 8,
\end{cases}
\end{align*}
\begin{align*}
a_{23}(z)=&\begin{cases}
-\frac{1}{2\pi}\ln(-z)+\frac{5\ln2-4\pi+8}{2\pi}+o(1), \,\, \text{as} \quad z\nearrow0,  \\
\frac{1}{2\pi}\ln(z-8)-\frac{5\ln2-4\pi+8}{2\pi}+o(1), \,\,
\text{as} \quad z\searrow 8,
\end{cases}
\end{align*}
\begin{align*}
a_{24}(z)=&\begin{cases}
-\frac{1}{2\pi}\ln(-z)+\frac{5\ln2-4}{2\pi}+o(1),
\,\, \text{as} \quad z\nearrow0,  \\
\frac{1}{2\pi}\ln(z-8)-\frac{5\ln2-4}{2\pi}+o(1), \,\, \text{as}
\quad z\searrow 8,
\end{cases}
\end{align*}
\begin{align*}
a_{33}(z)=&\begin{cases}
-\frac{1}{2\pi}\ln(-z)+\frac{5\ln2-20\pi+\frac{176}{3}}{2\pi}+o(1),
\,\, \text{as} \quad z\nearrow0,  \\
\frac{1}{2\pi}\ln(z-8)-\frac{5\ln2-20\pi+\frac{176}{3}}{2\pi}+o(1),
\,\, \text{as} \quad z\searrow 8,
\end{cases}
\end{align*}
\begin{align*}
a_{34}(z)=&\begin{cases}
-\frac{1}{2\pi}\ln(-z)+\frac{5\ln2+4\pi-\frac{52}{3}}{2\pi}+o(1),
\,\, \text{as} \quad z\nearrow0,  \\
\frac{1}{2\pi} \ln(z-8)-\frac{5\ln2+4\pi-\frac{52}{3}}{2\pi}+o(1),
\,\, \text{as} \quad z\searrow 8,
\end{cases}
\end{align*}
\begin{align*}
a_{44}(z)=&\begin{cases}
-\frac{1}{2\pi}\ln(-z)+\frac{5\ln2-2\pi+\frac{8}{3}}{2\pi}+o(1),
\,\, \text{as} \quad z\nearrow0,  \\
\frac{1}{2\pi}\ln(z-8) -\frac{5\ln2-2\pi+\frac{8}{3}}{2\pi}+o(1),
\,\, \text{as} \quad z\searrow 8,
\end{cases}
\end{align*}
where \ $\ln(-z)$ and $\ln(z-8)$ denote those  branches of the
corresponding analytic functions that are real for $-z>0$ and $z>8$,
respectively.

\end{proposition}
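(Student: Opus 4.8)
The plan is to separate the three assertions of the statement---reality, the sign-and-monotonicity on each half-line, and the two threshold expansions---and to exploit throughout the reflection symmetry of the dispersion $\cE_0(p)=4-2\cos p_1-2\cos p_2$, which sweeps the band $[0,8]$ as $p$ ranges over $\T^2$. \emph{Reality and the diagonal entries:} for real $z\notin[0,8]$ the denominator $\cE_0(p)-z$ is real and bounded away from $0$, so each integral in \eqref{def:functions} converges to a real number, and dominated convergence justifies differentiation under the integral sign, giving $a_{ij}'(z)=\tfrac12\int_{\T^2}\alpha_i^{\mathrm{ees}}\alpha_j^{\mathrm{ees}}(\cE_0-z)^{-2}\,dp$ together with smoothness of $z\mapsto a_{ij}(z)$. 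For $i=j$ the integrands of $a_{ii}$ and of $a_{ii}'$ carry a fixed sign ($\cE_0-z>0$ on $(-\infty,0)$, $\cE_0-z<0$ on $(8,\infty)$), which yields at once that $a_{ii}$ is positive and increasing on $(-\infty,0)$ and negative and increasing on $(8,\infty)$.

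\emph{Off-diagonal entries.} Here the numerators $\alpha_i^{\mathrm{ees}}\alpha_j^{\mathrm{ees}}$ are sign-indefinite, so I would not argue termwise. The first step is to use the algebraic identities of Proposition~\ref{Rabcd} to express several entries through $a_{11}$ and a few others. The remaining positivity of the derivative integrals I would obtain by a parametric-differentiation device: after integrating out one variable via $\int_{-\pi}^{\pi}(a-b\cos t)^{-1}\,dt=2\pi(a^2-b^2)^{-1/2}$, differentiating the resulting one-dimensional integral in the coupling $b$ reproduces the relevant off-diagonal kernel with a manifestly signed $b$-derivative. The behaviour on $(8,\infty)$ is then tied to that on $(-\infty,0)$ by the reflection $p\mapsto p+(\pi,\pi)$, under which $\cE_0\mapsto 8-\cE_0$ and $\alpha_i^{\mathrm{ees}}\mapsto s_i\alpha_i^{\mathrm{ees}}$ with $s_1=s_3=s_4=1$, $s_2=-1$; this gives the exact relation $a_{ij}(z)=-s_is_j\,a_{ij}(8-z)$, which transports all information across the band.

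\emph{Threshold expansions.} The logarithms come solely from the band edges: the minimum of $\cE_0$ at $p=0$ governs $z\nearrow0$ and the maximum at $p=(\pi,\pi)$ governs $z\searrow8$. For $z\nearrow0$ I would split
\[
a_{ij}(z)=\tfrac12\,\alpha_i^{\mathrm{ees}}(0)\alpha_j^{\mathrm{ees}}(0)\,J(z)+\tfrac12\int_{\T^2}\frac{\alpha_i^{\mathrm{ees}}\alpha_j^{\mathrm{ees}}-\alpha_i^{\mathrm{ees}}(0)\alpha_j^{\mathrm{ees}}(0)}{\cE_0-z}\,dp,\qquad J(z):=\int_{\T^2}\frac{dp}{\cE_0(p)-z}.
\]
Because $\alpha_i^{\mathrm{ees}}\alpha_j^{\mathrm{ees}}-\alpha_i^{\mathrm{ees}}(0)\alpha_j^{\mathrm{ees}}(0)=O(|p|^2)=O(\cE_0)$ near $0$, the second integrand stays bounded, so that term tends to a finite, explicit constant $R_{ij}$ as $z\nearrow0$ and all the singular behaviour is carried by the scalar master integral $J$. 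Integrating out one variable as above gives $J(z)=\int_{-\pi}^{\pi}2\pi\,[(2-2\cos p_2-z)(6-2\cos p_2-z)]^{-1/2}\,dp_2$, whose small-$p_2$ analysis produces $J(z)=-\pi\ln(-z)+5\pi\ln2+o(1)$. Hence the coefficient of the logarithm in $a_{ij}$ equals $-\tfrac{\pi}{2}\alpha_i^{\mathrm{ees}}(0)\alpha_j^{\mathrm{ees}}(0)$ and the constant equals $\tfrac{5\pi\ln2}{2}\alpha_i^{\mathrm{ees}}(0)\alpha_j^{\mathrm{ees}}(0)+R_{ij}$; the $z\searrow8$ expansions then follow from $a_{ij}(z)=-s_is_j\,a_{ij}(8-z)$.

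The genuinely laborious part will be the exact determination of the constants: both the value $5\pi\ln2$ in the expansion of the lattice Green's function $J$, and, for each of the ten pairs $(i,j)$, the convergent remainder $R_{ij}$, which after reduction to one-dimensional trigonometric integrals yields the various $\ln2$-plus-rational constants recorded in the statement. A secondary difficulty is the uniform positivity of the off-diagonal derivative integrals, since the parametric argument must be adapted separately to the kernels containing $\cos2p_i$ and $\cos p_1\cos p_2$, which are not functions of $\cE_0$ alone.
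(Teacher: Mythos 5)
Your treatment of the half-line $(-\infty,0)$ and of the limit $z\nearrow 0$ is essentially the paper's own argument: the diagonal entries are handled by inspection of the sign of the integrand, the entries $a_{12},a_{22},a_{23},a_{24}$ are reduced via Proposition~\ref{Rabcd} to other entries, the positivity of $a_{13},a_{14}$ requires a symmetrization/change-of-variables computation (the paper's $J(u)$ device), and the constants at the lower edge are obtained by peeling off the singular part proportional to the lattice Green's function and evaluating a convergent remainder (the paper computes exactly your $R_{ij}$ in the form $\lim_{z\nearrow0}(a_{13}-2a_{11})$, $\lim_{z\nearrow0}(a_{33}-4a_{11})$, etc.). Your identification of the log coefficient as $-\tfrac{\pi}{2}\alpha_i^{\mathrm{ees}}(0)\alpha_j^{\mathrm{ees}}(0)$ checks out against every displayed expansion at $z\nearrow0$.

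The genuine problem is the transport to $z\searrow 8$. Your reflection identity is correct: under $p\mapsto p+(\pi,\pi)$ one has $\cE_0\mapsto 8-\cE_0$ and $\alpha_i^{\mathrm{ees}}\mapsto s_i\alpha_i^{\mathrm{ees}}$ with $s_1=s_3=s_4=1$, $s_2=-1$, whence $a_{ij}(z)=-s_is_j\,a_{ij}(8-z)$. But for the three pairs with exactly one index equal to $2$, namely $a_{12},a_{23},a_{24}$, this gives $a_{ij}(z)=+a_{ij}(8-z)$, so these functions are \emph{positive} on $(8,+\infty)$ and tend to $+\infty$ as $z\searrow 8$, with expansions such as $a_{12}(z)=-\tfrac{1}{4\pi}\ln(z-8)+\tfrac{5\ln2-\pi}{4\pi}+o(1)$ --- the opposite sign of the logarithmic term (and of the function itself) from what the Proposition asserts. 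So the final step ``the $z\searrow8$ expansions then follow'' does not produce the stated formulas for those entries; it produces their contradictions. This is not an artifact of your symmetry: Proposition~\ref{Rabcd} gives $a_{12}(z)=\tfrac{4-z}{2}a_{11}(z)-\tfrac14$ and $a_{23}(z)=\tfrac{4-z}{2}a_{13}(z)$, and since $a_{11}(z),a_{13}(z)\to-\infty$ while $\tfrac{4-z}{2}\to-2$ as $z\searrow8$, one again gets $a_{12},a_{23}\to+\infty$ there. In other words, your method, carried out honestly, proves that the Proposition as written is false for $a_{12},a_{23},a_{24}$ on $(8,+\infty)$ (the paper omits the upper-edge calculations, which is why the discrepancy is not visible in its proof). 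You must either restrict your conclusion to the seven unaffected entries or record the corrected signs for the remaining three; as a proof of the statement as written, the last step fails. A secondary, much smaller issue: the strict monotonicity of the sign-indefinite off-diagonal entries is only gestured at by your ``parametric-differentiation device'' --- the paper is equally terse here, but note that even $a_{12}'>0$ on $(-\infty,0)$ needs an argument (e.g.\ writing $\cos p_1+\cos p_2=\tfrac12\bigl[(4-z)-(\cE_0-z)\bigr]$ inside $a_{12}'(z)$ and applying the Cauchy--Schwarz and Jensen inequalities), since the numerator changes sign.
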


\begin{proof}
Let us prove the positivity of the functions $a_{ij}$,
$i,j=1,2,3,4,$  in $(-\infty,0)$.

The positivity of $a_{11}(z),a_{22}(z),a_{33}(z),a_{44}(z)$ for
$z<0$ is obvious. In order to prove the positivity of $a_{13}(z)$,
$z<0$, we first consider an auxiliary integral
$$
J(u)= \int_{-\pi}^{\pi}\frac{\cos 2p \,\d p}{u-\cos p},\quad u>1.
$$
Clearly,
\begin{equation}
\label{Int1} J(u)=\int_{-\pi}^{\pi}\frac{\cos 2p \,\d p}{u-\cos
p}=\int_{-\pi}^{0}\frac{\cos 2p \,\d p}{u-\cos
p}+\int_{0}^{\pi}\frac{\cos 2p \,\d p}{u-\cos p}.
\end{equation}
 Changing the variable $p=q-\pi$ in the first integral on the
r.h.s. part of \eqref{Int1}, one obtains
\begin{align*}
J(u)=&\int_{0}^{\pi}\frac{\cos 2(q-\pi) \,\d q}{u-\cos
(q-\pi)}+\int_{0}^{\pi}\frac{\cos 2p \,\d p}{u-\cos p}
=\int_{0}^{\pi}\frac{\cos 2q \,\d q}{u+\cos
q}+\int_{0}^{\pi}\frac{\cos 2p \,\d
p}{u-\cos p}\\
=&\int_{0}^{\pi}\frac{2u\cos 2p \,\d p}{u^2-\cos^2 p}
=\int_{0}^{\frac{\pi}{2}}\frac{2u\cos 2p \,\d p}{u^2-\cos^2
p}+\int_{\frac{\pi}{2}}^{\pi}\frac{2u\cos 2p \,\d p}{u^2-\cos^2 p}.
\end{align*}
Finally, by making the substitution $p = q + \frac{\pi}{2}$ in the
very last term, we find
\begin{align*}
J(u)=&\int_{0}^{\frac{\pi}{2}}\frac{2u\cos 2p \,\d p}{u^2-\cos^2
p}-\int_{0}^{\frac{\pi}{2}}\frac{2u\cos 2q \,\d
q}{u^2-\sin^2 q} \\
=&2u\int_{0}^{\frac{\pi}{2}}\frac{\cos^2 2p \,\d p}{(u^2-\cos^2
p)(u^2-\sin^2 p)}\quad \text{for any }u>1.
\end{align*}
and, thus,
\begin{equation}
\label{Jub0} J(u)= \int_{-\pi}^{\pi}\frac{\cos 2p \,\d p}{u-\cos
p}>0\quad \text{for any }u>1.
\end{equation}

Now observe that $u=\frac{4-z}{2}-\cos q>1$ for any $q\in\mathbb{R}$
and any $z<0$. Combining this observation with \eqref{Jub0} one
concludes that
\begin{align*}
a_{13}(z)=&\frac{1}{8\pi^2}\int_{\T}\int_{\T}\frac{(\cos 2p_1+\cos
2p_2) }{4-z-2\cos p_1-2\cos p_2}\,\d
p_1\d p_2\\
=&\frac{1}{8\pi^2}\int_{\pi}^{\pi}\left( \int_{-\pi}^{\pi}\frac{\cos
2p_1 \,\d p_1}{\bigl(\frac{4-z}{2}-\cos p_2\bigr)-\cos p_1}\right)\d
p_2>0.
\end{align*}
 Similarly,
\begin{align*}
a_{14}(z)=&\frac{1}{4\pi^2}\int_{\pi}^{\pi}\left(
\int_{-\pi}^{\pi}\frac{\cos p_1\cos p_2 \,\d
p_1}{4-z-2\cos p_1-2\cos p_2}\right)\d p_2\\
=&\frac{1}{4\pi^2}\int_{-\pi}^{0}\left( \int_{-\pi}^{0}\frac{\cos
p_1\cos p_2 \,\d
p_1}{4-z-2\cos p_1-2\cos p_2}\right)\d p_2\\
&+\frac{1}{4\pi^2}\int_{-\pi}^{0}\left( \int_{0}^{\pi}\frac{\cos
p_1\cos p_2 \,\d
p_1}{4-z-2\cos p_1-2\cos p_2}\right)\d p_2\\
&+\frac{1}{4\pi^2}\int_{0}^{\pi}\left( \int_{-\pi}^{0}\frac{\cos
p_1\cos p_2 \,\d
p_1}{4-z-2\cos p_1-2\cos p_2}\right)\d p_2\\
&+\frac{1}{4\pi^2}\int_{0}^{\pi}\left( \int_{0}^{\pi}\frac{\cos
p_1\cos p_2 \,\d p_1}{4-z-2\cos p_1-2\cos p_2}\right)\d p_2.
\end{align*}
In the last four integrals, by making the respective change of
variables
\begin{align*}
&p_1=q_1-\pi, \quad p_2=q_2-\pi, \\
&p_1=q_1, \quad \quad \quad p_2=q_2-\pi, \\
&p_1=q_1-\pi, \quad   p_2=q_2, \\
&p_1=q_1, \quad \quad \quad  p_2=q_2,
\end{align*}
we obtain
\begin{align*}
a_{14}(z) =&\frac{1}{8\pi^2}\int_{0}^{\pi}\left(
\int_{0}^{\pi}\frac{\cos q_1\cos q_2 \,\d q_1}{\frac{4-z}{2}+\cos
q_1+\cos q_2}\right)\d q_2\\
&- \frac{1}{8\pi^2}\int_{0}^{\pi}\left( \int_{0}^{\pi}\frac{\cos
q_1\cos q_2 \,\d
q_1}{\frac{4-z}{2}-\cos q_1+\cos q_2}\right)\d q_2\\
&-\frac{1}{8\pi^2}\int_{0}^{\pi}\left( \int_{0}^{\pi}\frac{\cos
q_1\cos q_2 \,\d q_1}{\frac{4-z}{2}+\cos q_1-\cos q_2}\right)\d
q_2\\
&+\frac{1}{8\pi^2}\int_{0}^{\pi}\left( \int_{0}^{\pi}\frac{\cos
q_1\cos q_2 \,\d q_1}{\frac{4-z}{2}-\cos q_1-\cos q_2}\right)\d q_2.
\end{align*}
By grouping separately the first and third terms and then the second
and forth terms one arrives at
\begin{align*}
a_{14}(z) =&-\frac{1}{4\pi^2}\int_{0}^{\pi}\left(
\int_{0}^{\pi}\frac{\cos q_1\cos^2 q_2 \,\d
q_1}{(\frac{4-z}{2}+\cos q_1)^2-\cos^2 q_2}\right)\d q_2\\
&+ \frac{1}{4\pi^2}\int_{0}^{\pi}\left( \int_{0}^{\pi}\frac{\cos
q_1\cos^2 q_2 \,\d
q_1}{(\frac{4-z}{2}-\cos q_1)^2-\cos^2 q_2}\right)\d q_2\\
=&\frac{1}{2\pi^2}\int_{0}^{\pi}\left(
\int_{0}^{\pi}\frac{(4-z)\cos^2 q_1\cos^2 q_2 \,}{\left[
(\frac{4-z}{2}-\cos q_1)^2-\cos^2 q_2\right]
\left[(\frac{4-z}{2}+\cos q_1)^2-\cos^2 q_2\right]}\d q_1\right)\d
q_2.
\end{align*}
Obviously, the expression under the integration sign in the last
integral is always positive which means that $a_{14}(z)>0$ for any
$z<0$.\

The positivity of the functions $a_{12},  a_{23}, a_{24}$ follows
from Proposition \ref{Rabcd} and positivity of the functions
$a_{22},  a_{13}, a_{14},$ respectively.

The strict monotonicity of the functions $a_{ij}$, $i,j=0,1,2,3,4,$
in the intervals $(-\infty,0)$ and $(8,+\infty)$ is proven
technically in the same way as their respective sign definiteness.

The asymptotic relations for the functions $a_{11}$, $a_{12}$,
$a_{22}$ and $a_{14}(\cdot)$ have been established in
\cite{LKhKhamidov:2021}.   The remaining asymptotic relations are
proven by using Proposition \ref{Rabcd} and the following limit
equalities:
\begin{align*}
\lim\limits_{z\nearrow 0} \Big(
a_{13}(z)-2a_{11}(z)\Big)=&-\frac{1}{2\pi^2}\int_{\T^2}\frac{\sin^2
p_1 \,\d
p}{\cE_0(p)} =-\frac{\pi-2}{\pi}, \\
\lim\limits_{z\nearrow 0}\Big( a_{33}(z)-4a_{11}(z)\Big)=&
-\frac{1}{\pi^2}\int_{\T^2}\frac{\sin^2 p_1\cos^2 p_2 \,\d
p}{\cE_0(p)}- \frac{1}{\pi^2}\int_{\T^2}\frac{\sin^2 p_1\cos^2 p_1
\,\d
p}{\cE_0(p)}\\
&=\frac{88-30\pi}{3\pi}, \\
\lim\limits_{z\nearrow 0} \Big( a_{44}(z)-4a_{11}(z)\Big)=&
-\frac{1}{2\pi^2}\int_{\T^2}\frac{\sin^2 p_1\cos^2 p_2 \,\d
p}{\cE_0(p)}- \frac{1}{2\pi^2}\int_{\T^2}\frac{\sin^2 p_1 \,\d
p}{\cE_0(p)}\\
&=-\frac{3\pi-4}{3\pi}& \\
\lim\limits_{z\nearrow 0} \Big( a_{34}(z)-2a_{14}(z)\Big) =&-
\frac{1}{\pi^2}\int_{\T^2}\frac{\sin^2 p_1\cos p_1 \cos p_2 \,\d
p}{\cE_0(p)}=2-\frac{20}{3\pi}.
\end{align*}
The asymptotic expressions at $z=8$ are derived in a similar way,
and therefore, we omit the corresponding calculations.
\end{proof}

\begin{lemma}\label{lem:asymp_det}
For any $\gamma,\lambda,\mu \in \R$ the function
$\Delta^\mathrm{ees}_{\gamma \lambda \mu }(z)$ is holomorphic in
$z\in\C\setminus[0,8]$. Furthermore, this function is real analytic
for $z\in\R\setminus[0,8]$ and possesses the following asymptotics:
\begin{align*}
&\lim\limits_{z\rightarrow {\pm\infty}}
\Delta^\mathrm{ees}_{\gamma\lambda \mu }(z)=1,\\
&\Delta^\mathrm{ees}_{\gamma \lambda \mu }(z) =
-\tfrac{1}{4\pi}Q^{-}(\gamma ,\lambda ,\mu  )\ln(-z)+
Q^{-}_{0}(\gamma ,\lambda ,\mu )+o(1),\,\, z \nearrow 0 \quad(z<0),    \\
&\Delta^\mathrm{ees}_{\gamma \lambda \mu }(z)
=-\tfrac{1}{4\pi}Q^{+}(\gamma ,\lambda ,\mu
)\ln(z-8)+Q^{+}_{0}(\gamma ,\lambda ,\mu )+o(1), \, z \searrow 8
\quad (z>8),
\end{align*}
where  $Q^{\pm}(\gamma,\lambda,\mu)$ is given by
\eqref{polynomial:Q}.
 \end{lemma}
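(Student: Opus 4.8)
The plan is to treat $\Delta^{\mathrm{ees}}_{\gamma\lambda\mu}(z)$ as the determinant $\det\bigl(I+A(z)D\bigr)$, where $A(z):=\bigl(a_{ij}(z)\bigr)_{i,j=1}^{4}$ is the symmetric matrix of the functions \eqref{def:functions} and $D:=\diag(2\gamma,\lambda,\mu,\mu)$; one checks directly from \eqref{determinant-ees} that the entries of $I+A(z)D$ coincide with those of $I-{B}^{\mathrm{ees}}_{\gamma\lambda\mu}(0,z)$. Holomorphy is then immediate: for $z\in\C\setminus[0,8]$ the integrand in \eqref{def:functions} is holomorphic in $z$ and the integrals converge locally uniformly (since $\cE_0(p)\in[0,8]$), so each $a_{ij}$ is holomorphic on $\C\setminus[0,8]$ and $\Delta^{\mathrm{ees}}_{\gamma\lambda\mu}$, being a polynomial in the $a_{ij}$, is holomorphic there as well; on $\R\setminus[0,8]$ the $a_{ij}$ are real-valued by Proposition \ref{prop:asymp_functions}, whence real-analyticity. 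For $z\to\pm\infty$ every $a_{ij}(z)\to0$, so $I+A(z)D\to I$ and $\Delta^{\mathrm{ees}}_{\gamma\lambda\mu}(z)\to\det I=1$.

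For the behaviour as $z\nearrow0$ I would substitute the expansions of Proposition \ref{prop:asymp_functions}, writing $A(z)=\ln(-z)\,C+G(z)$, where $C=(c_{ij})$ and $A_0=(d_{ij})$ collect, respectively, the coefficients of $\ln(-z)$ and the constant terms of those expansions, and $G(z)\to A_0$ as $z\nearrow0$. The decisive structural observation is that $C$ has rank one: from the explicit coefficients one reads off $c_{ij}=-\tfrac1\pi\,v_iv_j$ with $v=\bigl(\tfrac1{2\sqrt2},\tfrac1{\sqrt2},\tfrac1{\sqrt2},\tfrac1{\sqrt2}\bigr)$, so that every column of $CD=-\tfrac1\pi\,v\,(Dv)^{\top}$ is a multiple of the single vector $v$. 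Expanding $\det\bigl(I+A(z)D\bigr)$ multilinearly in its columns, any term that draws two or more columns from $\ln(-z)\,CD$ has two proportional columns and therefore vanishes. Hence the determinant is \emph{affine} in $\ln(-z)$:
\begin{equation*}
\Delta^{\mathrm{ees}}_{\gamma\lambda\mu}(z)=\det\bigl(I+G(z)D\bigr)-\tfrac1\pi\,\ln(-z)\,\bigl(Dv\bigr)^{\top}\mathrm{adj}\bigl(I+G(z)D\bigr)\,v,
\end{equation*}
with no quadratic, cubic or quartic powers of $\ln(-z)$ surviving. Letting $z\nearrow0$ and using $G(z)\to A_0$, the intercept tends to $\det(I+A_0D)$ and the slope to $-\tfrac1\pi\,(Dv)^{\top}\mathrm{adj}(I+A_0D)\,v$; since the error in the slope, when multiplied by $\ln(-z)$, stays $o(1)$ owing to the rate of decay of the remainders in the integral representation \eqref{def:functions}, this yields an expansion of the announced form $-\tfrac1{4\pi}Q^{-}(\gamma,\lambda,\mu)\ln(-z)+Q^{-}_{0}(\gamma,\lambda,\mu)+o(1)$.

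It then remains to identify the two emerging polynomials, i.e.\ to verify
\begin{equation*}
-\tfrac1\pi\,(Dv)^{\top}\mathrm{adj}(I+A_0D)\,v=-\tfrac1{4\pi}Q^{-}(\gamma,\lambda,\mu),\qquad \det(I+A_0D)=Q^{-}_{0}(\gamma,\lambda,\mu),
\end{equation*}
with $Q^{-}$ given by \eqref{polynomial:Q}. Here I would insert the explicit constant terms $d_{ij}$ from Proposition \ref{prop:asymp_functions} and exploit the relations of Proposition \ref{Rabcd} (e.g.\ $a_{12}=\tfrac{4-z}2a_{11}-\tfrac14$ and $a_{2j}=\tfrac{4-z}2a_{1j}$ for $j=3,4$) to cut down the number of independent entries before expanding the cofactors; a count of the degree in $(\gamma,\lambda,\mu)$ confirms consistency, the slope being of degree four and thus matching the top term $\tfrac{3720\pi-585\pi^2-5888}{32\pi^2}\gamma\lambda\mu^2$ of $Q^{-}$. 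The case $z\searrow8$ is entirely parallel: the corresponding matrix of $\ln(z-8)$-coefficients equals $-C$ (still rank one), the singularity now sitting at the maximum of $\cE_0$, and the same computation produces $-\tfrac1{4\pi}Q^{+}(\gamma,\lambda,\mu)\ln(z-8)+Q^{+}_{0}(\gamma,\lambda,\mu)+o(1)$. The hard part will be precisely this algebraic identification of the logarithmic coefficient with the prescribed polynomial $Q^{\pm}$: the cancellation of the higher powers of the logarithm is structural and cheap, whereas matching the degree-four and degree-three coefficients such as $\tfrac{3720\pi-585\pi^2-5888}{9\pi^2}$ against the cofactor expansion is where all the bookkeeping lives.
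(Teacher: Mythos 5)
Your proposal is correct and, at bottom, follows the same route as the paper: holomorphy, real-analyticity and the limit $1$ at $\pm\infty$ from the integral representation \eqref{def:functions} plus dominated convergence, and the threshold asymptotics by inserting the expansions of Proposition \ref{prop:asymp_functions} into the determinant \eqref{determinant-ees}. The difference is that the paper disposes of the two threshold relations with a single sentence (``proven by Proposition \ref{prop:asymp_functions}''), whereas you make explicit the one step that actually needs an idea: since every $a_{ij}$ carries the logarithmic singularity with weight $c_{ij}=-\tfrac1\pi v_iv_j$, the matrix of $\ln$-coefficients is rank one, so by multilinearity (equivalently, the matrix determinant lemma) $\det\bigl(I+A(z)D\bigr)$ is \emph{affine} in $\ln(-z)$ and no $\ln^k$ with $k\ge2$ survives. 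That is the correct structural explanation of why the determinant of a matrix whose ten entries all blow up logarithmically has only a first-order logarithmic singularity, and it is a genuine gain in transparency over the paper's proof. Two caveats keep your argument at (but not below) the paper's own level of completeness. First, passing from the exact affine identity to the stated $o(1)$ remainder requires $a_{ij}(z)-c_{ij}\ln(-z)-d_{ij}=o\bigl(1/|\ln(-z)|\bigr)$, which is true for these lattice integrals but is strictly more than the $o(1)$ recorded in Proposition \ref{prop:asymp_functions}; you flag and assert this, the paper silently skips it. Second, the identification of the slope $-\tfrac1\pi(Dv)^{\top}\mathrm{adj}(I+A_0D)v$ with $-\tfrac1{4\pi}Q^{\pm}(\gamma,\lambda,\mu)$ from \eqref{polynomial:Q} is left as an unexecuted (though purely mechanical) cofactor computation; the paper does not display it either, but it is the only place where the specific polynomial $Q^{\pm}$ is actually verified, so a complete write-up would have to carry it out.
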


\begin{proof}
By taking into account \eqref{EKmin} and \eqref{EKmax}, the
holomorhy of $\Delta^\mathrm{ees}_{\gamma \lambda \mu }(z)$ for
$z\in\C\setminus[0,8]$ follows immediately from the same property of
the functions \eqref{def:functions}. Similarly, the real-valuedness
of $\Delta^\mathrm{ees}_{\gamma \lambda \mu }(z)$ for
$z\in\R\setminus[0,8]$ is implied by the real-valuedness of
\eqref{def:functions} for those $z$. In its turn, the first
asymptotic relation follows from the Lebesgue dominated  convergence
theorem. The two remaining relations are proven by Proposition
\ref{prop:asymp_functions}.
\end{proof}

The following two lemmas establish the number and location of
eigenvalues of the operator $H^{\mathrm{ees}}_{\gamma\lambda0}(0)$.

\begin{lemma}\label{lem:simple1}
Let $(\gamma,\lambda)\in\R^2.$

\begin{itemize}
\item[(i)] If $\gamma\lambda-4\lambda -2\gamma\ge0$ and $\lambda<2$,
then the operator $H^{\mathrm{ees}}_{\gamma\lambda0}(0)$ has no
eigenvalues in $(8,+\infty).$

\item[(ii)] If $\gamma\lambda-4\lambda -2\gamma<0$ (and $\lambda\in\R$)
or $\gamma\lambda-4\lambda -2\gamma=0$ (and $\lambda>2$), then the
operator $H^{\mathrm{ees}}_{\gamma\lambda0}(0)$ has a single
eigenvalue in $(8,+\infty)$ and this eigenvalue is simple.

\item[(iii)] If $\gamma\lambda-4\lambda -2\gamma>0$ and $\lambda>2$,
then the operator $H^{\mathrm{ees}}_{\gamma\lambda0}(0)$ has exactly
two simple eigenvalues in $(8,+\infty).$
\end{itemize}

\end{lemma}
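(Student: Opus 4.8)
The plan is to exploit that $\mu=0$ collapses the perturbation \eqref{Vees} to the rank-two operator $V^{\mathrm{ees}}_{\gamma\lambda0}f=\gamma(f,\alpha_1^{\mathrm{ees}})\alpha_1^{\mathrm{ees}}+\tfrac{\lambda}{2}(f,\alpha_2^{\mathrm{ees}})\alpha_2^{\mathrm{ees}}$, so that the last two rows and columns of \eqref{determinant-ees} trivialise and the matrix becomes block-triangular with a $2\times2$ identity block. Hence
\[
\Delta^{\mathrm{ees}}_{\gamma\lambda0}(z)=\bigl(1+2\gamma a_{11}(z)\bigr)\bigl(1+\lambda a_{22}(z)\bigr)-2\gamma\lambda\,a_{12}(z)^2 .
\]
Using Proposition \ref{Rabcd} to eliminate $a_{12}$ and $a_{22}$ in favour of $a_{11}$, I would bring this to the form $\Delta^{\mathrm{ees}}_{\gamma\lambda0}(z)=a_{11}(z)\,P(z)+R(z)$ with
\[
P(z)=\tfrac{\lambda}{4}(4-z)^2+\tfrac{\gamma\lambda}{4}(4-z)+2\gamma,\qquad
R(z)=1-\tfrac{\lambda}{8}(4-z)-\tfrac{\gamma\lambda}{8}.
\]
In particular $P(8)=4\lambda+2\gamma-\gamma\lambda=-(\gamma\lambda-4\lambda-2\gamma)$, and matching the logarithmic singularity with Lemma \ref{lem:asymp_det} identifies $Q^{+}(\gamma,\lambda,0)=\tfrac12(\gamma\lambda-4\lambda-2\gamma)$.

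Next I would record the behaviour of $\Delta^{\mathrm{ees}}_{\gamma\lambda0}$ at the two ends of $(8,+\infty)$. By Lemma \ref{lem:asymp_det} it tends to $1$ as $z\to+\infty$, whereas as $z\searrow8$ its sign is governed by $Q^{+}(\gamma,\lambda,0)$: one gets $\Delta^{\mathrm{ees}}_{\gamma\lambda0}(8^{+})=+\infty$ when $\gamma\lambda-4\lambda-2\gamma>0$ and $-\infty$ when $\gamma\lambda-4\lambda-2\gamma<0$. On the threshold line $\gamma\lambda-4\lambda-2\gamma=0$ the logarithm disappears, since then $a_{11}(z)P(z)=O\bigl((z-8)\ln(z-8)\bigr)\to0$, leaving the finite limit $\Delta^{\mathrm{ees}}_{\gamma\lambda0}(8^{+})=R(8)=1-\tfrac{\gamma}{4}=\tfrac{2}{2-\lambda}$, whose sign changes exactly at $\lambda=2$; this is the source of the dichotomy $\lambda\lessgtr2$. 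Finally, because $V^{\mathrm{ees}}_{\gamma\lambda0}$ has rank two, the operator $H^{\mathrm{ees}}_{\gamma\lambda0}(0)$ has at most two eigenvalues above $\max\sigma(H_0(0))=8$, that is, $\Delta^{\mathrm{ees}}_{\gamma\lambda0}$ has at most two zeros in $(8,+\infty)$ (compare Lemma \ref{lem:det_zeros_vs_eigen}).

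Statement (ii) then follows at once. If $\gamma\lambda-4\lambda-2\gamma<0$, or if $\gamma\lambda-4\lambda-2\gamma=0$ and $\lambda>2$, the two boundary values of $\Delta^{\mathrm{ees}}_{\gamma\lambda0}$ have opposite signs; the intermediate value theorem yields a zero, and since the net sign change is odd while the total number of zeros is at most two, there is exactly one zero and it is simple (a double zero would not change sign, and two simple zeros would give an even net change). By Lemma \ref{lem:det_zeros_vs_eigen} this is a single simple eigenvalue in $(8,+\infty)$.

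Cases (i) and (iii) are the real difficulty: here $\Delta^{\mathrm{ees}}_{\gamma\lambda0}$ is positive at both ends of $(8,+\infty)$, so I must decide between no interior zero and exactly two. I would recast the eigenvalue equation as $a_{11}(z)=g(z)$ with $g(z):=-R(z)/P(z)$ and compare the strictly increasing function $a_{11}$ (Proposition \ref{prop:asymp_functions}) with the rational function $g$, whose only pole in $(8,+\infty)$ is the relevant root of the quadratic $P$. The key claim to establish is that this monotone curve crosses the graph of $g$ twice when $\lambda>2$ and never when $\lambda<2$ — equivalently, that the interior minimum of $\Delta^{\mathrm{ees}}_{\gamma\lambda0}$ on $(8,+\infty)$ is strictly negative precisely when $\lambda>2$. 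The transition is again controlled by the sign of $2-\lambda$ already visible in $R(8)=\tfrac{2}{2-\lambda}$, which fixes on which side of $a_{11}$ the branches of $g$ lie. This comparison is the main obstacle of the proof. Once it is in place, case (i) gives no eigenvalue, while in case (iii) the two crossings are transversal, so the two eigenvalues are simple; simplicity throughout follows from Lemma \ref{lem:det_zeros_vs_eigen}.
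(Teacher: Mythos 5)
Your algebra is correct and your case (ii) is complete: with $\mu=0$ the matrix in \eqref{determinant-ees} is block lower-triangular, so indeed $\Delta^{\mathrm{ees}}_{\gamma\lambda0}(z)=(1+2\gamma a_{11}(z))(1+\lambda a_{22}(z))-2\gamma\lambda a_{12}(z)^2$, and Proposition \ref{Rabcd} turns this into $a_{11}(z)P(z)+R(z)$ with exactly your $P$ and $R$; I verified $P(8)=-(\gamma\lambda-4\lambda-2\gamma)$, $Q^{+}(\gamma,\lambda,0)=\tfrac12(\gamma\lambda-4\lambda-2\gamma)$ against the displayed formula for $Q^{\pm}$, and $R(8)=\tfrac{2}{2-\lambda}$ on the curve $\gamma\lambda-4\lambda-2\gamma=0$. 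The parity argument combined with the rank-two bound and Lemma \ref{lem:det_zeros_vs_eigen} does give exactly one simple zero in case (ii). For the record, the paper itself offers no proof of this lemma: it is imported verbatim from \cite[Lemma 4.6]{LKhKhamidov:2021}, so the only internal comparison available is with the technique the paper uses for the analogous counting problems in the proof of Theorem \ref{teo:ee,s}.

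The genuine gap is in cases (i) and (iii), and you flag it yourself: when $\Delta^{\mathrm{ees}}_{\gamma\lambda0}$ is positive at both ends of $(8,+\infty)$ the zero count is $0$ or $2$, and your ``key claim'' that the interior minimum is negative precisely when $\lambda>2$ is stated, not proved. The proposed comparison of the increasing function $a_{11}$ with $g=-R/P$ is only a programme: you have not determined how many roots the quadratic $P$ has in $(8,+\infty)$, nor compared $a_{11}$ with the branches of $g$ near those poles, and ``the sign of $2-\lambda$ fixes on which side the branches lie'' is an assertion, not an argument. The standard way to settle (iii) --- the one the paper uses repeatedly in the proof of Theorem \ref{teo:ee,s} --- is to evaluate the full determinant at the unique zero $z_{*}\in(8,+\infty)$ of the sub-determinant $1+\lambda a_{22}(z)$ (which exists because $\lambda>2>0$ and $a_{22}$ increases from $-\infty$ to $0$ on that interval), giving $\Delta^{\mathrm{ees}}_{\gamma\lambda0}(z_{*})=-2\gamma\lambda\,a_{12}(z_{*})^{2}<0$, since the hypotheses of (iii) force $\gamma>0$ and $a_{12}(z_{*})\neq0$ by Proposition \ref{prop:asymp_functions}; two sign changes plus the rank bound then yield exactly two simple zeros. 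Case (i) still requires a separate positivity argument for $\Delta^{\mathrm{ees}}_{\gamma\lambda0}$ on all of $(8,+\infty)$ (note that for $0<\lambda<2$ the hypotheses force $\gamma<0$, so the same evaluation gives $\Delta^{\mathrm{ees}}_{\gamma\lambda0}(z_{*})>0$, which by itself rules nothing out). As it stands, your proposal establishes only part of the lemma.
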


\begin{lemma}\label{lem:simple2}
Let $(\gamma,\lambda)\in\R^2.$

\begin{itemize}
\item[(i)] If $\gamma\lambda+4\lambda +2\gamma\ge0$ and $\lambda>-2$,
then the operator $H^{\mathrm{ees}}_{\gamma\lambda0}(0)$ has no
eigenvalues in $(-\infty,0).$

\item[(ii)] If $\gamma\lambda+4\lambda+2\gamma<0$
(and $\lambda\in\R$) or $\gamma\lambda+4\lambda+2\gamma=0$ (and
$\lambda<-2$),  then the operator
$H^{\mathrm{ees}}_{\gamma\lambda0}(0)$ has a single eigenvalue in
$(-\infty,0)$ and this etigenvalue is simple.

\item[(iii)] If $\gamma\lambda+4\lambda + 2\gamma>0$
and $\lambda<-2$, then the operator
$H^{\mathrm{ees}}_{\gamma\lambda0}(0)$ has two simple eigenvalues in
$(-\infty,0).$
\end{itemize}

\end{lemma}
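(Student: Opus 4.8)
The plan is to deduce Lemma~\ref{lem:simple2} from its companion Lemma~\ref{lem:simple1} by exploiting a reflection symmetry of the spectrum about the midpoint of the essential spectrum $[0,8]$. First I would note that when $\mu=0$ the representation~\eqref{repr1} shows that $V^{\mathrm{ees}}_{\gamma\lambda0}$ is the rank-two operator $\gamma(\cdot,\alpha_1^{\mathrm{ees}})\alpha_1^{\mathrm{ees}}+\tfrac{\lambda}{2}(\cdot,\alpha_2^{\mathrm{ees}})\alpha_2^{\mathrm{ees}}$, so that $H^{\mathrm{ees}}_{\gamma\lambda0}(0)$ is exactly the operator appearing in Lemma~\ref{lem:simple1}; the only difference is that that lemma counts eigenvalues in $(8,+\infty)$, whereas here they must be counted in $(-\infty,0)$.

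The bridge between the two regions is the unitary involution $U$ on $L^{2,\mathrm{ees}}(\T^2)$ given by $(Uf)(p_1,p_2)=f(p_1+\pi,p_2+\pi)$. Since shifting each coordinate by $\pi$ commutes with the coordinate swap and with the individual sign changes modulo $2\pi$, the map $U$ preserves $L^{2,\mathrm{ees}}(\T^2)$ and satisfies $U=U^*=U^{-1}$. A short computation gives $\cE_0(p_1+\pi,p_2+\pi)=8-\cE_0(p_1,p_2)$, hence $UH_0(0)U=8I-H_0(0)$; moreover $U\alpha_1^{\mathrm{ees}}=\alpha_1^{\mathrm{ees}}$ while $U\alpha_2^{\mathrm{ees}}=-\alpha_2^{\mathrm{ees}}$, so the two sign flips in the rank-two perturbation cancel and $UV^{\mathrm{ees}}_{\gamma\lambda0}U=V^{\mathrm{ees}}_{\gamma\lambda0}$. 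Combining these, I would arrive at the key identity
\begin{equation*}
UH^{\mathrm{ees}}_{\gamma\lambda0}(0)U=8I-H^{\mathrm{ees}}_{-\gamma,-\lambda,0}(0).
\end{equation*}
It follows that $z<0$ is an eigenvalue of $H^{\mathrm{ees}}_{\gamma\lambda0}(0)$ of a given multiplicity if and only if $8-z>8$ is an eigenvalue of $H^{\mathrm{ees}}_{-\gamma,-\lambda,0}(0)$ of the same multiplicity; in particular both simplicity and the total count are preserved.

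It would then remain only to translate the hypotheses. Substituting $(\gamma,\lambda)\mapsto(-\gamma,-\lambda)$ into the quantity governing Lemma~\ref{lem:simple1} gives $(-\gamma)(-\lambda)-4(-\lambda)-2(-\gamma)=\gamma\lambda+4\lambda+2\gamma$, while the threshold $\lambda<2$ becomes $-\lambda<2$, i.e.\ $\lambda>-2$ (and $\lambda>2$ turns into $\lambda<-2$). Under these replacements the three cases of Lemma~\ref{lem:simple1} for $H^{\mathrm{ees}}_{-\gamma,-\lambda,0}(0)$ on $(8,+\infty)$ map precisely onto cases (i), (ii), (iii) of Lemma~\ref{lem:simple2} for $H^{\mathrm{ees}}_{\gamma\lambda0}(0)$ on $(-\infty,0)$, completing the argument.

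The step I expect to need the most care is the verification of the conjugation identity, specifically confirming that $U$ maps $L^{2,\mathrm{ees}}(\T^2)$ into itself and correctly tracking the sign flip $U\alpha_2^{\mathrm{ees}}=-\alpha_2^{\mathrm{ees}}$ that renders the perturbation invariant; the rest is sign bookkeeping. Should one instead want a self-contained argument avoiding Lemma~\ref{lem:simple1}, the alternative is to compute $\Delta^{\mathrm{ees}}_{\gamma\lambda0}(z)$ directly: with $\mu=0$ the determinant~\eqref{determinant-ees} collapses to the $2\times2$ minor $(1+2\gamma a_{11}(z))(1+\lambda a_{22}(z))-2\gamma\lambda a_{12}(z)^2$, which by Proposition~\ref{Rabcd} reduces to a function affine in $a_{11}(z)$ whose behavior as $z\nearrow0$ is controlled, via Lemma~\ref{lem:asymp_det}, by $Q^{-}(\gamma,\lambda,0)=\tfrac12(\gamma\lambda+4\lambda+2\gamma)$. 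In that route the real difficulty lies in the regime $\gamma\lambda+4\lambda+2\gamma>0$, where $\Delta^{\mathrm{ees}}_{\gamma\lambda0}$ tends to $+\infty$ at both ends of $(-\infty,0)$ and one must show, using the strict monotonicity of $a_{11}$ from Proposition~\ref{prop:asymp_functions}, that the number of zeros is exactly $0$ or $2$ according as $\lambda>-2$ or $\lambda<-2$.
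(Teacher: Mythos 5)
Your argument is correct, but it takes a different route from the paper: the paper does not prove Lemma~\ref{lem:simple2} at all, deferring both it and Lemma~\ref{lem:simple1} to \cite[Lemmas 4.6 and 4.7]{LKhKhamidov:2021}, where they are established by direct analysis of the $2\times2$ determinant $\Delta^{\mathrm{ees}}_{\gamma\lambda0}(z)=(1+2\gamma a_{11}(z))(1+\lambda a_{22}(z))-2\gamma\lambda a_{12}(z)^2$ separately near $z=0$ and near $z=8$. You instead derive Lemma~\ref{lem:simple2} from Lemma~\ref{lem:simple1} via the unitary involution $(Uf)(p)=f(p_1+\pi,p_2+\pi)$, and all the key verifications check out: $U$ preserves $L^{2,\mathrm{ees}}(\T^2)$ (the shift by $\pi$ commutes with the swap and, modulo $2\pi$, with the reflections), $\cE_0(p+\pi\mathbf{1})=8-\cE_0(p)$ gives $UH_0(0)U=8I-H_0(0)$, and the two sign flips $(Uf,\alpha_2^{\mathrm{ees}})=-(f,\alpha_2^{\mathrm{ees}})$ and $U\alpha_2^{\mathrm{ees}}=-\alpha_2^{\mathrm{ees}}$ cancel in the rank-two perturbation, yielding $UH^{\mathrm{ees}}_{\gamma\lambda0}(0)U=8I-H^{\mathrm{ees}}_{-\gamma,-\lambda,0}(0)$ and hence a multiplicity-preserving bijection between eigenvalues of $H^{\mathrm{ees}}_{\gamma\lambda0}(0)$ in $(-\infty,0)$ and those of $H^{\mathrm{ees}}_{-\gamma,-\lambda,0}(0)$ in $(8,+\infty)$; the substitution $(\gamma,\lambda)\mapsto(-\gamma,-\lambda)$ indeed turns $\gamma\lambda-4\lambda-2\gamma$ into $\gamma\lambda+4\lambda+2\gamma$ and the thresholds $\lambda\lessgtr2$ into $\lambda\gtrless-2$, so the three cases match exactly. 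What your approach buys is economy and structure: only one of the two lemmas needs an independent proof, and the reduction makes explicit the $z\leftrightarrow 8-z$, $(\gamma,\lambda,\mu)\leftrightarrow(-\gamma,-\lambda,-\mu)$ duality that the paper otherwise only records implicitly (e.g.\ in Lemma~\ref{lem:regions_D}\,(iv) and in the pervasive $\pm$ pairing of its statements). The only dependency to flag is that your proof is conditional on Lemma~\ref{lem:simple1}, which is legitimate here since that lemma is stated and sourced independently.
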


Both Lemmas \ref{lem:simple1} and  \ref{lem:simple2} are proven  in
\cite{LKhKhamidov:2021} (see \cite[Lemmas 4.6 and 4.7,
respectively]{LKhKhamidov:2021}).

\section{Proof of the main results}\label{sec:proofs}

This section is dedicated to the proof of Theorems
\ref{teo:constant} --  \ref{teo:bound_K}.
\medskip

\noindent \textit{Proof of Theorem \ref{teo:constant}}. We perform
the proof for the case where
$\cC\in\{\cC^{-}_{0},\cC^{-}_{1},\cC^{-}_{2}, \cC^{-}_{3},
\cC^{-}_{4}\}$. The cases with $\cC=\cC_\alpha^+$,
$\alpha=0,1,2,3,4,$ with the opposite index ``$+$'' may be treated
similarly and we skip the corresponding proofs.

Notice that by Remark \ref{Rem_Q_sigdef} the function $Q^{-}(\gamma,
\lambda, \mu)$ is sign-definite on $\cC$. Without loss of generality
one may assume that, say,
$$
Q^{-}(\gamma, \lambda, \mu)>0\quad \text{for any
\,}(\gamma,\lambda,\mu)\in\cC.
$$
Bearing in mind that $Q^{-}(\gamma,\lambda,\mu) =
-4\pi\,\lim\limits_{z \nearrow 0}
\displaystyle\frac{\Delta^\text{ees}_{\gamma\lambda\mu}(z)}{\ln(-z)}$
\, (see Lemma \ref{lem:asymp_det}) one concludes that there is a
number $\delta>0$, perhaps, depending on
$(\gamma,\lambda,\mu)\in\cC$, such that
\begin{equation}
\label{Del_poz}
\Delta^\text{ees}_{\gamma\lambda\mu}(z)>0\quad\text{whenever\,\, }
z\in(-\delta,0).
\end{equation}
Since the function $\Delta^\text{ees}_{\gamma\lambda\mu}(z)$,
$(\gamma,\lambda,\mu)\in\cC$, is real analytic in $z<0$ and
\begin{equation}
\label{Del_one}
\lim_{z\to-\infty}\Delta^{\text{ees}}_{\gamma\lambda\mu}(z) = 1
\end{equation}
(again see  Lemma \ref{lem:asymp_det}), one then derives that
$\Delta^{\text{ees}}_{\gamma\lambda\mu}(z)$ may only have a finite
number of zeros in $z\in(-\infty,0)$ and all these zeros are
confined in the interval $(b_1,b_2)$ with some negative $b_1$ and
$b_2$, $-\infty<b_1<b_2<0$, possibly depending on $\gamma$,
$\lambda$, and $\mu$.

Now recall that $\cC$ is a connected open set in $\R^3$ and pick up
two arbitrary different points $(\gamma_0,\lambda_0,\mu_0)$ and
$(\gamma_1,\lambda_1,\mu_1)$ in $\cC$. Let $\Gamma$ be a simple
closed curve that connects the points $(\gamma_0, \lambda_0, \mu_0)$
and $(\gamma_1, \lambda_1, \mu_1)$, and lies entirely within the
region $\mathcal{C}$. Since the curve $\Gamma\subset\cC$ is a
compact and the function $\Delta^{\text{ees}}_{\gamma\lambda\mu}(z)$
is real analytic in $(\gamma,\lambda,\mu)\in\cC$ and $z<0$, the
minimum
$$
m(z):=\min_{(\gamma,\lambda,\mu)\in\Gamma}\Delta^{\text{ees}}_{\gamma\lambda\mu}(z),
\quad z<0,
$$
is well defied and continuously depends on $z\in(-\infty,0)$.
Furthermore, from \eqref{Del_poz} and \eqref{Del_one} it follows
that there are numbers $B_1$ and $B_2$, $-\infty<B_1<B_2<0$, perhaps
depending on $\Gamma$, such that
$$
m(z)>0\quad \text{ for any\,\ }z\in(-\infty,B_1]\cup[B_2,0).
$$
Obviously, this means that for any $(\gamma,\lambda,\mu)\in\Gamma$
all the roots of the function
$\Delta^{\text{ees}}_{\gamma\lambda\mu}(z)$, $z<0$, are located in
the interval $(B_1,B_2)$. It then follows that there is a real
number $\varepsilon>0$ such that
$|\Delta^{\text{ees}}_{\gamma\lambda\mu}(z)|>\varepsilon$ whenever
$z$ lies on the circumference
$W_\eta(z_0):=\bigl\{z\in\C\,\bigl|\,\,|z-z_0|=\eta\bigr\}$ where
$z_0:=\frac{B_1+ B_2}{2}$ and $\eta:=\frac{B_2 - B_1}{2}$.

As a next step, we prove that there is a number $\beta>0$ such that
for any $(\gamma, \lambda, \mu)\in\Gamma$ fulfilling inequality
\begin{equation}
\label{DelBet} \|(\gamma, \lambda, \mu)-(\gamma_0, \lambda_0,
\mu_0)\|<\beta
\end{equation}
the bound
\begin{equation}
|\Delta^{\text{ees}}_{\gamma \lambda\mu} (z) - \Delta_{\gamma_0
\lambda_0 \mu_0}^{\text{ees}} (z)| < \epsilon \label{DelEps}
\end{equation}
holds simultaneously for all $z\in W_\eta(z_0)$. We prove this by
contradiction. Assume that the opposite holds, that is, there is
$z\in W_\eta(z_0)$ such that for any $\beta>0$ one may pick up a
point $(\gamma, \lambda, \mu)\in \Gamma$ satisfying both the
inequalities \eqref{DelBet} and
\begin{equation}
|\Delta^{\text{ees}}_{\gamma \lambda\mu} (z) - \Delta_{\gamma_0
\lambda_0 \mu_0}^{\text{ees}} (z)| \geq \epsilon. \label{NDelEps}
\end{equation}
But the inequalities \eqref{DelBet} and \eqref{NDelEps} together
with the arbitrariness of the (positive) bound $\beta$ mean that
$(\gamma_0, \lambda_0, \mu_0)$ is a discontinuity point of the
determinant $\Delta^{\text{ees}}_{\gamma \lambda\mu}(z)$ considered
as a function of $\gamma, \lambda,\mu$ at the fixed $z$. This
contradicts the analyticity the function
$\Delta^{\text{ees}}_{\gamma \lambda\mu}(z)$ in $\gamma, \lambda,
\mu\in\R$ and, hence, its continuity  in $(\gamma, \lambda,
\mu)\in\Gamma$.

Therefore, the existence of a number $\beta>0$ such that for any
$(\gamma,\lambda,\mu)\in\Gamma$ satisfying \eqref{DelBet} the bound
\eqref{DelEps} is valid for any $z\in W_\eta(z_0)$, has been
established.

Then by Rouche's theorem the number of roots of the function
$\Delta^{\mathrm{ees}}_{\gamma\lambda\mu}(z)$ in $z\in(B_1, B_2)$
(counting multiplicities) is the same for any point
$(\gamma,\lambda,\mu)\in\Gamma$ satisfying \eqref{DelBet} and,
hence, the same for any $(\gamma,\lambda,\mu)\in\Gamma$. Taking into
account the arbitrariness of the simple close curve
$\Gamma\subset\cC$, one then easily derives that the number of roots
of the function $\Delta^{\mathrm{ees}}_{\gamma\lambda\mu}(z)$ in
$z<0$  (counting multiplicities) is the same for any
$(\gamma,\lambda,\mu)\in\cC$, which completes the proof. \qed

\medskip

\noindent\textit{Proof of Theorem \ref{teo:ee,s}}. We give the
proofs successfully for the cases $\alpha=0,1,2,3,4$ (and sign
``$-$'').

Pick up the point $(1,0,0)\in \cC^{-}_{0}$. By Lemma
\ref{lem:simple2}\,(i), the operator  $H^\mathrm{ees}_{100}(0)$ has
no eigenvalues. Then Theorem \ref{teo:constant} yields that the
operator  $H^\mathrm{ees}_{\gamma\lambda\mu}(0)$ has no  eigenvalues
below the essential spectrum for all $(\gamma,\lambda,\mu)\in
\cC^{-}_{0}$.

By taking into account definition \eqref{polynomial:Q} and the
second equality in \eqref{def:ees_sets} one concludes  that
$(0,-1,0)\in\cC^{-}_{1}$. Meanwhile, from Lemma
\ref{lem:simple2}\,(ii) it follows that $H^\mathrm{ees}_{0(-1)0}(0)$
has exactly one eigenvalue below the essential spectrum. Then by
Theorem \ref{teo:constant} the same property holds for
$H^\mathrm{ees}_{\gamma\lambda\mu}(0)$ with arbitrary
$(\gamma,\lambda,\mu)\in \cC^{-}_{1}$.

By the third equality in \eqref{def:ees_sets} we have
$(-5,-11,0)\in\cC^{-}_{2}$. One also notices that by Lemma
\ref{lem:simple2}\,(iii) the operator
$H^\mathrm{ees}_{(-5)(-11)0}(0)$ has exactly two simple eigenvalues.
Then Theorem \ref{teo:constant} yields the same property for
$H^\mathrm{ees}_{\gamma\lambda\mu}(0)$ whenever
$(\gamma,\lambda,\mu)\in \cC^{-}_{2}$.

From now on we assume that $(\gamma,\lambda,\mu)\in \cC^{-}_{4}$.
Recall that the last rows in \eqref{def:regions_D} and
\eqref{def:ees_sets} read together as
\begin{align}\label{ineq_glm1}
&Q^{-}(\gamma,\lambda,\mu)=\left[\lambda Q^{-}_{0}(\mu)+
Q^{-}_1(\mu) \right](\gamma-\gamma^{-}(\lambda,\mu))>0, \,\, \\
&\lambda Q^{-}_{0}(\mu)+ Q^{-}_1(\mu)<0, \quad  \mu<\mu_0^-<0.
\label{mumu0}
\end{align}
The equalities   \eqref{polynomial:q0}, \eqref{polynomial:q1} and
inequality $\mu<\mu_0^- $ in \eqref{mumu0} imply that
 \begin{equation}\label{Q1Q0geo}
 Q^{-}_{0}(\mu)>0 \quad \text{and} \quad Q^{-}_{1}(\mu)>0.
\end{equation}
From  $\lambda Q^{-}_{0}(\mu)+ Q^{-}_1(\mu)<0$  and  \eqref{Q1Q0geo}
it follows that
\begin{equation}
\label{l_negative} \lambda<0 \quad\text{whenever
}(\gamma,\lambda,\mu)\in \cC^{-}_{4}.
\end{equation}

By the definition \eqref{polynomial:Q} we have
\begin{equation*}
Q^{-}(\gamma, \lambda, \mu)=(\gamma + 4)\Big(\lambda Q^{-}_{0}(\mu)
+ Q^{-}_1(\mu)\Big)-8Q^{-}_{0}(\mu).
\end{equation*}
Then by combining the inequalities  \eqref{ineq_glm1} and
\eqref{mumu0} with the first inequality in \eqref{Q1Q0geo} one
derives that
\begin{equation}\label{g_negative}
\gamma+4<0, \,\, \text{i.e.} \,\, \gamma<0 \quad\text{whenever
}\,(\gamma,\lambda,\mu)\in \cC^{-}_{4}.
\end{equation}

Observe that the function
$$
\delta(z):=1+\mu  a_{44}(z), \quad \mu<0,
$$
with $a_{44}(z)$ given in \eqref{def:functions} is continuous and
strictly decreasing in $z\in(-\infty,0)$. From the explicit
definition of $a_{44}$ in \eqref{def:functions} it follows that
$$
\lim\limits_{z\rightarrow -\infty}\delta(z)=1.
$$
At the same time, the asymptotic expression for $a_{44}(z)$ in
Proposition \ref{prop:asymp_functions} implies that
$$
\lim\limits_{z\nearrow 0}\delta(z)=-\infty .
$$
Therefore the function $\delta(z)= 1+\mu  a_{44}(z)$  has exactly
one zero $z_{11}$ within the half-axis $(-\infty,0)$ and, thus,
\begin{equation}\label{ineq1}
\begin{array}{cl}
& 1+\mu  a_{44}(z)>0 \quad \text{if} \quad z< z_{11},\\
& 1+\mu  a_{44}(z)<0  \quad \text{if} \quad z_{11}< z<0.
\end{array}
\end{equation}
Notice that the equality $1+\mu  a_{44}(z_{11})=0$ implies that
\begin{align}
\nonumber \Delta^\mathrm{ees}_{00\mu}(z_{11})&=\left(1+\mu
a_{33}(z_{11})\right)\left(1+\mu  a_{44}(z_{11})\right)-\mu^2  (a_{34}(z_{11}))^2\\
\label{ineq2} &=-\mu^2 (a_{34}(z_{11}))^2< 0.
\end{align}
Meanwhile, for $\mu<\mu_0^-\,\,(<0)$ we have
\begin{align}\label{Qmm}
Q^{-}(0,0,\mu)&=\tfrac{6(16-5\pi)}{\pi}\mu\left[\mu+
\tfrac{2\pi}{3(16-5\pi)}\right]>0.
\end{align}
The inequalities \eqref{ineq2},  \eqref{Qmm} and  Lemma
\ref{lem:asymp_det}  yield that
$$
\lim\limits_{z\rightarrow -\infty}\Delta^\mathrm{ees}_{00\mu}(z)=1,
\quad  \Delta^\mathrm{ees}_{00\mu}(z_{11})< 0  \quad \text{and}
\quad \lim\limits_{z\nearrow
0}\Delta^\mathrm{ees}_{00\mu}(z)=+\infty .
$$
This means that there exist real numbers $z_{21}$ and $z_{22}$ such
that
\begin{equation}\label{loc_roots}
z_{21}< z_{11}< z_{22}<0
\end{equation}
and
\begin{align}\label{root2}
\Delta^\mathrm{ees}_{00\mu}(z_{21})=\Delta^\mathrm{ees}_{00\mu}(z_{22})=0.
\end{align}

The equality \eqref{repr1} implies that  the operator
$V^\mathrm{ees}_{00\mu}$ has rank at most two. Therefore, by the
minimax principle (see \cite{RSimon:IV}, page 85) the operator
$H^\mathrm{ees}_{00\mu}(0)$ has at most two eigenvalues below zero.
By the first statement in Lemma \ref{lem:det_zeros_vs_eigen}, the
isolated eigenvalues of the operator $H^\mathrm{ees}_{00\mu}(0)$
coincide with zeros of the function
$\Delta^\mathrm{ees}_{00\mu}(z)$. This yields that the function
$\Delta^\mathrm{ees}_{00\mu}(z)$ has at most two zeros in
$\R\setminus [0,8]$. By \eqref{root2} the function
$\Delta^\mathrm{ees}_{00\mu}(z)$ has exactly two zeros ($z_{21}$ and
$z_{22}$), lying below the essential spectrum. Therefore
\begin{equation}\label{2-ineq}
\begin{array}{cl}
&\Delta^\mathrm{ees}_{00\mu}(z)> 0 \quad \text{if} \quad z<  z_{21};\\
&\Delta^\mathrm{ees}_{00\mu}(z)< 0 \quad \text{if} \quad z_{21}< z< z_{22};\\
&\Delta^\mathrm{ees}_{00\mu}(z)> 0 \quad \text{if} \quad z_{22}<
z<0.
\end{array}
\end{equation}

The equalities \eqref{root2} yield the following relations

\begin{equation}\label{ineq4}
\begin{array}{cl}
 (1+\mu  a_{33}(z_{21}))(1+\mu  a_{44}(z_{21}))=&\mu^2  (a_{34}(z_{21}))^2> 0 , \\
(1+\mu  a_{33}(z_{22}))(1+\mu  a_{44}(z_{22}))=&\mu^2
(a_{34}(z_{22}))^2> 0.
\end{array}
\end{equation}
By  \eqref{ineq1} and  \eqref{loc_roots} we have $1+\mu
a_{44}(z_{21})>0$ and $1+\mu a_{44}(z_{22})<0$. Therefore
\eqref{ineq4} yields
\begin{equation}\label{ineq5}
\begin{array}{cl}
&1+\mu  a_{33}(z_{21})> 0 \quad \text{and} \quad
1+\mu  a_{44}(z_{21})> 0, \\
&1+\mu  a_{33}(z_{22})< 0  \quad \text{and} \quad 1+\mu
a_{44}(z_{22})< 0.
\end{array}
\end{equation}
Moreover, in view of the positivity of $a_{34}(z_{21})$ and
$a_{34}(z_{22})$ (see Proposition \ref{prop:asymp_functions}), by
\eqref{ineq4} and \eqref{ineq5} one concludes that

\begin{equation}\label{eq1}
\sqrt{ 1+\mu  a_{33}(z_{21})}\sqrt{ 1+\mu  a_{44}(z_{21})}=-\mu
a_{34}(z_{21})
\end{equation}
and
\begin{equation}\label{eq2}
\sqrt{ -[1+\mu  a_{33}(z_{22})]}\sqrt{ -[1+\mu a_{44}(z_{22})]}=-\mu
a_{34}(z_{22}).
\end{equation}
Now, by using the explicit representation  \eqref{determinant-ees}
for $\Delta^\mathrm{ees}_{0 \lambda \mu} (z)$, the positivity of
$a_{ij}(z)$ $(i,j=1,2,3,4)$ for $z\in(-\infty,0)$ (see Proposition
\ref{prop:asymp_functions}), the inequalities $\mu,\,\lambda<0$  and
equality \eqref{eq1} one arrives at the following conclusions:
\begin{align}
\nonumber \Delta^\mathrm{ees}_{0 \lambda \mu }(z_{21})=& -\lambda\mu
\big[\sqrt{ 1+\mu  a_{44}(z_{21})}a_{23}(z_{21})\\\nonumber &+\sqrt{
1+\mu
a_{33}(z_{21})}a_{24}(z_{21})\big]^2\\
\label{ineq6}
< &0,\\
\nonumber \Delta^\mathrm{ees}_{0 \lambda \mu }(z_{22})=& \lambda\mu
\big[\sqrt{ -[1+\mu  a_{44}(z_{22})]}a_{23}(z_{22})\\\nonumber
&+\sqrt{ -[1+2\mu a_{33}(z_{22})]}a_{24}(z_{22})\big]^2
\\
> & 0.
\label{D0lm_in32>0}
\end{align}

For all $(\gamma,\lambda,\mu)\in  \cC^{-}_{4}$ the inequalities
$\lambda Q^{-}_{0}(\mu)+Q^{-}_{1}(\mu)<0$ in \eqref{ineq_glm1} and
$Q^{-}_0(\mu)>0$ in \eqref{Q1Q0geo} yield that
\begin{align*}
Q^{-}(0 ,\lambda ,\mu)=&4( \lambda Q^{-}_{0}(\mu)+Q^{-}_{1}(\mu)) -
8  Q^{-}_0(\mu) <0.
\end{align*}
By  Lemma \ref{lem:asymp_det} this inequality gives
\begin{equation}\label{bound_Delta}
\lim\limits_{z\rightarrow -\infty}\Delta^\mathrm{ees}_{0 \lambda \mu
}(z)=1
 \quad \text{and} \quad  \lim\limits_{z\nearrow 0}
 \Delta^\mathrm{ees}_{0 \lambda \mu }(z)=-\infty.
\end{equation}
The relations \eqref{ineq6},   \eqref{D0lm_in32>0} and
\eqref{bound_Delta}  imply there existence of real numbers $z_{31}$,
$z_{32}$  and $z_{33}$ such that
\begin{equation}\label{loc_roots2}
 z_{31}< z_{21}< z_{32}< z_{22}< z_{33}<0
\end{equation}
and
\begin{equation}\label{delta_roots3}
\Delta^\mathrm{ees}_{0 \lambda \mu }(z_{31})= \Delta^\mathrm{ees}_{0
\lambda \mu }(z_{32})= \Delta^\mathrm{ees}_{0 \lambda \mu
}(z_{33})=0.
\end{equation}

The equality \eqref{repr1} implies that  the operator
$V^\mathrm{ees}_{0\lambda\mu}$ has rank at most three. Therefore,
again by the minimax principle (see \cite{RSimon:IV}, page 85), the
operator $H^\mathrm{ees}_{0\lambda\mu}(0)$ has at most three
eigenvalues outside the essential spectrum. The relation between the
eigenvalues of the operator $H^\mathrm{ees}_{0\lambda\mu}(0)$ and
zeros of the function $\Delta^\mathrm{ees}_{0\lambda\mu}(z)$ (see
Lemma \ref{lem:det_zeros_vs_eigen}) yields that the function
$\Delta^\mathrm{ees}_{0\lambda\mu}(z)$ has at most three zeros in
$\R\setminus [0,8]$. Hence, the function
$\Delta^\mathrm{ees}_{0\lambda\mu}(z)$ has exactly three zeros (
$z_{31}$, $z_{32}$ and $z_{33}$), all of them lying below the
essential spectrum.

Let
\begin{equation}\label{def:Aij}
\begin{array}{cl}
&A_{12}(z):=\det\begin{pmatrix}
\lambda  a_{23}(z)& \mu  a_{34}(z)\\
\lambda  a_{24}(z)&  1+\mu  a_{44}(z)
\end{pmatrix}   , \quad  \quad
A_{21}(z):=\det\begin{pmatrix}
\mu  a_{23}(z)&\mu  a_{24}(z) \\
 \mu  a_{34}(z)& 1+\mu  a_{44}(z)
\end{pmatrix}  , \\
&A_{13}(z):=\det\begin{pmatrix}
\lambda  a_{23}(z)& 1+\mu a_{33}(z)\\
\lambda  a_{24}(z)& \mu  a_{34}(z)
\end{pmatrix}   , \quad  \quad
A_{31}(z):=\det\begin{pmatrix}
\mu  a_{23}(z)&\mu  a_{24}(z) \\
 1+\mu a_{33}(z)&\mu  c_{34}(z)
\end{pmatrix}  , \\
&A_{23}(z):=\det\begin{pmatrix}
1+\lambda  a_{22}(z)&\mu  a_{23}(z) \\
\lambda a_{24}(z)& \mu  a_{34}(z)
\end{pmatrix}   , \quad  \quad
A_{32}(z):=\det\begin{pmatrix}
1+\lambda  a_{22}(z)&\mu  a_{24}(z) \\
\lambda  a_{23}(z) &\mu  a_{34}(z)
\end{pmatrix}
\end{array}
\end{equation}
and
\begin{equation}\label{def:Aii}
\begin{array}{cl}
A_{11}(z):=&\det\begin{pmatrix}
1+\mu a_{33}(z)&\mu  a_{34}(z)\\
\mu  a_{34}(z)& 1+\mu  a_{44}(z)
\end{pmatrix} , \\
A_{22}(z):=&\det\begin{pmatrix}
1+\lambda  a_{22}(z)&\mu  a_{24}(z) \\
\lambda    a_{24}(z)& 1+\mu  a_{44}(z)
\end{pmatrix},\\
A_{33}(z):=&\det\begin{pmatrix}
1+\lambda  a_{22}(z)&\mu  a_{23}(z) \\
\lambda  a_{23}(z)& 1+\mu a_{33}(z)
\end{pmatrix}.
\end{array}
\end{equation}
The definitions \eqref{def:Aij} imply that
\begin{equation}\label{B_iB_j}
\mu  A_{12}(z)=\lambda  A_{21}(z) , \quad \mu  A_{13}(z)=\lambda
A_{31}(z), \quad   A_{23}(z)=  A_{32}(z)
\end{equation}
By combining \eqref{def:Aij} and \eqref{def:Aii} one derives
\begin{equation}\label{AB_1}
\begin{array}{cl}
&A_{11}(z)A_{33}(z)-A_{13}(z)A_{31}(z)=\Delta^\mathrm{ees}_{0 \lambda \mu }(z)\cdot [1+\mu a_{33}(z)], \\
&A_{11}(z)A_{22}(z)-A_{12}(z)A_{21}(z)=\Delta^\mathrm{ees}_{0 \lambda \mu }(z)\cdot [1+\mu  a_{44}(z)], \\
&A_{22}(z)A_{33}(z)-A_{23}(z)A_{32}(z)=\Delta^\mathrm{ees}_{0
\lambda \mu }(z)\cdot [1+\lambda a_{22}(z)]
\end{array}
\end{equation}
and
\begin{equation}\label{new_ineq_1}
A_{11}(z)A_{22}(z)A_{33}(z)-A_{12}(z) A_{23}(z)
A_{31}(z)=\Delta^\mathrm{ees}_{0 \lambda \mu }(z)R_{\lambda\mu}(z),
\end{equation}
where
$$R_{\lambda\mu}(z)= [1+\lambda a_{22}(z)]\cdot [1+\mu a_{33}(z)] \cdot [1+\mu  a_{44}(z)]-\lambda\mu^2 a_{13}(z)a_{14}(z)a_{24}(z).
$$
Then the equality $\Delta^\mathrm{ees}_{0 \lambda \mu }(z_{31})=0$
and identities  \eqref{AB_1} imply that
\begin{equation}\label{AB_2}
\begin{array}{cl}
&A_{11}(z_{31})A_{33}(z_{31})=A_{13}(z_{31})A_{31}(z_{31}), \\
&A_{11}(z_{31})A_{22}(z_{31})=A_{12}(z_{31})A_{21}(z_{31}), \\
&A_{22}(z_{31})A_{33}(z_{31})=A_{23}(z_{31})A_{32}(z_{31})
\end{array}
\end{equation}
Similarly, by \eqref{new_ineq_1} from  $\Delta^\mathrm{ees}_{0
\lambda \mu }(z_{31})=0$ it follows that
\begin{equation}\label{new_ineq_2}
A_{11}(z_{31})A_{22}(z_{31})A_{33}(z_{31})=A_{12}(z_{31})
A_{23}(z_{31}) A_{31}(z_{31}).
\end{equation}

We note that the functions $1+\mu  a_{33}(z)$ and $1+\mu  a_{44}(z)$
are strictly decreasing in $(-\infty,0)$. Then the relations
$z_{31}<z_{21}$ and  \eqref{ineq5} yield
\begin{equation}\label{niq1}
1+\mu  a_{33}(z_{31})>1+\mu  a_{33}(z_{21})>0  \quad \text{and}
\quad   1+\mu  a_{44}(z_{31})>1+\mu  a_{44}(z_{21})>0 .
\end{equation}
Since $\lambda,\mu<0$ and the functions $a_{23}$, $a_{24}$, $a_{34}$
are positive-valued (see Proposition \ref{prop:asymp_functions}),
from \eqref{niq1} it follows that
\begin{eqnarray}\label{ineqB12}
\nonumber
&A_{12}(z_{31})=\lambda  a_{23}(z_{31})\big[1+\mu  a_{44}(z_{31})\big]-\lambda\mu a_{24}(z_{31})a_{34}(z_{31})<0, \\
&A_{13}(z_{31})=\lambda\mu  a_{23}(z_{31})a_{34}(z_{31})-\lambda
a_{24}(z_{31})\big[ 1+\mu a_{33}(z_{31})\big]>0.
\end{eqnarray}
The relations   \eqref{B_iB_j}, \eqref{AB_2} and  \eqref{ineqB12}
imply
\begin{equation}\label{AB_3}
\begin{array}{cl}
&A_{11}(z_{31})A_{22}(z_{31})=A_{12}(z_{31})A_{21}(z_{31})=\frac{\mu
A^2_{12}(z_{31})}{\lambda}>0, \\
&A_{11}(z_{31})A_{33}(z_{31})=A_{13}(z_{31})A_{31}(z_{31})=\frac{\lambda
A^2_{31}(z_{31})}{\mu}> 0,
\end{array}
\end{equation}
which means that the numbers $A_{11}(z_{31}),\, A_{22}(z_{31}),\,
A_{33}(z_{31})$ have the same signs. Meanwhile, the relations
\eqref{2-ineq} and \eqref{loc_roots2}  yield
$A_{11}(z_{31})=\Delta^\mathrm{ees}_{00\mu}(z_{31})> 0$. Then by
\eqref{AB_3} one then concludes that
\begin{equation}\label{pos_A123}
A_{22}(z_{31})> 0 \text{\,\, and \,\,}  A_{33}(z_{31})> 0.
\end{equation}
Similarly, from \eqref{new_ineq_2}, \eqref{ineqB12} and
\eqref{pos_A123} it follows that
\begin{equation}
\label{A23l0} A_{23}(z_{31})<0.
\end{equation}
In its turn, \eqref{B_iB_j}, \eqref{ineqB12} and \eqref{A23l0} imply
that
\begin{equation}\label{signs_Bij}
\begin{array}{ll}
A_{12}(z_{31})<0, & A_{21}(z_{31})<0, \\
A_{23}(z_{31})<0, & A_{32}(z_{31})<0, \\
A_{31}(z_{31})>0, & A_{13}(z_{31})>0.
\end{array}
\end{equation}
Meanwhile, the equalities \eqref{B_iB_j} and  \eqref{AB_2} give that
\begin{equation}\label{AB_7}
\begin{array}{ll}
A_{12}^2(z_{31})=\tfrac{\lambda A_{11}(z_{31})A_{22}(z_{31})}{\mu},
&\quad A_{21}^2(z_{31})=\tfrac{\mu
A_{11}(z_{31})A_{22}(z_{31})}{\lambda}
\\
A_{23}^2(z_{31})= A_{22}(z_{31})A_{33}(z_{31}),  &\quad
A_{32}^2(z_{31})= A_{22}(z_{31})A_{33}(z_{31})
\\
A_{31}^2(z_{31})=\tfrac{\mu A_{11}(z_{31})A_{33}(z_{31})}{\lambda},
&\quad A_{13}^2(z_{31})=\tfrac{\lambda
A_{11}(z_{31})A_{33}(z_{31})}{\mu}.
\end{array}
\end{equation}
Taking into account the signs \eqref{signs_Bij} of the numbers
$A_{ij}(z_{31})$, $i,j=1,2,3,$ $i\neq j$, by \eqref{AB_7}  one
arrives at
\begin{equation}\label{AB_6}
\begin{array}{ll}
A_{12}(z_{31})= -\sqrt{\tfrac{\lambda
A_{11}(z_{31})A_{22}(z_{31})}{\mu}}  , \,\,     &\quad
A_{21}(z_{31})=-\sqrt{\tfrac{\mu
A_{11}(z_{31})A_{22}(z_{31})}{\lambda}}
,\\
A_{23}(z_{31})=-\sqrt{ A_{22}(z_{31})A_{33}(z_{31})}, \,\,  &\quad
A_{32}(z_{31})= -\sqrt{ A_{22}(z_{31})A_{33}(z_{31})}
, \\
A_{31}(z_{31})=\sqrt{\tfrac{\mu
A_{11}(z_{31})A_{33}(z_{31})}{\lambda}}, \,\, &\quad A_{13}(z_{31})=
\sqrt{\tfrac{\lambda A_{11}(z_{31})A_{33}(z_{31})}{\mu}}.
\end{array}
\end{equation}

Now let us expand the determinant \eqref{determinant-ees} along its
first column and obtain:
\begin{align*}
\Delta^\mathrm{ees}_{\gamma \lambda \mu }(z)
=& [1+2\gamma  a_{11}(z)]\Delta^\mathrm{ees}_{0 \lambda \mu }(z)\\
&-2\gamma  a_{12}(z)
\begin{vmatrix}
\lambda  a_{12}(z)& \mu  a_{13}(z)& \mu  a_{14}(z) \\
\lambda  a_{23}(z)& 1+\mu  a_{33}(z)&\mu  a_{34}(z) \\
\lambda  a_{24}(z)& \mu  a_{34}(z)&1+\mu  a_{44}(z)
\end{vmatrix}\\
&+2\gamma a_{13}(z)\begin{vmatrix}
\lambda  a_{12}(z)&\mu a_{13}(z)&\mu  a_{14}(z)\\
1+\lambda  a_{22}(z)& \mu  a_{23}(z)& \mu  a_{24}(z) \\
\lambda  a_{24}(z)& \mu  a_{34}(z)&1+\mu  a_{44}(z)
\end{vmatrix}\\
&-2\gamma a_{14}(z)  \begin{vmatrix}
\lambda  a_{12}(z)&\mu a_{13}(z)&\mu  a_{14}(z)\\
1+\lambda  a_{22}(z)& \mu  a_{23}(z)& \mu  a_{24}(z) \\
\lambda  a_{23}(z)& 1+\mu  a_{33}(z)&\mu  a_{34}(z)
\end{vmatrix}\\
=&[1+2\gamma  a_{11}(z)]\Delta^\mathrm{ees}_{0 \lambda \mu }(z)
-2\gamma\lambda a^2_{12}(z)A_{11}(z)+2\gamma\mu a_{12}(z)a_{13}(z)A_{12}(z)\\
&-2\gamma\mu a_{12}(z)a_{14}(z)A_{13}(z)+2\gamma\lambda
a_{13}(z)a_{12}(z)A_{21}(z)
-2\gamma\mu a^2_{13}(z)A_{22}(z)\\
&+2\gamma\mu a_{13}(z)a_{14}(z)A_{23}(z)-2\gamma\lambda a_{14}(z)a_{12}(z)A_{31}(z)\\
&+2\gamma\mu a_{14}(z)a_{13}(z)A_{32}(z)-2\gamma\mu
a^2_{14}(z)A_{33}(z). \nonumber
\end{align*}
By taking into account the equalities
$\Delta^\mathrm{ees}_{0\lambda\mu}(z_{31}) = 0$ and \eqref{AB_6}, we
derive from this representation of
$\Delta^\mathrm{ees}_\gamma\lambda\mu(z)$ that
\begin{align}\label{D_z_31<0}
\Delta^\mathrm{ees}_{\gamma \lambda \mu }(z_{31})=&-2\gamma\lambda
a^2_{12}(z_{31})A_{11}(z_{31})\\ \nonumber & -2\gamma\mu
a^2_{13}(z_{31})A_{22}(z_{31}) -2\gamma\mu
a^2_{14}(z_{31})A_{33}(z_{31}) \\  \nonumber &+2\gamma\mu
a_{12}(z_{31})a_{13}(z_{31})A_{12}(z_{31})-2\gamma\mu
a_{12}(z_{31})a_{14}(z_{31})A_{13}(z_{31})\\ \nonumber
&+2\gamma\lambda a_{12}(z_{31})a_{13}(z_{31})A_{21}(z_{31})
+2\gamma\mu a_{13}(z_{31})a_{14}(z_{31})A_{23}(z_{31})\\\nonumber
&-2\gamma\lambda a_{12}(z)a_{14}(z_{31})A_{31}(z_{31}) +2\gamma\mu
a_{13}(z)a_{14}(z_{31})A_{32}(z_{31})
\\  \nonumber &-2\gamma\lambda
a^2_{12}(z_{31})A_{11}(z_{31})-2\gamma\mu
a^2_{13}(z_{31})A_{22}(z_{31})\\  \nonumber &-2\gamma\mu
a^2_{14}(z_{31})A_{33}(z_{31}) -4\gamma\sqrt{\lambda\mu
A_{11}(z_{31}) A_{33}(z_{31})} a_{12}(z_{31})a_{14}(z_{31})\\
\nonumber &-4\gamma\mu a_{13}(z_{31})a_{14}(z_{31})\sqrt{
A_{22}(z_{31}) A_{33}(z_{31})}\\ \nonumber
&-4\gamma \sqrt{\lambda\mu A_{11}(z_{31}) A_{22}(z_{31})} a_{12}(z_{31})a_{13}(z_{31})\\
=-2\Big(a_{12} & (z_{31})\sqrt{\gamma\lambda A_{11}(z_{31})}+
a_{13}(z_{31})\sqrt{\gamma\mu A_{22}(z_{31})}+
a_{14}(z_{31})\sqrt{\gamma\mu A_{33}(z_{31})}\Big)^2. \nonumber
\end{align}
In view of \eqref{pos_A123} and strict positivity of the numbers
$a_{12}(z_{31})$, $a_{13}(z_{31})$, $a_{14}(z_{31})$, the last
equality immediately yields
\begin{equation}\label{D4in1}
\Delta^\mathrm{ees}_{\gamma \lambda \mu }(z_{31})<0.
\end{equation}

Combining the assertions \eqref{2-ineq} with inequalities   $
z_{21}< z_{32}< z_{22} $  and   $ z_{22}< z_{33} $ one concludes
that
$$
A_{11}(z_{32})< 0, \quad  A_{22}(z_{32})<0, \quad  A_{33}(z_{32})<0
$$
and
$$
A_{11}(z_{33})> 0, \quad  A_{22}(z_{33})> 0, \quad  A_{33}(z_{33})>
0.
$$
respectively. In a way similar to the one we passed to show
\eqref{D4in1}, one then proves that
\begin{align*}
\Delta^\mathrm{ees}_{\gamma \lambda \mu }(z_{32})=&
\,2\Big(a_{12}(z_{32})\sqrt{-\gamma\lambda
A_{11}(z_{32})}\\
&+a_{13}(z_{32})\sqrt{-\gamma\mu
A_{22}(z_{32})}+a_{14}(z_{32})\sqrt{-\gamma\mu
A_{33}(z_{32})}\Big)^2\\
>&\, 0
\end{align*}
and
\begin{align*}
\Delta^\mathrm{ees}_{\gamma \lambda \mu
}(z_{33})=&\,-2\Big(a_{12}(z_{33})\sqrt{\gamma\lambda
A_{11}(z_{33})}\\
&+a_{13}(z_{33})\sqrt{\gamma\mu
A_{22}(z_{33})}+a_{14}(z_{33})\sqrt{\gamma\mu
A_{33}(z_{33})}\Big)^2\\
<&\, 0.
\end{align*}

According to \eqref{ineq_glm1} we have
$Q^{-}(\gamma,\lambda,\mu)>0$. Then by Lemma \ref{lem:asymp_det} we
have
$$
\lim\limits_{z\rightarrow -\infty}\Delta^\mathrm{ees}_{\gamma
\lambda \mu }(z)=1
 \quad \text{and} \quad
 \lim\limits_{z\nearrow 0}\Delta^\mathrm{ees}_{\gamma \lambda \mu }(z)=+\infty.
$$
Therefore
\begin{align*}
&\lim\limits_{z\rightarrow -\infty}\Delta^\mathrm{ees}_{\gamma \lambda \mu }(z)=1  , \quad  \Delta^\mathrm{ees}_{\gamma \lambda \mu }(z_{31})< 0, \quad \Delta^\mathrm{ees}_{\gamma \lambda \mu }(z_{32})> 0,\\
& \quad \quad  \quad \Delta^\mathrm{ees}_{\gamma \lambda \mu
}(z_{33})< 0, \quad \lim\limits_{z\nearrow
0}\Delta^\mathrm{ees}_{\gamma \lambda \mu }(z)=+\infty.
\end{align*}
The above relations yield the existence of four zeros  $z_{41}$,
$z_{42}$, $z_{43}$ and $z_{44}$ of the function
$\Delta^\mathrm{ees}_{\gamma \lambda \mu }(z)$, satisfying the
following inequalities
\begin{equation}\label{loc_roots3}
z_{41}< z_{31}< z_{42}< z_{32}< z_{43}< z_{33}< z_{44}<0.
\end{equation}
Moreover, by  Lemma \ref{lem:det_zeros_vs_eigen} the function
$\Delta^\mathrm{ees}_{\gamma \lambda \mu }(z)$ has no other zeros
lying in $(-\infty,0)$. Thus, the same lemma also implies that the
operator $H^\mathrm{ees}_{\gamma \lambda \mu }(0)$ has exactly four
different bound states with the energies lying below the essential
spectrum. \qed
\medskip

\noindent\textit{Proof of Theorem \ref{teo:bound_K} }.  Let
$\gamma,\lambda,\mu\in \mathbb{R}$, $ m, n=0,1,2,3,4,5,6,7$ and $(
\gamma,\lambda,\mu)\in \mathbb{G}_{mn}$. Corollary
\ref{teo:eig_of_H(0)} states that  the operator $H_{
\gamma\lambda\mu}(0)$ has exactly $m$ bound states with energies
lying  below the essential spectrum   and exactly $n$  bound states
with energies lying above the essential spectrum. In turn,  Theorem
\ref{teo:disc_Kvs0} guarantees that the operator
$H_{\gamma\lambda\mu}(K)$ has at least $m$ bound states with
energies below the essential spectrum and at least $n$ bound states
with energies above the essential spectrum.  Moreover, the equality
\eqref{moment_poten} implies that  the operator
$V_{\gamma\lambda\mu}$ has rank at most seven. Therefore, by the
minimax principle (see \cite{RSimon:IV}, page 85) the operator
$H_{\gamma\lambda\mu}(K)$ has at most seven eigenvalues. This yields
that, for $( \gamma,\lambda,\mu)\in \mathbb{G}_{mn} $ and
$m+n=7$, the operator $H_{\gamma\lambda\mu}(K)$ has exactly $m$
bound states with energies below the essential spectrum and exactly
$n$ bound states with energies above the essential spectrum. This
completes the proof.\hfill $\square$

\end{document}